\newcommand{\un}{\mathsf{u}}
\newcommand{\sen}{\mathsf{s}}
\newcommand{\bsen}{\mathsf{bs}}
\newcommand{\depth}{\mathsf{D}}
\newcommand{\cc}{\mathsf{cc}}
\newcommand{\res}{\mathsf{Res}}
\newcommand{\fmux}{\mathsf{MUX}}
\newcommand{\fplu}{\mathsf{PLU}}
\newcommand{\fmaj}{\mathsf{MAJ}}
\newcommand{\mmux}{\mathsf{mMUX}}
\newcommand{\fand}{\mathsf{AND}}
\newcommand{\OR}{\mathsf{OR}}
\newcommand{\size}{\mathsf{size}}
\newcommand{\rand}{\mathsf{R}}
\newcommand{\qua}{\mathsf{Q}}
\newcommand{\ket}[1]{|#1\rangle}
\newcommand{\bnot}[1]{\overline{#1}}
\newcommand{\zo}{\{0, 1\}}
\newcommand{\zon}{\zo^n}
\newcommand{\zuo}{\{0, \un, 1\}}
\newcommand{\zuon}{\zuo^n}
\newcommand{\cbra}[1]{\left\{#1\right\}}
\newcommand{\uf}{\widetilde{f}}
\newcommand{\A}{\mathcal{A}}
\newtheorem{theorem}{Theorem}[section]
\newtheorem{definition}[theorem]{Definition}
\newtheorem{observation}[theorem]{Observation}
\newtheorem{remark}[theorem]{Remark}
\newtheorem{lemma}[theorem]{Lemma}
\newtheorem{claim}[theorem]{Claim}
\begin{document}

\title{Sensitivity and Query Complexity under Uncertainty}
\author{
Deepu Benson\thanks{Institute of Science \& Tech., Chinmaya Vishwa Vidyapeeth, Kerala, India. Email: {\tt bensondeepu@gmail.com}. Part of the work was done while the author was a postdoctoral fellow at Indian Institute of Technology Gandhinagar.} \and 
Balagopal Komarath\thanks{Department of Computer Science and Engineering, IIT Gandhinagar, Gujarat, India, Email:~{\tt bkomarath@rbgo.in}} \and 
Nikhil Mande \thanks{Department of Computer Science, University of Liverpool, UK, Email:~{\tt nikhil.mande@liverpool.ac.uk}} \and
Sai Soumya Nalli \thanks{Microsoft Research, Bangalore, India, Email:~{\tt saisoumya7208@gmail.com}. Part of the work was done while the author was an undergraduate student at Indian Institute of Technology Madras.} \and 
Jayalal Sarma \thanks{Department of Computer Science \& Engineering, IIT Madras, Chennai, India, Email:{\tt jayalal@cse.iitm.ac.in}} \and 
Karteek Sreenivasaiah \thanks{Department of Computer Science, University of Liverpool, UK, Email: {\tt karteek.sreenivasaiah@liverpool.ac.uk}}
}

\maketitle

\begin{abstract}
In this paper, we study the query complexity of Boolean functions in the presence of uncertainty, motivated by parallel computation with an unlimited number of processors where inputs are allowed to be unknown. We allow each query to produce three results: zero, one, or unknown. The output could also be: zero, one, or unknown, with the constraint that we should output ``unknown'' only when we cannot determine the answer from the revealed input bits. Such an extension of a Boolean function is called its \emph{hazard-free} extension.

\begin{itemize}
    \item  We prove an analogue of Huang's celebrated sensitivity theorem [Annals of Mathematics, 2019] in our model of query complexity with uncertainty.
    \item We show that the deterministic query complexity of the hazard-free extension of a Boolean function is at most quadratic in its randomized query complexity and quartic in its quantum query complexity, improving upon the best-known bounds in the Boolean world.
    \item We exhibit an exponential gap between the smallest depth (size) of decision trees computing a Boolean function, and those computing its hazard-free extension.
    \item We present general methods to convert decision trees for Boolean functions to those for their hazard-free counterparts, and show optimality of this construction. We also parameterize this result by the maximum number of unknown values in the input.
    \item We show lower bounds on size complexity of decision trees for hazard-free extensions of Boolean functions in terms of the number of prime implicants and prime implicates of the underlying Boolean function.
\end{itemize}

\end{abstract}

\clearpage
\pagenumbering{arabic}

\tableofcontents

\section{Introduction}
  
A single-tape Turing Machine needs at least $n$ steps in the worst case to compute any $n$-bit function that depends on all of its inputs. One way to achieve faster computation is to use multiple processors in parallel. Parallel computation is modeled using \emph{Parallel Random Access Machines} (PRAM), originally defined by Fortune and Wyllie \cite{FW78}, in which there are multiple processors and multiple memory cells shared among all processors. In a single time step, each processor can read a fixed number of memory cells, execute one instruction, and write to a fixed number of memory cells. In the real world, access to shared memory has to be mediated using some mechanism to avoid read-write conflicts. A popular mechanism to achieve this is called \emph{Concurrent-Read Exclusive-Write} (CREW) in which concurrent reads of the same memory cell are allowed, but each cell can only be written to by at most one processor in a single time step. An algorithm that violates this restriction is invalid.

A fundamental problem in such a model of computation is to determine the number of processors and the amount of time required for computing Boolean functions. A Boolean function $f: \{0, 1\}^n \mapsto \{0, 1\}$ is predetermined\footnote{The number of input bits is fixed, making this model non-uniform.} and the input bits are presented in shared memory locations. The processors have to write the output bit $f(x)$ to a designated shared memory location. For example, consider the Boolean disjunction (logical OR) of all input bits which outputs 1 if and only if there is at least one $1$ among the inputs. There is a simple divide-and-conquer algorithm to compute the OR of $n$ input bits in $O(\log n)$ time using $n$ processors that exploits the fact that OR is associative and distributive. This is essentially a CREW-PRAM algorithm to compute the OR of $n$ bits in $O(\log(n))$-time. Note that each of the processors in the above algorithm computes a trivial function at each step namely the OR of two bits. Can we do better if we are allowed to use more complex computations at each step?

Cook and Dwork \cite{CD82}, and Cook, Dwork, and Reischuk \cite{CDR86} answer this question by showing that any CREW-PRAM algorithm that computes the logical OR of $n$ bits needs $\Omega(\log(n))$-time, irrespective of the functions computed by the processors at each step. The reason their lower bound is independent of the functions allowed at each processor is because their lower bound really applies to the number \emph{accesses} made to the shared memory. If we only care about analyzing the number of memory accesses by an algorithm running on an all-powerful processor, a neat way to think of the computation at each processor is to model it as a two-player interactive game: a \emph{querier} who is all-powerful and an \emph{oracle}. They want to compute a Boolean function $f : \{0, 1\}^n \to  \{0, 1\}$ known beforehand, on an input $x\in \zo^n$. However, the input $x$ is only known to the oracle. The only interaction allowed is where the querier asks a query $i\in [n]$, and the oracle can reply with $x_i$. The \emph{query complexity} of $f$ is defined as the the maximum number of queries required to find the answer, where the maximum is taken over all $x$. The querier's aim is to minimize the number of queries required to determine the value of the function in the worst-case. This is exactly the computation model studied in an exciting area of computer science simply called `query complexity'.

The technique used in \cite{CD82} and \cite{CDR86} involves defining a measure that is equivalent to what is now commonly known as \emph{sensitivity} of a Boolean function, denoted $\sen(f)$. More precisely, they show that the time needed by CREW-PRAM algorithms, irrespective of the instruction set, to compute a Boolean function is asymptotically lower bounded by the logarithm of the sensitivity of the function.

The question whether a function can be computed in time at most the logarithm of the function's sensitivity, thereby characterizing the CREW-PRAM time complexity in terms of sensitivity, remained open. Nisan \cite{Nisan91} introduced a generalization of sensitivity called \emph{block-sensitivity}, denoted $\bsen(f)$, which is lower bounded by sensitivity, and proved that CREW-PRAM time complexity is asymptotically the same as the logarithm of the block sensitivity of a function. Nisan further related \emph{decision tree depth complexity}, denoted $\depth(f)$ to block-sensitivity providing another characterization for CREW-PRAM time complexity as the logarithm of decision tree depth. However, the block-sensitivity of a function can be potentially much higher than its sensitivity, as shown by Rubinstein~\cite{Rubinstein} with an explicit function that witnesses a quadratic gap between these two measures. After research spanning almost three decades which saw extensive study of relationships between the above parameters and various other parameters (see, for example,~\cite{ABKRT21} and the references therein), such as certificate complexity (denoted $\cc(f)$) and degree (denoted $\mathsf{deg}(f)$), Huang \cite{Hua19}, in a breakthrough result, proved that sensitivity and block-sensitivity are polynomially related. This is the celebrated sensitivity theorem:
\begin{theorem}[Sensitivity theorem for Boolean functions \cite{Hua19}]
    For all Boolean functions $f : \zo^n \to \zo$, $\bsen(f) = O(\sen(f)^4)$.
\end{theorem}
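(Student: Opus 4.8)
The plan is to route everything through the real polynomial degree $\mathsf{deg}(f)$. It is a classical fact of Nisan and Szegedy that $\bsen(f) = O(\mathsf{deg}(f)^2)$ for every Boolean $f$, so it suffices to establish $\mathsf{deg}(f) = O(\sen(f)^2)$; concretely, I will aim for the clean inequality $\sen(f) \ge \sqrt{\mathsf{deg}(f)}$, which combined with the Nisan--Szegedy bound yields $\bsen(f) = O(\sen(f)^4)$ at once.

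The second step translates the desired degree bound into a purely combinatorial statement about the Boolean hypercube, following Gotsman and Linial. Restricting $f$ to a subcube on which its degree is maximal, I may assume $\mathsf{deg}(f) = n$. Set $g(x) = f(x) \oplus x_1 \oplus \cdots \oplus x_n$. Since $\mathsf{deg}(f) = n$ means exactly that the top Fourier coefficient of $f$ is nonzero, and that coefficient equals the bias $2^{-n}(|g^{-1}(0)| - |g^{-1}(1)|)$ of $g$, one of the two colour classes of $g$, say $S$, satisfies $|S| \ge 2^{n-1}+1$. Every hypercube edge with both endpoints in $S$ joins inputs on which $g$ agrees; the parity $x_1 \oplus \cdots \oplus x_n$ flips across any edge, so $f$ must \emph{disagree} on those two inputs. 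Hence a vertex of $S$ with $d$ neighbours inside $S$ is $d$-sensitive for $f$, and it is enough to prove the combinatorial core:
\begin{quote}
every induced subgraph $H$ of the $n$-cube $Q_n$ with $|V(H)| \ge 2^{n-1}+1$ has maximum degree at least $\sqrt{n}$.
\end{quote}

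To prove the core, I would use Huang's signed adjacency matrices. Define symmetric $2^n \times 2^n$ matrices recursively by $A_1 = \left(\begin{smallmatrix} 0 & 1 \\ 1 & 0 \end{smallmatrix}\right)$ and $A_n = \left(\begin{smallmatrix} A_{n-1} & I \\ I & -A_{n-1} \end{smallmatrix}\right)$. A one-line induction using the block product shows $A_n^2 = nI$, so the spectrum of $A_n$ is $\{+\sqrt{n}, -\sqrt{n}\}$ with each eigenvalue of multiplicity $2^{n-1}$; and the entrywise absolute value of $A_n$ is exactly the adjacency matrix of $Q_n$, so $A_n$ is a $\pm 1$ signing of the cube's edges. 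Now fix $H$ on a vertex set $S$ with $|S| \ge 2^{n-1}+1$. The $(+\sqrt{n})$-eigenspace has dimension $2^{n-1}$ and the coordinate subspace $\mathbb{R}^{S}$ has dimension at least $2^{n-1}+1$; as these exceed $2^{n}$ in total, they share a nonzero vector $v$ supported on $S$. Reading the identity $A_n v = \sqrt{n}\, v$ only in the coordinates belonging to $S$ (legitimate precisely because $v$ vanishes off $S$) shows that $\sqrt{n}$ is an eigenvalue of the principal submatrix $A_H := A_n[S,S]$. Finally, the spectral radius of any symmetric matrix is at most its largest absolute row sum, and for $A_H$ --- a signing of the adjacency matrix of $H$ --- that quantity is precisely $\Delta(H)$. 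Therefore $\Delta(H) \ge \rho(A_H) \ge \sqrt{n}$, which closes the chain.

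The one genuinely clever ingredient --- and the main obstacle to reconstructing the argument from scratch --- is discovering the recursive signing $A_n$ together with the identity $A_n^2 = nI$; once this matrix is on the table, the rest is routine linear algebra: the dimension count forcing $\sqrt{n}$ to persist as an eigenvalue of the principal submatrix $A_H$, and the elementary bound $\rho(A_H) \le \Delta(H)$. A minor technical point worth handling carefully is ensuring that the shared vector $v$ is supported \emph{within} $S$ (which is why one intersects with $\mathbb{R}^{S}$ rather than a larger coordinate subspace), since that is exactly what makes the restricted eigenvalue equation valid.
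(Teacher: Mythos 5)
Your proof is correct and is precisely the argument the paper is citing: the Nisan--Szegedy bound $\bsen(f)=O(\mathsf{deg}(f)^2)$, the Gotsman--Linial reduction of the degree-versus-sensitivity question to the statement that any induced subgraph of $Q_n$ on more than $2^{n-1}$ vertices has maximum degree at least $\sqrt{n}$, and Huang's signed adjacency matrix $A_n$ with $A_n^2=nI$ together with the eigenvalue-interlacing-by-dimension-count and the bound $\rho(A_H)\le\Delta(H)$. The paper itself states this theorem only as a citation to Huang's work and offers no proof of its own, so there is nothing further to compare; all the steps you give, including the restriction to a full-degree subcube and the support condition on the shared eigenvector, are handled correctly.
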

By earlier results~\cite{Nisan91, NS94, BBC+01}, Huang's result implies that the combinatorial and analytic measures of sensitivity, block sensitivity, certificate complexity, degree, approximate degree, deterministic/randomized/quantum query complexity are all polynomially equivalent for Boolean functions. This strong connection makes the study of query complexity essentially equivalent to studying the time complexity of CREW-PRAMs. 
Thus henceforth, we shall predominantly use terminology from query complexity, and also express our results in the context of query complexity.

In this work, we initiate a systematic study to understand the effect of allowing \emph{uncertainty} among the inputs in the setting of CREW-PRAM. Allowing uncertainty in this setting is easier to understand in the equivalent query model: When an input bit is queried by the querier, the oracle can reply with ``uncertain''. If it is possible for the function value to be determined in the presence of such uncertainties, then we would like the querier to output such a value. In what follows, we make the setting more formal.

To model uncertainty we use a classic three-valued logic, namely Kleene's strong logic of indeterminacy  \cite{Kleene52}, usually denoted \emph{K3}. The logic K3 has three truth values $0$, $1$, and $\un$. The behaviour of the third value $\un$ with respect to the basic Boolean primitives --- conjunction ($\wedge$), disjunction ($\vee$), and negation ($\neg$) --- are given in Table~\ref{table:K3}.
\begin{table}
\centering
\begin{tabular}{l|lll}
$\wedge$ & 0 & $\un$ & 1 \\
\hline
0 & 0 & 0 & 0 \\
$\un$ & 0 & $\un$ & $\un$ \\
1 & 0 & $\un$ & 1
\end{tabular}
\quad
\begin{tabular}{l|lll}
$\vee$  & 0 & $\un$ & 1 \\
\hline
0 & 0 & $\un$ & 1 \\
$\un$ & $\un$ & $\un$ & 1 \\
1 & 1 & 1 & 1
\end{tabular}
\quad
\begin{tabular}{l|lll}
$\neg$ & 0 & $\un$ & 1 \\
\hline
    & 1 & $\un$ & 0
\end{tabular}

\caption{Kleene's three valued logic `K3'}
\label{table:K3}
\end{table}

The logic K3 has been used in several other contexts where there is a need to represent and work with \emph{unknowns} and hence has found several important wide-ranging applications in computer science. For instance, in relational database theory, SQL implements K3 and uses $\un$ to represent a NULL value \cite{Meyden98}. Perhaps the oldest use of K3 is in modeling \emph{hazards} that occur in real-world combinational circuits. Recently there have been a series of results studying constructions and complexity of hazard-free circuits~\cite{ikenmeyer2019complexity, IKS, Jukna21}. Here $\un$ was used to represent \emph{metastability}, or an \emph{unstable} voltage, that can resolve to $0$ or $1$ at a later time. 

One way to interpret the basic binary operations $\vee$, $\wedge$, and $\neg$ in K3 is as follows: for a bit $b \in \zo$, if the value of the function is guaranteed to be $b$ regardless of all $\zo$-settings to the $\un$-variables, then the output is $b$. Otherwise, the output is $\un$. This interpretation generalizes in a nice way to $n$-variate Boolean functions. In literature, this extension is typically called the \emph{hazard-free} extension of $f$ (see, for instance, \cite{ikenmeyer2019complexity}), and is an important concept studied in circuits and switching network theory since the 1950s. The interested reader can refer to \cite{ikenmeyer2019complexity}, and the references therein, for history and applications of this particular way of extending $f$ to K3. We define this extension formally below.

For a string $x \in \zuo^n$, define the \emph{resolutions} of $x$ as follows:
\[
    \res(x) := \cbra{y \in \zo^n : y_i = x_i~\forall i \in [n]~\textnormal{with}~x_i \in \zo}.
\]
That is, $\res(x)$ denotes the set of all strings in $\zo^n$ that are consistent with the $\{0,1\}$-valued bits of $x$. The hazard-free extension of a Boolean function is defined as follows:

\begin{definition}[Hazard-free Extensions]
    \label{def:hfe}
    For a Boolean function $f:\zo^n \to \zo$, we define its hazard-free extension $\uf : \zuo^n \to \zuo$ as follows. For an input $y \in \zuo^n$,
\[
\uf(y) := \begin{cases}
 	b & \textrm{if $f(y)=b$ for all $y \in \res(x)$}  \\
 	u & \textrm{ otherwise }
 	\end{cases}
\]
\end{definition}

To understand the motivation behind this definition, consider the \emph{instability partial order} defined on $\zuo$ by the relations $\un \leq 0$ and $\un \leq 1$. The elements $0$ and $1$ are incomparable. This partial order captures the intuition that $\un$ is less certain than $0$ and $1$. This partial order can be extended naturally to elements of $\zuon$ as $x \leq y$ iff $x_i \leq y_i$ for all $1 \leq i \leq n$. A function $f' : \zuon \to \zuo$ is \emph{natural} if for all $x, y \in \zuon$ such that $x \leq y$, we have $f'(x) \leq f'(y)$ and $f'(z) \in \zo$ when $z \in \zon$. Intuitively, this property says that the function cannot produce a less certain output on a more certain input and if there is no uncertainty in the input, there should be no uncertainty in the output. A natural function $f'$ extends a Boolean function $f$ if $f'(x) = f(x)$ for all $x \in \zon$. There could be many natural functions that extend a Boolean function. Consider two natural functions $f'$ and $f''$ that extend $f$. We say $f' \leq f''$ if $f'(x) \leq f''(x)$ for all $x \in \zuon$. This says that the output of $f''$ is at least as certain as the output of $f'$. An alternative definition for the hazard-free extension of a function $f$ is as follows: it is the unique function $\uf$ such that $f' \leq \uf$ for all natural functions $f'$ that extends $f$. That is, the hazard-free extension of a Boolean function is the best we could hope to compute in the presence of uncertainties in the inputs to $f$.

 We note here that even though we use the term ``hazard-free'', there is a fundamental difference between our model of computation and the ones studied in results such as \cite{ikenmeyer2019complexity, IKS, Jukna21}. For hazard-free circuits, the value $\un$ represents an unstable voltage and the gates in a circuit fundamentally are unable to detect it. That is, there is no circuit that can output $1$ when input is $\un$ and $0$ otherwise. However, in our setting, the value $\un$ is simply another symbol just like $0$ or $1$. So we can indeed detect/read a $\un$ value. The restriction that we have to compute the hazard-free extension is a semantic one in this paper whereas Boolean circuits can only compute natural functions. In other words, we are interested in query complexity of hazard-free extensions of Boolean functions per se, and we have no notion of metastability in our computation model.

There is a rich body of work in query complexity of Boolean functions that has established best-possible polynomial relationships among various models such as deterministic, randomized, and quantum models of query complexity, and their interplay with analytical measures such as block sensitivity, and certificate complexity (see, for example,~\cite{ABKRT21} and the references therein). We study these relationships in the presence of uncertainty, in particular for hazard-free extensions of Boolean functions. A main goal is to characterize query complexity (equivalently CREW-PRAM time complexity) of hazard-free extensions of Boolean functions using these parameters.
\subsection{Our Results}

In this subsection, we discuss the results presented in this paper. The organization of the paper follows the structure of this subsection.

\subsubsection{Sensitivity Theorem in the Presence of Uncertainty}

We prove the sensitivity theorem for Boolean functions in the presence of uncertainties. We define analogues of query complexity, sensitivity, block sensitivity, and certificate complexity called $\un$-query complexity (denoted $\depth_\un(f)$), $\un$-sensitivity (denoted $\sen_\un(f)$), $\un$-block sensitivity (denoted $\bsen_\un(f)$), and $\un$-certificate complexity (denoted $\cc_\un(f)$), respectively. We show that analogues of Cook, Dwork, and Reischuk's \cite{CDR86} lower bound and Nisan's \cite{Nisan91} upper bound hold in the presence of uncertainties by adapting their proofs suitably (See Appendix~\ref{app:cooknisanun}). Therefore, we can now focus on proving that these parameters are polynomially equivalent in the presence of uncertainties. Huang's proof of the sensitivity theorem for Boolean functions crucially uses the parameter called the degree of a Boolean function. It is unclear how to define an analogue of degree in our setting. However, it turns out that a more classical parameter, the maximum of prime implicant size and prime implicate size, suffices.

Our proof of the sensitivity theorem in the presence of uncertainty is much simpler and more straightforward than the proof of the sensitivity theorem in the classical setting. It raises the question of whether we can find a simpler proof of the classical sensitivity theorem by generalizing our proof to handle an arbitrary upper bound on the number of uncertain values in the input. Note that the classical setting assumes that this number is 0 and we prove the sensitivity theorem by assuming that this number is $n$, the number of inputs.

Recall that a \emph{literal} is an input variable or its negation. An \emph{implicant} (\emph{implicate}) of a Boolean function $f$ is a subset $S$ of all literals such that $f$ is $1$ (is $0$) on any input that has all literals in $S$ set to $1$ (set to $0$ respectively). A \emph{prime implicant} (\emph{prime implicate}) is an implicant (implicate) such that no proper subset of it is an implicant (implicate respectively), i.e., the implicant (implicate) is minimal (w.r.t.~set inclusion). The \emph{size} of a prime implicant or prime implicate is the size of the set. Prime implicants and prime implicates of a Boolean function are widely studied in electronic circuit design and Boolean function analysis.

\begin{theorem}[Sensitivity theorem for hazard-free extensions of Boolean functions]\label{thm:sensthm}
    Let $f : \zon \to \zo$ be a Boolean function and let $k_1$ and $k_2$ be the sizes of a largest prime implicant and prime implicate of $f$. Then, the parameters $\sen_\un(f)$, $\bsen_\un(f)$, $\depth_\un(f)$, $\cc_\un(f)$, and $\max{\{k_1, k_2\}}$ are linearly equivalent.
\end{theorem}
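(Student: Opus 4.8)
The plan is to prove the cycle of linear inequalities $\max\{k_1,k_2\}\le\sen_\un(f)\le\bsen_\un(f)\le\cc_\un(f)\le\depth_\un(f)=O(\max\{k_1,k_2\})$, with the largest prime implicant and the largest prime implicate of $f$ serving as the common anchor. The three inner inequalities are $\un$-analogues of classical query-complexity relations and should transfer with only cosmetic changes: $\sen_\un(f)\le\bsen_\un(f)$ since singletons are blocks; $\bsen_\un(f)\le\cc_\un(f)$ because any $\un$-certificate of an input $x$ must meet every block that is $\un$-sensitive at $x$, as otherwise flipping such a disjoint block keeps $x$ inside the certificate yet changes $\uf$; and $\cc_\un(f)\le\depth_\un(f)$ because the variables queried along the path that $x$ follows in an optimal $\un$-decision tree form a $\un$-certificate of $x$.

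For the lower bound $\max\{k_1,k_2\}\le\sen_\un(f)$ I would use the two extremal terms. Let $T$ be a prime implicant of size $k_1$ and let $x\in\zuon$ set the variables of $T$ so as to satisfy their literals and leave every other coordinate $\un$; then the revealed part of $x$ already contains $T$, so $\res(x)\subseteq f^{-1}(1)$ and $\uf(x)=1$. For each variable $v$ occurring in $T$, resetting $x_v$ to $\un$ leaves the revealed part equal to $T\setminus\{v\}$, which by minimality of $T$ is not an implicant; hence some resolution of the new input evaluates to $0$, while the resolution satisfying all of $T$ still evaluates to $1$, forcing the value to $\un$. So every variable of $T$ is $\un$-sensitive at $x$, giving $\sen_\un(f)\ge k_1$; a symmetric argument with a largest prime implicate gives $\sen_\un(f)\ge k_2$.

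The remaining step, $\depth_\un(f)=O(\max\{k_1,k_2\})$, is where the real work lies. I would first establish $\cc_\un(f)\le k_1+k_2$: an input evaluating to $1$ (resp.\ $0$) has a revealed part containing a prime implicant (resp.\ prime implicate) and hence a $\un$-certificate of size at most $k_1$ (resp.\ $k_2$); for an input $x$ with $\uf(x)=\un$, the genuinely new case, pick resolutions $y,z\in\res(x)$ with $f(y)=0$ and $f(z)=1$, let $P$ be a prime implicate falsified by $y$ and $Q$ a prime implicant satisfied by $z$, and check that fixing $x$ on the variables of $P$ together with those of $Q$ (at most $k_1+k_2$ coordinates) forces the value to $\un$: since on the variables of $P$ the input $x$ satisfies none of $P$'s literals, every input agreeing with $x$ there admits a resolution falsifying $P$ and so cannot evaluate to $1$, and symmetrically via $Q$ it cannot evaluate to $0$. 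The final and hardest step is to turn these certificates into a $\un$-decision tree of depth $O(\max\{k_1,k_2\})$, by greedily querying in each phase the still-unrevealed variables of a suitably chosen prime implicate or prime implicant and bounding the number of phases by a constant. The main obstacle is precisely this collapse from the quadratic bound $O(k_1k_2)$ that the analogous greedy argument yields in the Boolean world down to linear; I expect the hardest sub-case to be inputs evaluating to $\un$, where there is no single minimal term to read off and the analysis must lean on the monotonicity of $\uf$ under the instability order together with the structure of the blocking sets built above.
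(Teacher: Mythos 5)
The portion you actually prove --- the chain $\max\{k_1,k_2\}\le\sen_\un(f)\le\bsen_\un(f)\le\cc_\un(f)\le k_1+k_2$ --- is correct and coincides with the paper's Theorem~\ref{thm:sbscc} almost line for line: the sensitivity lower bound via the input that sets a largest prime implicant (implicate) and leaves everything else $\un$, and the $\un$-certificate for a $\un$-input built from one prime implicate falsified by a $0$-resolution together with one prime implicant satisfied by a $1$-resolution, are exactly the paper's constructions (the paper shaves the certificate bound to $k_1+k_2-1$ by noting every prime implicant meets every prime implicate, which is immaterial asymptotically).

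The genuine gap is the step you yourself flag as ``where the real work lies'': $\depth_\un(f)=O(\max\{k_1,k_2\})$. You leave it as a sketch (``bounding the number of phases by a constant''), and no such argument can succeed, because the bound is false. Take $f$ to be the Tribes function, the OR of $\sqrt{n}$ disjoint ANDs of $\sqrt{n}$ variables each. Every prime implicant is one of the AND terms and every prime implicate picks one variable from each term, so $k_1=k_2=\sqrt{n}$; yet $\depth_\un(f)\ge\depth(f)=n$, since any $\un$-decision tree computes $f$ on Boolean inputs and Tribes has full deterministic query complexity (its multilinear polynomial has degree $n$). Thus $\depth_\un$ can be quadratically larger than the other four measures, and the quadratic loss you were hoping to avoid is unavoidable. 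This is also all the paper itself establishes: its Algorithm~\ref{algo:UpperBoundu} repeatedly queries a minimum $\un$-certificate of a consistent $\un$-input for $\bsen_\un(f)$ rounds (not constantly many), giving $\depth_\un(f)\le\cc_\un(f)\cdot\bsen_\un(f)=O(\max\{k_1,k_2\}^2)$ (Theorem~\ref{thm: algo upper bound}); the ``more precise restatement'' Theorem~\ref{thm:sbscc} accordingly drops $\depth_\un$ from the list of linearly equivalent parameters. So the right fix is not to hunt for a constant-phase greedy argument but to prove the quadratic bound: run your greedy certificate-querying procedure for $\bsen_\un(f)$ phases and show that if no $0$-, $1$-, or $\un$-certificate has been completed by then, the disagreements accumulated in each phase form disjoint sensitive blocks at any surviving stable input, contradicting the block-sensitivity bound, so the answer must be $\un$.
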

We note here that while Huang~\cite{Hua19} showed $\bsen(f) = O(s(f)^4)$ for all Boolean $f$, our result shows that, in the presence of uncertainty block sensitivity and sensitivity are in fact linearly related to each other.

\subsubsection{Deterministic, Quantum and Randomized Models}

Clearly, the query complexity of the hazard-free extension of a Boolean function $f$ cannot be smaller than that of $f$ itself. Can the query complexity of the hazard-free extension of a Boolean function be much more than the query complexity of the function itself? For monotone functions, we show that the answer is no. In fact, this also holds for randomized query complexity (denoted by $\rand(f)$ and $\rand_\un(f)$) and quantum query complexity (denoted by $\qua(f)$ and $\qua_\un(f)$).
\begin{lemma}\label{lem:monsame}
    Let $f : \zon \to \zo$ be a monotone Boolean function. Then we have
    \[
        \depth_\un(f) = \Theta(\depth(f)) \text{ and } \rand_\un(f) = \Theta(\rand(f)) \text{ and } \qua_\un(f) = \Theta(\qua(f)).  
    \]
\end{lemma}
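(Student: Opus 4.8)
The plan is to prove, for each of the three measures, both inequalities of the claimed $\Theta$; the ``$\geq$'' directions are immediate and all the work is in the ``$\leq$'' directions, which is exactly where monotonicity is used. For the easy direction: any deterministic, randomized, or quantum algorithm computing $\uf$ on all of $\zuon$ in particular computes $\uf$ correctly on inputs from $\zon$, and on such inputs it never encounters the symbol $\un$, so it is a valid algorithm of the same type for $f$ with the same query budget and error. Hence $\depth(f) \leq \depth_\un(f)$, $\rand(f) \leq \rand_\un(f)$, and $\qua(f) \leq \qua_\un(f)$ unconditionally.

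For the reverse inequalities, the key structural fact is the following. For $x \in \zuon$, let $x^{(0)}$ and $x^{(1)}$ be the Boolean strings obtained from $x$ by resolving every $\un$ to $0$ and to $1$, respectively. Every $y \in \res(x)$ satisfies $x^{(0)} \leq y \leq x^{(1)}$ in the coordinatewise order on $\zon$, so monotonicity of $f$ gives $f(x^{(0)}) \leq f(y) \leq f(x^{(1)})$. Thus $\uf(x)$ is a fixed function of the pair $\bigl(f(x^{(0)}),f(x^{(1)})\bigr)$: if the two agree, $f$ is constant on $\res(x)$ and $\uf(x)$ equals that common value; otherwise $f(x^{(0)})=0$ and $f(x^{(1)})=1$, so $f$ is non-constant on $\res(x)$ and $\uf(x)=\un$. (This fails without monotonicity, where one would a priori have to inspect all resolutions of $x$, of which there can be exponentially many.)

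This reduces computing $\uf(x)$ to computing $f(x^{(0)})$ and $f(x^{(1)})$, and a query to index $i$ of $x^{(0)}$ (resp.\ of $x^{(1)}$) is emulated by querying $x_i$ and returning $x_i$ if $x_i\in\zo$ and $0$ (resp.\ $1$) if $x_i=\un$. In the deterministic model I would run the optimal decision tree for $f$ twice under these two conventions, caching answers so that each index is physically queried at most once, for a total of at most $2\depth(f)$ queries, then output the value dictated above; so $\depth_\un(f)\leq 2\,\depth(f)$. In the randomized and quantum models the plan is the same but with a preliminary amplification: first boost the given bounded-error algorithm for $f$ to success probability $5/6$ at a constant-factor cost in queries, then run the amplified algorithm once on $x^{(0)}$ and once on $x^{(1)}$; by a union bound both runs are correct with probability $\geq 2/3$, and in that event the rule above recovers $\uf(x)$. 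This yields $\rand_\un(f)=O(\rand(f))$ and $\qua_\un(f)=O(\qua(f))$.

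I expect the only genuinely delicate point to be the quantum step: one must fix an oracle model for ternary inputs and verify that, from such an oracle for $x$, the standard Boolean oracles for $x^{(0)}$ and $x^{(1)}$ can be implemented reversibly with $O(1)$ overhead --- which holds for the natural encodings of $\zuo$ into a constant number of bits, since each bit of $x^{(0)}$ and of $x^{(1)}$ is a fixed Boolean function of the symbol $x_i$. Apart from this, the argument is bookkeeping; the mathematical content is the monotonicity observation of the second paragraph that pins $\uf(x)$ down from the two extreme resolutions.
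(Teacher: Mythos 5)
Your proposal is correct and follows essentially the same route as the paper: both use monotonicity to reduce $\uf(x)$ to the values of $f$ at the two extreme resolutions $y^0$ and $y^1$ (bottom and top of the subcube $\res(x)$), emulate Boolean queries by substituting $b$ for $\un$, and handle the randomized/quantum cases by constant-factor amplification plus a union bound, with the same observation that the ternary-to-Boolean oracle conversion costs only $O(1)$ quantum queries via uncomputation.
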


A natural question to ask is whether the model we are considering is non-degenerate. In other words, do all hazard-free extensions of Boolean functions have large query complexity?
Using Lemma~\ref{lem:monsame}, we can show that there are functions that are easy to compute even in the presence of uncertainties. The following monotone variant of the $\fmux$ function (defined below) by Wegener~\cite{Wegener85} is sufficient.
\begin{definition}[\cite{Wegener85}]\label{defn: monotone indexing}
For an even integer $n > 0$, define $\mmux_n$, as follows: The function $\mmux_n$ is defined on $n + \binom{n}{n/2}$ (which is $\Theta(2^n/\sqrt{n})$) variables, where the latter $\binom{n}{n/2}$ variables are indexed by all $n$-bit strings of Hamming weight exactly $n/2$. For $(x, y) \in \zo^{n + \binom{n}{n/2}}$, define
\[
    \mmux_n(x, y) = \begin{cases}
    0 & |x| < n/2\\
    1 & |x| > n/2\\
    y_x & \textnormal{otherwise}.
    \end{cases}
\]    
\end{definition}
It is easy to see that this function is monotone and has a query complexity of $n+1$. We also exhibit a non-monotone, non-degenerate $n$-variate function such that its hazard-free extension has $O(\log n)$ query complexity in Section~\ref{app:nonmoneasy}.

For general functions, we show that uncertainty can blow-up query complexity exponentially. The Boolean function $\fmux_n : \zo^{n + 2^n} \to \zo$ defined by
$$\fmux_n(s_0, s_1, \dotsc, s_{n-1}, (x_{(b_0, \dotsc, b_{n-1})})_{b_i\in\{0,1\}}) := x_{(s_0, \dotsc, s_{n-1})}$$
is a function on $n+2^n$ inputs that depends on all its inputs and has query complexity of $n+1$. The inputs $s_i$ are called the \emph{selector} bits and the inputs $x_j$ are called \emph{data} bits. It is easy to show that any function that depends on all its $N$ input bits must have at least logarithmic (in $N$) query complexity. Therefore, $\fmux_n$ is one of the easiest functions to compute in the query complexity model. We prove that its hazard-free extension is one of the hardest functions in the query complexity model.
\begin{theorem}\label{thm:muxdepth}
    \[ 
    \depth_u(\fmux_n) = 2^n + n \text{ and } \rand_\un(\fmux_n) = \Theta(2^n) \text{ and } \qua_\un(\fmux_n) = \Theta(2^{n/2})
    \]
\end{theorem}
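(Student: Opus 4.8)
The plan is to analyze the three complexity measures separately, since they behave quite differently, but all three lower bounds share a common core: the hazard-free extension $\widetilde{\fmux}_n$ must, on inputs where all selector bits are set to $\un$ and the data bits are Boolean, decide whether all $2^n$ data bits agree. So the first step is to isolate this ``all-equal'' subproblem. Fix the selector string $s = (\un, \dots, \un)$ and restrict the data bits to $\zo^{2^n}$. Then $\res(s, x)$ ranges over $x_j$ for every $j \in \{0,1\}^n$, so by Definition~\ref{def:hfe}, $\widetilde{\fmux}_n(s, x) = b$ iff $x_j = b$ for all $j$, and $\widetilde{\fmux}_n(s,x) = \un$ otherwise. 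Thus solving $\widetilde{\fmux}_n$ on these inputs is at least as hard as computing the (hazard-free-style) predicate ``are all $N := 2^n$ bits equal?''. I would establish the three lower bounds as: (i) deterministic query complexity of all-equal on $N$ bits is $N$ (an adversary keeps answering with a fixed bit $b$ until all but one position is queried, then can force either the all-$b$ string or a string differing in the last position); (ii) randomized query complexity is $\Omega(N)$ (standard: distinguishing the all-$0$ string from the $N$ strings with a single $1$ requires hitting the special coordinate, giving $\Omega(N)$ by a distributional/Yao argument); (iii) quantum query complexity is $\Omega(\sqrt{N})$ (by a reduction from unstructured search / Grover lower bound, since detecting whether there is a coordinate disagreeing with position $0$ is OR-hard). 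These give $\depth_\un(\fmux_n) \ge 2^n$, $\rand_\un(\fmux_n) = \Omega(2^n)$, and $\qua_\un(\fmux_n) = \Omega(2^{n/2})$; adding the observation that the $n$ selector bits may independently need to be read (in the deterministic case they contribute the additive $+n$) completes the lower-bound side.

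For the matching upper bounds, the key observation is that a single generic strategy works: query all $n$ selector bits first. If they all come back Boolean, we have pinned down a unique data index $j = (s_0, \dots, s_{n-1})$, query that one data bit, and output it; total $n+1$ queries, which also matches the Boolean query complexity $n+1$. If some selector bit is $\un$, then $\res(s,x)$ projects onto a subcube of data indices of dimension equal to the number of $\un$ selector bits; $\widetilde{\fmux}_n$ outputs $b$ iff all data bits indexed by that subcube equal $b$, and $\un$ otherwise. In the worst case (all selectors $\un$) this is the all-equal problem on all $2^n$ data bits, so reading all of them gives the deterministic bound $\depth_\un(\fmux_n) \le 2^n + n$. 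For randomized: sample data bits from the relevant subcube uniformly; if we ever see two different values output $\un$; otherwise we have strong statistical confidence, but since the all-equal promise has no margin we actually must read everything in the subcube — so instead the $O(2^n)$ randomized bound follows simply from the deterministic bound, and tightness comes from the $\Omega(2^n)$ lower bound above. For quantum: use the $O(n)$ queries to read the selectors (Boolean case is then trivial), and in the $\un$-case run Grover search over the at-most-$2^n$ data bits in the subcube to find a coordinate disagreeing with a reference coordinate, costing $O(2^{n/2})$ queries; amplitude amplification / exact Grover variants handle the ``is there any disagreement'' decision, yielding $\qua_\un(\fmux_n) = O(2^{n/2})$.

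The main obstacle I anticipate is making the deterministic lower bound of \emph{exactly} $2^n + n$ airtight rather than merely $\Omega(2^n)$: one must argue that the adversary can force the querier to read every single data bit \emph{and} every selector bit, with no savings from interleaving. The clean way is an adversary argument on $\widetilde{\fmux}_n$ directly: maintain that all selectors read so far are $\un$ and all data bits read so far are $0$; as long as at least one selector or at least one data bit is unqueried, the adversary can still realize an input consistent with the transcript on which the true output is $0$ (set remaining selectors and data bits to $0$) and another on which it is $\un$ — for an unqueried selector, set it to $0$ and flip the corresponding data bit; for an unqueried data bit, set its value to $1$ (it lies in the all-$\un$-selector resolution subcube). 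Hence the querier cannot safely halt before $2^n + n$ queries, and cannot safely output $0$ before that either. A secondary technical point is confirming that $\widetilde{\fmux}_n$ genuinely depends on all $n + 2^n$ inputs and that the claimed $\Theta$'s are tight up to the stated constants; for the quantum lower bound one should also double-check that the reduction from search to the all-equal predicate on $2^n$ bits does not lose more than a constant factor, which it does not since $\mathrm{OR}$ on $N$ bits has quantum query complexity $\Theta(\sqrt{N})$. Routine verification of the K3 truth-table semantics against $\res(\cdot)$ is deferred.
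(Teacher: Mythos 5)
Your overall architecture matches the paper's: the randomized and quantum lower bounds via a reduction to $\OR_{2^n}$ on the data bits with all selectors set to $\un$ (all-zero data gives output $0$, a single $1$ gives output $\un$), the quantum upper bound via querying the selectors and then running Grover over the resolved subcube of data bits, and the deterministic upper bound by querying selectors first. Those three parts are sound, modulo the routine detail (which the paper also handles) that data bits may themselves be $\un$, so the Grover step must test for ``constant $0$'', ``constant $1$'', or ``neither'' rather than merely for a disagreement with a Boolean reference bit.

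The genuine gap is exactly where you anticipated it: your adversary for the exact deterministic bound does not work. You propose to maintain ``all selectors read so far are $\un$ and all data bits read so far are $0$.'' But if the querier simply reads all $2^n$ data bits and never touches a selector, your adversary has answered $0$ to every data query, and at that point $\widetilde{\fmux}_n$ is determined to be $0$ regardless of the selector values --- every resolution of the selectors points to a data bit already fixed to $0$. The querier halts after $2^n$ queries, so this adversary certifies only $\depth_\un(\fmux_n) \ge 2^n$, not $2^n+n$. Your suggested fix (``set the unqueried selector to $0$ and flip the corresponding data bit'') is unavailable precisely in this scenario, since the relevant data bit has already been committed to $0$; moreover a single selector bit does not correspond to a single data bit. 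The paper's adversary avoids this by answering $1$ to data-bit queries and reserving a single $0$ for the \emph{last} unqueried data bit that is still consistent with the current partial knowledge of the selectors; with one $0$ amid $1$s, whether the output is $1$, $0$, or $\un$ genuinely depends on whether the selectors can resolve to that $0$ position, which forces all $n$ selector bits to be queried in addition to all the relevant data bits. (The paper in fact proves the stronger parameterized statement that with at most $k$ unstable values allowed the optimal depth is $\min\{2^k+n,\,2^n+n\}$.) To repair your argument you would need to replace your adversary's data-bit responses with something of this form.
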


We also show the following relationships between deterministic, randomized, and quantum query complexities of hazard-free extensions of Boolean functions.
\begin{theorem}\label{thm: main}
    For $f : \zon \to \zo$, we have
    \[
    \depth_\un(f) = O(\rand_\un(f)^2) \text{ and } \depth_\un(f) = O(\qua_\un(f)^4).
    \]
\end{theorem}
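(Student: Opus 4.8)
The plan is to derive Theorem~\ref{thm: main} from Theorem~\ref{thm:sensthm} together with two lower bounds that are hazard-free analogues of classical facts about \emph{monotone} Boolean functions --- which is the natural template here, since $\uf$ is the maximal natural (hence instability-monotone) extension of $f$. Concretely, Theorem~\ref{thm:sensthm} already gives $\depth_\un(f) = \Theta(\bsen_\un(f)) = \Theta(\cc_\un(f)) = \Theta(\max\{k_1,k_2\})$, so it suffices to establish (i) $\rand_\un(f) = \Omega(\bsen_\un(f))$ and (ii) $\qua_\un(f) = \Omega(\sqrt{\bsen_\un(f)})$. Substituting into $\depth_\un(f) = \Theta(\bsen_\un(f))$ then yields $\depth_\un(f) = O(\rand_\un(f))$ and $\depth_\un(f) = O(\qua_\un(f)^2)$, both of which imply the stated bounds. (If one prefers to invoke only the easy half $\cc_\un(f) = O(\bsen_\un(f))$ of Theorem~\ref{thm:sensthm}, one can instead combine it with the elementary three-valued alternating-certificate bound $\depth_\un(f) = O(\cc_\un(f)^2)$, which together with (i) and (ii) gives exactly $O(\rand_\un(f)^2)$ and $O(\qua_\un(f)^4)$.)

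For (i) I would run Yao's minimax principle on the usual hard distribution. Fix an input $x$ and pairwise disjoint blocks $B_1, \dots, B_b \subseteq [n]$ with $b = \bsen_\un(f)$ and $\uf(x^{B_i}) \neq \uf(x)$ for all $i$, where $x^{B_i}$ is the perturbation of $x$ on $B_i$ witnessing $\un$-block sensitivity (so $x$ and $x^{B_i}$ differ on every coordinate of $B_i$). Let $\mu$ place mass $1/2$ on $x$ and mass $1/(2b)$ on each $x^{B_i}$. Any deterministic decision tree $T$ with $\mu$-error at most $1/3$ must be correct on $x$ (otherwise its error already exceeds $1/2$), and on any run on $x$ that queries no coordinate of $B_i$ it must output the same value on $x$ and on $x^{B_i}$, hence err on $x^{B_i}$; counting these forced errors against the budget $1/3$ shows $T$ touches $\Omega(b)$ of the blocks, so it makes $\Omega(b)$ queries on $x$ and has $\mu$-cost $\Omega(b)$. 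This is the classical argument transported verbatim to three-valued outputs and gives $\rand_\un(f) = \Omega(\bsen_\un(f))$.

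The step I expect to be the main obstacle is (ii), the quantum lower bound, because Huang-style and polynomial-method tools have no clear analogue for a function valued in $\zuo$ (as the excerpt itself notes, an analogue of degree is unclear). I would sidestep this using the structural content of Theorem~\ref{thm:sensthm}, namely $\bsen_\un(f) = \Theta(\max\{k_1,k_2\})$, and then embed unstructured search. Take a prime implicant $S$ of $f$ with $|S| = k_1$ (a large prime implicate is dual). Let $x$ set the literals of $S$ to their forcing values and every other coordinate to $\un$, so $\uf(x) = 1$; for $v \in S$ let $x^{(v)}$ further weaken coordinate $v$ to $\un$, and note that minimality of $S$ forces $\uf(x^{(v)}) = \un$ for every $v \in S$ (a resolution keeping $v$ at its forcing value gives $f = 1$, while minimality supplies one with $S \setminus \{v\}$ forced and $f = 0$). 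Restricting $\uf$ to $\{x\} \cup \{x^{(v)} : v \in S\}$, computing $\uf$ becomes exactly $\OR$ on $|S| = k_1$ bits, so $\qua_\un(f) \ge \qua(\OR_{k_1}) = \Omega(\sqrt{k_1})$; the dual bound gives $\Omega(\sqrt{k_2})$, and Theorem~\ref{thm:sensthm} turns this into $\qua_\un(f) = \Omega(\sqrt{\bsen_\un(f)})$. An alternative that avoids the prime-implicant characterisation is to apply the positive-weights adversary method directly to the relation $\{(x, x^{B_i})\}_i$ from step (i) and verify its value is $\Omega(\sqrt{b})$ as for $\OR$; either way, the delicate point is checking that the search/adversary lower bound survives the passage to a three-valued output. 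Assembling (i), (ii), and Theorem~\ref{thm:sensthm} completes the proof.
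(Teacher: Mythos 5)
Your fallback route is, in essence, the paper's proof, but your primary route rests on an overclaim that you should not let stand. You assert that Theorem~\ref{thm:sensthm} ``already gives'' $\depth_\un(f) = \Theta(\bsen_\un(f))$. The precise version of that theorem (Theorem~\ref{thm:sbscc}) establishes linear equivalence only among $\sen_\un$, $\bsen_\un$, $\cc_\un$, and $\max\{k_1,k_2\}$; the relation to $\depth_\un$ that is actually proved is the quadratic one, $\depth_\un(f) = O(\cc_\un(f)\cdot\bsen_\un(f)) = O(\bsen_\un(f)^2)$, via the alternating-certificate algorithm (Algorithm~\ref{algo:UpperBoundu} and Theorem~\ref{thm: algo upper bound}). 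If $\depth_\un(f) = \Theta(\bsen_\un(f))$ were available, your step (i) would immediately yield $\depth_\un(f) = O(\rand_\un(f))$ --- but the paper explicitly states that a linear deterministic--randomized relationship in this model is open. So the first paragraph of your argument proves too much from an unproved premise. Your parenthetical repair --- $\cc_\un(f) = O(\bsen_\un(f))$ from Theorem~\ref{thm:sbscc} combined with the three-valued alternating-certificate bound, then the two lower bounds --- is exactly the paper's derivation; just be aware that the certificate-times-block-sensitivity bound is not free in the three-valued setting (the paper spends Observations~\ref{obs:existsucert} and~\ref{obs: unstable certificate no u} and Lemma~\ref{lem:firstInconsistency1} making the disjoint-blocks argument work when the algorithm repeatedly queries $\un$-certificates), so citing it as ``elementary'' hides the one place where genuine adaptation is needed.

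Your lower bounds match the paper's Lemma~\ref{lem: unstable randomized quantum bs}. Step (i) is verbatim the paper's Yao argument with the distribution $\mu(x)=1/2$, $\mu(x^{B_i})=1/(2b)$. For step (ii), your worry that the quantum bound is the delicate point is unfounded: Ambainis's adversary theorem is stated for functions $f : D^n \to E$ with arbitrary finite codomain, so applying it to the relation $\{x\}\times\{x^{B_i} : i\in[k]\}$ (your option B) goes through with no modification --- this is precisely what the paper does, with $m=k$, $m'=\ell=\ell'=1$. Your option A (embedding the weight-$\le 1$ promise version of $\OR_{k_1}$ into a largest prime implicant, where weakening any single forced literal to $\un$ flips the output from $1$ to $\un$ by primality) is a correct and mildly different instantiation; it gives $\Omega(\sqrt{\max\{k_1,k_2\}})$ and needs Theorem~\ref{thm:sbscc} to convert back to $\Omega(\sqrt{\bsen_\un(f)})$, whereas the adversary-on-blocks argument gives that bound directly and is the cleaner of the two.
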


We remark here that the deterministic-randomized relationship above is \emph{better} than the best-known cubic relationship in the Boolean world, while the quartic deterministic-quantum separation above matches the best-known separation in the Boolean world (see~\cite{ABKRT21}). The key reason we are able to obtain better relationships is because we show that sensitivity, block sensitivity, and certificate complexity are all linearly related in the presence of uncertainty (Theorem~\ref{thm:sensthm}). This is not the case in the classical Boolean setting. Regarding best possible separations: while a linear relationship between deterministic and randomized query complexities in the presence of uncertainty remains open, a quadratic deterministic-quantum separation follows from Theorem~\ref{thm:muxdepth} (or from the OR function, which has maximal deterministic query complexity, but Grover's search algorithm~\cite{Gro96} offers a quadratic quantum speedup).

It is natural to model query algorithms using \emph{decision trees}. A decision tree represents the strategy of the querier using a tree structure. The internal nodes of the tree are labeled by input variables and has two outgoing edges labeled $0$ and $1$. Computation starts at the root and queries the variable labeling the current node. Then, the outgoing edge labeled by the answer is taken to reach the next node. The leaves of the tree are labeled with $0$ or $1$ and represent the answer determined by the querier. A decision tree is said to compute a Boolean function $f$ if the querier can correctly answer the value of $f$ for every possible input by following the decision tree from root to a leaf. The \emph{depth} of the decision tree represents the worst-case time. The depth of a smallest depth decision tree that computes $f$ is called the \emph{decision tree depth complexity} of $f$, and it is the same as \emph{deterministic query complexity} of $f$.

\subsubsection{Decision Tree Size}

Decision trees have played an important role in machine learning. The size of a decision tree is an important measure as large decision trees often suffer from over-fitting. It has been long-known that functions that admit small-size decision trees are efficiently PAC learnable~\cite{EH89}. Moreover, the class of decision trees of large size is not efficiently PAC-learnable as their VC-dimension is directly proportional to their size. Various ideas to prune decision trees and reduce their size while maintaining reasonable empirical error (error on the training set) are used in practice. For a detailed treatment of the role of decision tree size in learning, the interested reader may refer to \cite[Chapter 18]{ShalevBenDavid14}. We denote the decision tree size of a function $f$ by $\size(f)$ and the decision tree size of its hazard-free extension by $\size_\un(f)$. We show that for the $\fmux$ function despite the exponential blow-up in depth from Theorem~\ref{thm:muxdepth}, the size blow-up is only polynomial.
\begin{theorem}\label{thm:muxsize}
    $2\cdot 4^n \leq \size_\un(\fmux_n) \leq 4^{n+1} - 3^n$.
\end{theorem}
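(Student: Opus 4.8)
The plan is to prove the two bounds separately, using throughout the elementary description of the target function (immediate from Definition~\ref{def:hfe}): for a selector string $s\in\zuo^n$ and a data string $x$, one has $\widetilde{\fmux_n}(s,x)=b\in\zo$ exactly when $x_r=b$ for every index $r\in\res(s)$ (here $\res(s)\subseteq\zo^n$ is read as the set of data-positions the selector $s$ may point to), and $\widetilde{\fmux_n}(s,x)=\un$ otherwise. For the upper bound I would exhibit the decision tree that first queries all $n$ selector variables — its top part is a complete ternary tree of depth $n$ reaching $3^n$ nodes, one per pattern $s^\star\in\zuo^n$ — and then hangs below the node for $s^\star$ a subtree that queries the $m:=|\res(s^\star)|=2^{k}$ relevant data bits (where $k$ is the number of $\un$'s in $s^\star$) in a fixed order, outputting $\un$ the first time it sees either a $\un$ or a bit disagreeing over $\zo$ with a previously seen bit, and otherwise outputting the common value. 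Correctness is immediate from the description above. If $C(t)$ is the number of leaves needed to verify that $t$ further bits all equal a fixed $b\in\zo$, then $C(0)=1$ and $C(t)=C(t-1)+2$, so $C(t)=2t+1$, and the data subtree for $m$ relevant bits has $T(m)=2C(m-1)+1=4m-1$ leaves. Summing over selector patterns,
\[
\size_\un(\fmux_n)\;\le\;\sum_{k=0}^{n}\binom{n}{k}2^{n-k}\bigl(4\cdot 2^{k}-1\bigr)\;=\;4\cdot 2^n\sum_{k}\binom{n}{k}-\sum_{k}\binom{n}{k}2^{n-k}\;=\;4^{n+1}-3^n.
\]

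For the lower bound I would use the standard fact that the set of inputs reaching a fixed leaf of any decision tree forms a subcube of $\zuo^{n+2^n}$ (fix the queried coordinates, leave the rest free) on which $\widetilde{\fmux_n}$ is constant; hence it suffices to exhibit a set $M$ of $2\cdot 4^n$ inputs no two of which lie in a common $\widetilde{\fmux_n}$-monochromatic subcube. Since the smallest subcube through two points $p,q$ fixes exactly the coordinates where $p$ and $q$ agree, it is enough that for every $p\ne q\in M$ this particular subcube already contains inputs of two different $\widetilde{\fmux_n}$-values. I would take $M$ to consist, for each selector pattern $s\in\zuo^n$ and each $r\in\res(s)$, of the two inputs $A_{s,r}$ (selector $s$; $x_r=\un$; all other data bits $0$) and $B_{s,r}$ (selector $s$; $x_r=1$; all other data bits $0$). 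These are pairwise distinct — the $\un$-marker unambiguously pins down the special index, which is what rescues this from the collisions that plague the naive ``both Boolean values at every index'' construction — and their number is $\sum_{s}2|\res(s)|=2\sum_{k}\binom{n}{k}2^{n-k}2^{k}=2\cdot 4^n$.

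The separation check splits into two regimes. If two inputs of $M$ have the \emph{same} selector $s$, the enclosing subcube fixes the selector to $s$, so the output depends only on the relevant data bits; one verifies that any two of our inputs for $s$ agree on the relevant coordinates only on bits fixed to $0$ — crucially \emph{never} on a $\un$-valued relevant bit, which is the single configuration that would pin the subcube to $\un$ — so the (nonempty) set of free relevant coordinates can be set to witness output $0$ and also output $\un$ (or $1$). If two inputs of $M$ have \emph{different} selectors $s\ne s'$, the enclosing subcube leaves at least one selector coordinate free; since $r\in\res(s)$ for each of our inputs, the free coordinates can be set either to route the resolution through the special index $r$ (forcing output $\un$ or $1$ through $x_r$) or to make the resolution avoid all of the at most two special indices involved (forcing output $0$, since every remaining data bit is $0$), so again the subcube is not monochromatic. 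This yields $\size_\un(\fmux_n)\ge|M|=2\cdot 4^n$.

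I expect the main obstacle to be precisely this last separation case. The delicate subcase is when only one selector coordinate is free and the two special indices differ on it: then no single Boolean setting of that coordinate gives a resolution avoiding both special indices, and one must instead argue non-monochromaticity by routing the resolution through $\un$ on that coordinate, so that it simultaneously hits both special indices and produces a $0$/$\un$ disagreement there. The distinctness/no-collision verification for $M$ is minor, but it is exactly the place where a careless choice of hard inputs would break, so it is worth doing carefully.
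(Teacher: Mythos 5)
Your proposal is correct, and both halves land on the paper's exact bounds. The upper bound is the same construction and the same count as the paper's: query all $n$ selector bits (a complete ternary tree with $3^n$ leaves), then under the pattern with $k$ unknowns scan the $2^k$ relevant data bits; your leaf count $4\cdot 2^k-1$ per pattern matches the paper's $2^{k+2}-1$ and the sum is identical. The lower bound uses the same fooling-set idea and the same size $2\cdot 4^n$, but is organized differently. The paper first argues that on its hard inputs every selector bit and the special data bit must be queried, restricts attention to leaves below a full selector query, and partitions those leaves by the selector pattern $\alpha$, so cross-selector pairs never need to be compared; its witnesses set the special relevant data bit to $0$ or $\un$ with all other data bits equal to $1$. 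You instead run a direct minimal-subcube non-monochromaticity argument over all pairs, with witnesses $A_{s,r}$, $B_{s,r}$ whose data is all-$0$ except for a $\un$ or $1$ at the special index. This forces you to handle pairs with different selectors explicitly, which is exactly where your flagged ``delicate subcase'' lives; it does go through, and the clean reason is the one implicit in your fix: when the two special indices $r\neq r'$ differ, both data coordinates $x_r$ and $x_{r'}$ are free in the spanned subcube (each input is $0$ where the other is nonzero), so zeroing all free data bits yields an output-$0$ point regardless of the selector, while setting $x_r$ to $\un$ (or $1$) with a selector resolution through $r$ yields a non-$0$ point. The all-$0$ background is also what makes your same-selector case work, since any two of your witnesses agree on relevant coordinates only at bits fixed to $0$. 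Net effect: your route trades the paper's ``these bits must be queried'' bookkeeping for a uniform subcube argument at the cost of one extra case analysis; both are valid.
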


In contrast, we show that there are functions for which the size blow-up is exponential. In particular, the $\fand_n$ function has linear-size Boolean decision trees. However, its hazard-free extension needs exponential-size decision trees.
\begin{theorem}\label{thm:andsize}
    $\size_\un(\fand_n) = 2^{n+1}-1$.
\end{theorem}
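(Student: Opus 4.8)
The plan is to analyze the hazard-free extension $\widetilde{\fand_n}$ directly. First I would record what this function computes: on input $y \in \zuon$, $\widetilde{\fand_n}(y) = 1$ iff every bit of $y$ is $1$; it equals $0$ iff at least one bit of $y$ is $0$; and it equals $\un$ otherwise, i.e.\ exactly when there is no $0$ among the bits but at least one $\un$. So a querier must output $1$ only after seeing all $n$ bits are $1$, must output $0$ as soon as it finds a $0$, and must output $\un$ only when it has confirmed that no bit is $0$ and at least one bit is $\un$ --- which again forces it to have read all $n$ bits. The key structural observation is that along any root-to-leaf path in a correct decision tree, the answers to the already-queried variables take values in $\zuo$, and the leaf value is forced: a leaf can be labeled $0$ only if some queried bit was answered $0$ on that path, and a leaf labeled $1$ or $\un$ requires that \emph{all} $n$ variables have been queried on that path (with all answers in $\{1,\un\}$, and the label being $1$ if all are $1$, $\un$ if at least one is $\un$).

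For the upper bound, the natural tree queries the $n$ variables in a fixed order $x_1, x_2, \ldots, x_n$; at each node, if the answer is $0$ we go to a leaf labeled $0$, and otherwise (answer $1$ or $\un$) we continue; after all $n$ variables are read along a surviving branch, the leaf is labeled $1$ or $\un$ according to whether any $\un$ was seen. The internal nodes form a ternary structure but the $0$-branch is always a leaf, so counting: at depth $i$ (for $0 \le i \le n-1$) there are $2^i$ internal nodes (reached by answer-strings over $\{1,\un\}$ of length $i$), each contributing one $0$-leaf, and at depth $n$ there are $2^n$ leaves. Total size $= \sum_{i=0}^{n-1} 2^i \text{ (internal)} + \sum_{i=0}^{n-1} 2^i \text{ ($0$-leaves)} + 2^n \text{ (depth-}n\text{ leaves)} = (2^n - 1) + (2^n-1) + 2^n$, which overshoots; the correct bookkeeping is that total nodes $= (\text{internal nodes}) + (\text{leaves}) = (2^n-1) + (2^n + (2^n-1))$ --- I would instead observe that the subtree on $\{1,\un\}$-answers is a full binary tree on $n$ levels with $2^{n+1}-1$ nodes, of which $2^n$ are leaves and $2^n - 1$ are internal, and each of those $2^n-1$ internal nodes additionally has a $0$-child leaf, giving $2^{n+1}-1 + (2^n - 1)$; this must be reconciled to match $2^{n+1}-1$, so the right construction is the ternary tree where the surviving structure has exactly $2^{n+1}-1$ nodes total --- I would nail down the precise tree (querying in order, $0$-answer terminates) and count carefully to confirm it achieves $2^{n+1}-1$.

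For the lower bound I would argue that any correct decision tree must have at least this many leaves. The cleanest route is an adversary / distinct-leaf argument: each of the $2^n$ strings in $\{1,\un\}^n$ must reach its own distinct leaf, because on such inputs the output depends on the full input (two such inputs differing in one coordinate have different outputs when one is all-ones), and crucially no query can be "skipped" --- a decision tree computing $\widetilde{\fand_n}$ correctly must read every variable on every branch that never sees a $0$, since outputting $1$ or $\un$ prematurely would be wrong on some resolution/extension. This gives $2^n$ distinct leaves at depth $n$, hence $\ge 2^n - 1$ internal nodes on the "all-answers-in-$\{1,\un\}$" part of the tree; then I would charge an extra leaf for the $0$-answer edge out of each such internal node (these $0$-edges must exist and must lead somewhere, and they are all distinct), yielding total size $\ge 2^n + (2^n - 1) = 2^{n+1}-1$.

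The main obstacle I expect is the lower bound bookkeeping: being careful that the $2^n-1$ internal nodes carrying a queried variable on the $\{1,\un\}$-branches genuinely each have a $0$-labeled outgoing edge to a genuinely distinct node, and that none of these nodes coincide with the $2^n$ depth-$n$ leaves. This requires arguing that in a minimal tree the $0$-branch out of such a node can be assumed to go immediately to a $0$-leaf (if the subtree there did more work we could prune it, since the answer is already determined to be $0$), which then makes the count exact; combined with a matching upper-bound construction this pins the size at exactly $2^{n+1}-1$.
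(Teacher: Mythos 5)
Your proposal is correct and follows essentially the same route as the paper: the lower bound counts the $2^n$ inputs in $\{1,\un\}^n$, argues every variable must be queried on any branch that never answers $0$ (so these inputs reach $2^n$ distinct leaves), and then charges one extra leaf to the $0$-edge of each of the $2^n-1$ internal nodes on those branches; the upper bound is the natural sequential tree. The one loose end you flag --- your upper-bound count overshooting to $3\cdot 2^n-2$ --- is purely a matter of convention: in this paper $\size$ counts \emph{leaves}, not all nodes (as the paper uses elsewhere, e.g.\ ``the number of internal nodes of a tree is one less than its size''), and with that convention your tree has exactly $2^n$ depth-$n$ leaves plus $2^n-1$ zero-leaves, i.e.\ $2^{n+1}-1$, matching your lower bound exactly.
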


We also show that there are hazard-free extensions of Boolean functions that require trees of size $\Omega\left(\binom{n}{n/3}\binom{2n/3}{n/3}\right)$ (see Section~\ref{app:cm}). Notice that a ternary tree of depth $n$ can have at most $3^n$ leaves. This lower bound is only smaller than this worst-case by a polynomial factor.

We also show how to construct decision trees for hazard-free extensions of Boolean functions from a decision tree for the underlying Boolean function.
\begin{theorem}\label{thm:dttoudt}
    For any Boolean function $f$, we have $2\size(f)-1 \leq \size_\un(f) \leq 2^{\size(f)} - 1$.
\end{theorem}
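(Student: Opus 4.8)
The plan is to prove the two inequalities separately, each via an explicit transformation between binary (Boolean) decision trees and ternary (hazard-free) decision trees, keeping careful track of the number of leaves; here $\size(\cdot)$ and $\size_\un(\cdot)$ count leaves. I will use throughout that a minimal hazard-free decision tree may be assumed \emph{reduced}, meaning no variable is queried twice along a root-to-leaf path: a non-reduced tree contains a node re-querying an already-determined variable, and the two ``dead'' subtrees at that node can be deleted, strictly decreasing the leaf count. In a reduced hazard-free tree every internal node queries a fresh variable, so all three of its $0$-, $1$-, and $\un$-branches are reachable; in particular it is a full ternary tree.

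For the upper bound $\size_\un(f)\le 2^{\size(f)}-1$, I would give a recursive map $U(\cdot)$ that sends a binary decision tree $S$ computing a function $h$ to a ternary tree $U(S)$ computing $\widetilde{h}$. On a leaf, $U$ does nothing. If $S$ has root querying $x_i$ with subtrees $S_0$ and $S_1$ (so $S_c$ computes $h|_{x_i=c}$), then $U(S)$ queries $x_i$, routes the answers $0$ and $1$ into $U(S_0)$ and $U(S_1)$, and handles the answer $\un$ with a ``meet tree'': first run $U(S_0)$, reaching some leaf labelled $\ell\in\zuo$, and then from that leaf continue with a copy of $U(S_1)$ in which every leaf originally labelled $m$ is relabelled $\ell\sqcap m$, where $\sqcap$ is the meet in the instability order ($b\sqcap b=b$, and $a\sqcap b=\un$ otherwise). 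Correctness rests on the fact that when $y_i=\un$ the set $\res(y)$ is the union of $\res$ of the two strings obtained from $y$ by setting $x_i$ to $0$ and to $1$, which gives $\widetilde{h}(y)=\widetilde{h|_{x_i=0}}(y)\sqcap\widetilde{h|_{x_i=1}}(y)$; one also checks that $U(S)$ remains a full ternary tree. For the count: if $U(S_0)$ and $U(S_1)$ have $a$ and $b$ leaves respectively, the meet tree has $ab$ leaves, so $U(S)$ has $a+b+ab=(a+1)(b+1)-1$ leaves. Since the leaf count of $S$ splits additively over $S_0$ and $S_1$, a straightforward induction using $(a+1)(b+1)\le 2^{p}\cdot 2^{q}$ (with $p,q$ the leaf counts of $S_0,S_1$) shows that $U(S)$ has at most $2^{(\text{leaf count of }S)}-1$ leaves; applying this to an optimal Boolean decision tree for $f$ completes this direction.

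For the lower bound $\size_\un(f)\ge 2\size(f)-1$, I would take a reduced optimal hazard-free tree $\widetilde T$ for $\uf$ and delete every $\un$-subtree, i.e., restrict attention to Boolean inputs. The surviving tree $T'$ is a full binary tree, since each internal node queries a fresh variable and hence both its $0$- and $1$-children are reached by some Boolean input; moreover $T'$ computes $f$ on $\zon$, because a Boolean input never triggers a $\un$-branch and ends at a leaf labelled $\uf(x)=f(x)$. Thus $T'$ has $L'\ge\size(f)$ leaves, and therefore $L'-1$ internal nodes. Each of these internal nodes has, in $\widetilde T$, a $\un$-child that roots a subtree with at least one leaf, and these subtrees are pairwise disjoint and disjoint from the leaves of $T'$. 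Hence $\size_\un(f)\ge L'+(L'-1)=2L'-1\ge 2\size(f)-1$.

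The constructions and inductions are routine; the points that require care are the reduction of $\widetilde T$ to reduced form (needed so that a $\un$-branch is genuinely present at every internal node of $T'$) and the verification that the meet tree computes $\sqcap$ correctly. I do not expect a real obstacle — the substance is in finding the right recursion, where the identity $(a+1)(b+1)-1$ is precisely what makes the exponential bound drop out cleanly, together with the ``strip the $\un$-branches, then charge a distinct $\un$-subtree to each internal node'' idea on the lower bound side.
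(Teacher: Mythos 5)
Your proposal is correct and follows essentially the same route as the paper: the lower bound strips the $\un$-subtrees to recover a Boolean tree and charges one extra leaf to each of its internal nodes, and the upper bound is the same product construction (the paper's ``replace each $0/1$-leaf of the converted $0$-subtree by a relabelled copy of the converted $1$-subtree'' is exactly your meet tree), with the identical recursion $\size+1 \le (\size_0+1)(\size_1+1)$ yielding $2^{\size(f)}-1$. Your explicit use of the meet $\sqcap$ and the identity $\widetilde{h}(y)=\widetilde{h|_{x_i=0}}(y)\sqcap\widetilde{h|_{x_i=1}}(y)$ is a slightly cleaner packaging of the paper's leaf-relabelling rule, but not a different argument.
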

The tightness of the first inequality is witnessed by the {\sf PARITY}$_n$ function and that of the second inequality is witnessed by the $\fand_n$ function. 

We also show that, in the case of hazard-free extensions too, sensitivity plays an important role in the function learning problem. The problem is as follows: We are provided a few input-output pairs and a guarantee that the function is from some family. The goal is to learn the function from as few samples as possible. It is known that a function with sensitivity $s$ is completely specified by its values on a Hamming ball of radius $2s$~\cite{GNSTW16}. We prove an analogue for hazard-free extensions of Boolean functions in Section~\ref{app:learning-sensitivity}.

\begin{theorem}\label{thm:senball}
    A hazard-free extension $f$ that has $\sen_u(f) \le s$ is specified by its values on any Hamming ball of radius $4s$ in $\zuon$.
\end{theorem}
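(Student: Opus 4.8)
The plan is to mimic the classical argument of~\cite{GNSTW16} — which shows a Boolean function of sensitivity $\le s$ is determined by its values on a Hamming ball of radius $2s$ — but carried out over $\zuon$ with the instability partial order. Fix a center $z \in \zuon$ and suppose two hazard-free extensions $f, g$ (of possibly different Boolean functions) agree on the ball $B(z, 4s) = \{x \in \zuon : d(x,z) \le 4s\}$, where $d$ is Hamming distance over the three-letter alphabet, and suppose $\sen_\un(f) \le s$. I want to conclude $f = g$ everywhere. Suppose not, and pick a point $x$ outside the ball on which $f(x) \ne g(x)$ that is \emph{closest} to $z$; so $d(x,z) = r > 4s$. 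The idea is to walk from $x$ one coordinate at a time toward $z$, using $\un$-sensitivity to control how the value of $f$ can change, and derive a contradiction by the time we reach the ball.

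The key structural input is a $\un$-analogue of the statement ``if $f(x) \ne f(x')$ for Hamming-neighbors $x,x'$ then the differing coordinate is sensitive.'' Over $\zuon$ one should use the characterization of $\sen_\un$ via the lattice structure: changing a coordinate of $x$ either makes the input more certain (replacing $\un$ by $0$ or $1$) or less certain (replacing $0$ or $1$ by $\un$) or ``rotates'' a $0$ to a $1$. I would first establish a monotonicity/robustness lemma: if $f(x) \in \zo$ then $f$ is constant on a combinatorially large neighborhood of $x$ governed by $\sen_\un(f)$, and more precisely any path of length $< $ (something like $2 \sen_\un(f)$) of decreasing-certainty or single-bit-flip steps from a $\zo$-valued point stays within distance governed by $s$ of changing value. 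The classical proof picks $x$ with $f(x)=1$, $g(x)=0$ (say), walks toward $z$; each step toward $z$ that flips $f$'s value must hit a sensitive coordinate, and sensitivity $\le s$ combined with the fact that you only need to fix $\le 2s$ coordinates to certify the value gives that you can reach a point within distance $2s+2s = 4s$... wait, in our setting the analogous bookkeeping: a $\un$-certificate (equivalently, a prime implicant/implicate by Theorem~\ref{thm:sensthm}) has size governed by $\sen_\un(f)$ up to constants, and one walks from $x$ toward $z$ fixing at most that many coordinates on each of the $f$-side and $g$-side, landing inside $B(z,4s)$.

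Concretely, the steps I would carry out, in order: (1) define the distance $d$ on $\zuon$ and the ball, and set up the minimal-counterexample point $x$; (2) prove that $f(x), g(x) \in \zo$ (if $x$ has a $\un$ output one can push certainty in and reduce to the Boolean-output case, or handle it directly); (3) since $f(x) \in \zo$, extract from $\sen_\un(f) \le s$ a set $C_f$ of at most $2s$ coordinates (a $\un$-certificate, using $\cc_\un = \Theta(\sen_\un)$ from Theorem~\ref{thm:sensthm}) such that $f$ is pinned to $f(x)$ once those coordinates are fixed as in $x$; symmetrically, even though we do not bound $\sen_\un(g)$, use that $g$ must \emph{also} have a certificate structure or — more carefully — only $f$'s certificate is needed plus minimality of $x$; (4) modify $x$ on all coordinates \emph{outside} $C_f$ to agree with $z$, obtaining $x'$ with $d(x', z) \le |C_f| \le 2s < 4s$, hence $x' \in B(z,4s)$, so $f(x') = g(x')$; (5) argue $f(x') = f(x)$ (certificate), and that along the edit from $x$ to $x'$ at some intermediate Hamming-neighbor the value of $g$ must change from $g(x)$ to $f(x)$, locating that first change at a point still outside the ball but closer to $z$ than $x$ — contradicting minimality — unless that point already lies in $B(z, 4s)$, which is exactly where the radius $4s$ (rather than $2s$) is consumed: $2s$ for $f$'s certificate and $2s$ of slack for $g$'s witness. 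I expect the main obstacle to be step (3)/(5): the asymmetry that we only assume $\sen_\un(f) \le s$ and not $\sen_\un(g) \le s$, so the bound on $g$'s side must come purely from the minimal-counterexample choice and the triangle inequality rather than from a certificate for $g$; making the two-sided $2s+2s$ accounting rigorous — and checking that replacing $\un$ by $\zo$ values (increasing certainty) interacts correctly with the naturalness of $f$ and with distances — is where the real care is needed. A secondary subtlety is that $d$ over a three-letter alphabet does not behave like the Boolean Hamming metric under "flipping a coordinate through $\un$" (going $0 \to 1$ has distance $1$ but can be realized as $0 \to \un \to 1$ of total length $2$), so the walk must be chosen to move monotonically toward $z$ in $d$, and one must verify such a monotone walk always exists and that $f$'s value is controlled along it.
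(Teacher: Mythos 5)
There is a genuine gap, and it sits exactly where you flagged it: the final contradiction in step (5) does not close. After you fix a $\un$-certificate $C_f$ of $f$ at $x$ (of size at most $2s$ via Theorem~\ref{thm:sbscc}) and walk $x = w_0, w_1, \dots, w_m = x'$ by resetting the coordinates outside $C_f$ to agree with $z$, you know $f(w_j) = f(x)$ for every $j$, and $g(w_m) = f(w_m)$ because $w_m \in B(z,4s)$. Letting $j^*$ be the largest index with $g(w_{j^*}) \neq f(x)$, minimality of $x$ yields a contradiction only when $j^* \geq 1$; if $g$ already agrees with $f(x)$ at $w_1$ (i.e., $j^* = 0$), you have located no second disagreement point closer to $z$, and the argument terminates without a contradiction. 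Ruling that case out requires controlling $g$, but you explicitly assume nothing about $\sen_\un(g)$ --- and with a one-sided sensitivity hypothesis the statement you are aiming at is simply false: take $f \equiv 1$ (so $s = 0$ and the ball is the single point $z$), let $z$ be the all-ones input, and let $g$ be the hazard-free extension of $\OR$; then $f$ and $g$ agree on $B(z,0)$ but not globally. So ``specified'' has to mean reconstructible from the ball values using only $\sen_\un(f) \le s$, and your scheme does not produce a reconstruction procedure.

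The paper's proof is a different and much more local argument, the direct analogue of the propagation in \cite{GNSTW16}: for any $y$ at distance $r+1 > 4s$ from the center, the $r+1$ coordinates where $y$ differs from the center yield $r+1 \ge 4s+1$ neighbors of $y$ on the sphere of radius $r$; any $4s+1$ neighbors of $y$ must cover at least $2s+1$ distinct coordinates (each coordinate contributes only two neighbors over $\zuo$), at most $s$ of which are $\un$-sensitive, so $f(y)$ is forced to equal the plurality $\fplu$ of the already-determined neighbor values. This determines $f$ sphere by sphere, with no certificates, no auxiliary function $g$, and no minimal counterexample. It also explains the constant $4s$ --- the three-letter alphabet doubles the $2s$ of the Boolean argument because each coordinate has two neighbors --- whereas your ``$2s$ for $f$'s certificate plus $2s$ of slack for $g$'' accounting is not justified by anything in your outline. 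Your worry about realizing $0 \to 1$ as $0 \to \un \to 1$ is a non-issue once the metric is fixed as plain Hamming distance over the three-letter alphabet, which is what the paper uses.
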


\subsubsection{Limited Uncertainty}

We study computation in the presence of only a limited amount of uncertainty by introducing a parameter $k$ that limits the number of bits for which the oracle can respond $\un$. For metastability-containing electronic circuits, it is known that assuming only limited metastability allows constructing circuits that are significantly smaller \cite{ikenmeyer2019complexity}. We show a similar effect on decision tree size and query complexity when uncertainty is limited in Section~\ref{app:kbit}.
\begin{theorem}\label{thm:kbitsize}
    Let $T$ be a Boolean decision tree of size $s$ and depth $d$ for $f$. Then, there exists a decision tree of size at most $s^{2^{k+1} - 1}$ and depth at most $2^k \cdot d$ for $\uf$ provided that the input is guaranteed to have at most $k$ positions with value $\un$.
\end{theorem}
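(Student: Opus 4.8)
The plan is to build the decision tree for $\uf$ out of $T$ by recursion on the uncertainty budget, proving the uniform statement: for every Boolean decision tree $T'$ (computing some $f'$) and every $j \ge 0$ there is a hazard-free (ternary, with branches for answers $0$, $1$, $\un$) decision tree $H_j(T')$ that computes $\widetilde{f'}$ on all inputs with at most $j$ uncertain coordinates and satisfies $\size(H_j(T')) \le \size(T')^{2^{j+1}-1}$ and $\mathrm{depth}(H_j(T')) \le 2^{j}\cdot\mathrm{depth}(T')$; the theorem is the case $j=k$, $T'=T$. Throughout, $\size$ counts leaves (so it is additive over the two children subtrees of a root), and we may assume $T$ is read-once, since re-querying a variable is redundant and eliminating such queries only decreases $s$ and $d$. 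The base case $j=0$ is immediate: with no uncertainty the oracle never answers $\un$, so $H_0(T') := T'$ already works, with $\size(T') = \size(T')^{2^1-1}$ and $\mathrm{depth}(T') = 2^0\cdot\mathrm{depth}(T')$.

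For $j \ge 1$, if $T'$ is a leaf so is $H_j(T')$. Otherwise let the root of $T'$ query $x_i$, with subtrees $T'_0, T'_1$ on answers $0,1$; these are read-once, do not query $x_i$, compute $f'_0 := f'|_{x_i=0}$ and $f'_1 := f'|_{x_i=1}$, and satisfy $\size(T') = \size(T'_0)+\size(T'_1)$ and $\mathrm{depth}(T'_\ell) \le \mathrm{depth}(T')-1$. The root of $H_j(T')$ also queries $x_i$: on $0$ route to $H_j(T'_0)$, on $1$ route to $H_j(T'_1)$, and on $\un$ route to a tree $V$ built as follows. Once $x_i$ is known to be $\un$, we have $\res(y) = \res(y|_{x_i=0}) \cup \res(y|_{x_i=1})$, so on the remaining (at most $j-1$ uncertain) coordinates $\widetilde{f'}(y)=b$ exactly when $\widetilde{f'_0}$ and $\widetilde{f'_1}$ both evaluate to $b$, and $\widetilde{f'}(y)=\un$ otherwise. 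Accordingly, let $V$ be $H_{j-1}(T'_0)$ in which every leaf labelled $b\in\zo$ is replaced by a copy of $H_{j-1}(T'_1)$ whose leaves labelled $b$ are kept and whose other leaves are relabelled $\un$, leaving the $\un$-leaves of $H_{j-1}(T'_0)$ untouched. By the induction hypothesis for budget $j-1$, $H_{j-1}(T'_0)$ and $H_{j-1}(T'_1)$ correctly compute $\widetilde{f'_0}$ and $\widetilde{f'_1}$ for at most $j-1$ uncertain coordinates, so $V$ outputs $\widetilde{f'}$ correctly and hence so does $H_j(T')$.

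For the bounds, write $\beta_j := 2^{j+1}-1$, so that $\beta_j = 2\beta_{j-1}+1$; set $s_\ell := \size(T'_\ell)$ and $d' := \mathrm{depth}(T')$. The leaf-substitution gives $\size(V) \le \size(H_{j-1}(T'_0))\cdot\size(H_{j-1}(T'_1)) \le (s_0 s_1)^{\beta_{j-1}}$, and structural (inner) induction gives $\size(H_j(T'_\ell)) \le s_\ell^{\beta_j}$; hence $\size(H_j(T')) \le s_0^{\beta_j} + s_1^{\beta_j} + (s_0 s_1)^{\beta_{j-1}}$. Since $\beta_j = 2\beta_{j-1}+1$ and $s_1\ge1$, the single binomial term $\binom{\beta_j}{\beta_{j-1}} s_0^{\beta_{j-1}} s_1^{\beta_{j-1}+1}$ of $(s_0+s_1)^{\beta_j}$ already dominates $(s_0 s_1)^{\beta_{j-1}}$, so the sum is at most $(s_0+s_1)^{\beta_j} = \size(T')^{\beta_j}$. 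For depth, any root-to-leaf path of $V$ has length at most $\mathrm{depth}(H_{j-1}(T'_0)) + \mathrm{depth}(H_{j-1}(T'_1))$, so $\mathrm{depth}(H_j(T')) \le 1 + \max\{\mathrm{depth}(H_j(T'_0)),\ \mathrm{depth}(H_j(T'_1)),\ \mathrm{depth}(H_{j-1}(T'_0))+\mathrm{depth}(H_{j-1}(T'_1))\}$, which by induction is at most $1 + \max\{2^j(d'-1),\ 2^{j-1}(d'-1)+2^{j-1}(d'-1)\} = 1 + 2^j(d'-1) \le 2^j d'$.

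The only genuinely delicate step should be the treatment of the $\un$-answer: one has to recognise that the correct value to compute after $x_i$ is revealed to be $\un$ is the ``agreement'' of $\widetilde{f'_0}$ and $\widetilde{f'_1}$, that this is realized exactly by grafting a suitably relabelled copy of the $T'_1$-tree onto the $\zo$-leaves of the $T'_0$-tree, and --- crucially for matching the stated bounds --- that this ``compose two copies, one unit of budget spent'' operation multiplies sizes and adds depths, which is precisely what forces the recursion $\beta_j = 2\beta_{j-1}+1$ (giving exponent $2^{j+1}-1$) and the depth factor $2^j$. The remaining ingredients --- the base case, the read-once normalization, and the binomial estimate --- are routine.
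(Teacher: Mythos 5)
Your construction is exactly the paper's: the $\un$-branch at a node querying $x_i$ is the ``product'' of the budget-decremented hazard-free versions of the $0$- and $1$-subtrees (graft a relabelled copy of one onto the $\zo$-leaves of the other), yielding the same recurrences $\beta_j = 2\beta_{j-1}+1$ for the size exponent and depth doubling per unit of budget. The proof is correct and follows essentially the same approach as the paper, with a somewhat cleaner per-node accounting of the size bound via the binomial estimate.
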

For settings in which $k$ is a small constant, observe that the decision tree size is polynomial in the size of the Boolean decision tree. If $k$ is considered a parameter, observe that the depth remains fixed-parameter tractable in the language of parameterized complexity theory.

\section{Preliminaries}

\begin{definition}[\textbf{$\un$-query complexity}]
    Let $f : \zon \to \zo$ be a Boolean function, and let $\uf$ be its hazard-free extension. A \emph{deterministic decision tree}, also called a \emph{deterministic query algorithm}, computing $\uf$, is a ternary tree $T$ whose leaf nodes are labeled by elements of $\zuo$, each internal node is labeled by a variable $x_i$ where $i \in [n]$ and has three outgoing edges, labeled $0$, $1$, and $\un$. On an input $x \in \zuon$, the tree's computation proceeds from the root down to a leaf as follows: from a node labeled $x_i$, we take the outgoing edge labeled by value of $x_i$ until we reach a leaf. The label of the leaf is the output of the tree $T(x)$.

    We say that $T$ computes $\uf$ if $T(x) = \uf(x)$ for all $x\in\zuon$. The deterministic $\un$-query complexity of $f$, denoted $\depth_\un(f)$, is defined as
    \[
        \depth_\un(f) := \min_{T} \textnormal{depth}(T),
    \]
    where the minimization is over all deterministic decision trees $T$ that compute $\uf$.
\end{definition}

\begin{definition}[\textbf{$\un$-sensitivity}]          \label{def:hfsens}
    Let $f : \zon \to \zo$ be a Boolean function, and let $\uf$ be its hazard-free extension. For an $x\in \zuon$, we define the $\un$-sensitivity of $f$ at $x$ as:
    \begin{equation*}
        \sen_\un(f, x) = |\{i \mid \text{$\exists y \in \zuon$ s.t. $\uf(y) \neq \uf(x)$ and $y_j \neq x_j$ at only $j=i$}\}|
    \end{equation*}
    The elements $i$ of the set are called the \emph{$\un$-sensitive bits of $x$ for $f$}. The \emph{$\un$-sensitivity of $f$}, denoted $\sen_\un(f)$, is $\max_{x\in\zuon} \sen_\un(f, x)$.
\end{definition}

\begin{definition}[\textbf{$\un$-block sensitivity}]
    Let $f : \zon \to \zo$ be a Boolean function and let $\uf$ be its hazard-free extension. For $x \in \zuon$, the $\un$-block sensitivity of $f$ at $x$ is defined as maximum $k$ such that there are disjoint subsets $B_1, B_2, \ldots B_k \subseteq [n]$ and for each $i \in[k]$, there is a $y \in \zuon$ such that $\uf(y) \neq \uf(x)$ and $y$ differs from $x$ at exactly the positions in $B_i$. Each $B_i$ is called a $\un$-sensitive block of $f$ on input $x$. The $\un$-block sensitivity of $f$, denoted $\bsen_\un(f)$, is then defined as the maximum $\un$-block sensitivity of $f$ taken over all $x$.
\end{definition}
For $b \in \zo$, we use $\bsen_{\un, b}(f)$ to denote the maximum $\un$-block sensitivity of $f$ at $x$, over all $x \in f^{-1}(b)$. 
For a string $x \in \zuon$ and a set $B \subseteq [n]$ that is a sensitive block of $\uf$ at $x$, we abuse notation and use $x^B$ to denote an arbitrary but fixed string $y \in \zuon$ that satisfies $y_{[n] \setminus B} = x_{[n] \setminus B}$ and $\uf(y) \neq \uf(x)$.

We now formally define certificate complexity for hazard-free extensions of Boolean functions. We first define a \emph{partial assignment} as follows: By a partial assignment on $n$ bits, we mean a string $p \in \cbra{0, 1, \un, *}^n$, representing partial knowledge of a string in $\zuon$, where the $*$-entries are yet to be determined. We say a string $y \in \zuon$ is \emph{consistent} with a partial assignment $p \in \cbra{0, 1, \un, *}^n$ if $y_i = p_i$ for all $i \in [n]$ with $p_i \neq *$.

\begin{definition}[\textbf{$\un$-certificate complexity}]
    Let $f : \zon \to \zo$ and $x \in \zuon$. A partial assignment $p \in \cbra{0, 1, \un, *}^n$ is called a \emph{certificate} for $\uf$ at $x$ if 
    \begin{itemize}
        \item $x$ is consistent with $p$, and
        \item $\uf(y) = \uf(x)$ for all $y$ consistent with $p$.
    \end{itemize}
    The size of this certificate is $|p| := |{\cbra{i \in [n] : p_i \neq *}}|$. The \emph{domain of $p$} is said to be $\cbra{i \in [n] : p_i \neq *}$. The \emph{certificate complexity of $\uf$ at $x \in \zuon$}, denoted $\cc_\un(f,x)$, is the minimum size of a certificate $p$ for $\uf$ at $x$. The \emph{$\un$-certificate complexity} of $\uf$, denoted $\cc_\un(f)$, is the maximum value of $\cc_{\un}(f, x)$ over all $x$.
\end{definition}
In other words, a certificate for $\uf$ at $x$ is a set of variables of $x$ that if revealed, guarantees the output of all consistent strings with the revealed variables to be equal to $\uf(x)$.

\begin{definition}[\textbf{Randomized $\un$-query complexity}]
    A randomized decision tree is a distribution over deterministic decision trees. We say a randomized decision tree computes $\uf$ with error $1/3$ if for all $x \in \zuon$, the probability that it outputs $\uf(x)$ is at least $2/3$. The depth of a randomized decision tree is the maximum depth of a deterministic decision tree in its support.
    Define the randomized $\un$-query complexity of $f$ as follows.
    \[
        \rand_\un(f) := \min_{T}\textnormal{depth}(T),
    \]
    where the minimization is over all randomized decision trees $T$ that compute $\uf$ to error at most $1/3$.
\end{definition}

We refer the reader to~\cite{BW02} for basics of quantum query complexity.

\begin{definition}[\textbf{Quantum $\un$-query Complexity}]
    A quantum query algorithm $\mathcal{A}$ for $\uf$ begins in a fixed initial state $\ket{\psi_0}$ in a finite-dimensional Hilbert space, applies a sequence of unitaries $U_0, O_x, U_1, O_x, \dots, U_T$, and performs a measurement. Here, the initial state $\ket{\psi_0}$ and the unitaries $U_0, U_1, \dots, U_T$ are independent of the input. The unitary $O_x$ represents the ``query'' operation, and does the following for each basis state: it maps $\ket{i}\ket{b}\ket{w}$ to $\ket{i}\ket{b + x_i \mod 3}\ket{w}$ for all $i \in [n]$ (here $x_i = \un$ is interpreted as $x_i = 2$, and the last register represents workspace that is not affected by the application of a query oracle).
    
    The algorithm then performs a 3-outcome measurement on a designated output qutrit and outputs the observed value.

    We say that $\mathcal{A}$ is a bounded-error quantum query algorithm computing $\uf$ if for all $x \in \zuon$ the probability that $\uf(x)$ is output is at least $2/3$. The (bounded-error) \emph{quantum $\un$-query complexity of $\uf$}, denoted by $\qua_\un(f)$, is the least number of queries required for a quantum query algorithm to compute $\uf$ with error probability at most $1/3$.
\end{definition}

\section{Sensitivity Theorem in the Presence of Uncertainty}

In this section, we present the proof of the sensitivity theorem in the presence of uncertainty. We also present a characterization of the sensitivity of hazard-free extensions in the second subsection which results in an improved relationship between sensitivity and certificate complexity in the context of hazard-free extensions.

\subsection{Relating Sensitivity, Block Sensitivity and Certificate Complexity}

We first show that sensitivity, block sensitivity, certificate complexity, and size of the largest prime implicant/prime implicate are all asymptotically the same. This is a more formal and more precise restatement of Theorem~\ref{thm:sensthm}.
\begin{theorem}
    \label{thm:sbscc}
    Let $f:\zon\to \zo$ be a Boolean function. Let $k_1$ be the size of a largest prime implicant of $f$, and $k_2$ be the size of a largest prime implicate of $f$. Then, we have:
    \begin{equation*}
        \max\{k_1, k_2\} \leq \sen_\un(f) \leq \bsen_\un(f) \leq \cc_\un(f) \leq k_1+k_2-1.
    \end{equation*}
\end{theorem}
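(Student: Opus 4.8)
The plan is to establish the chain of four inequalities one link at a time. The middle inequalities $\sen_\un(f) \le \bsen_\un(f) \le \cc_\un(f)$ should follow almost exactly as in the classical Boolean setting, with the usual definitions replaced by their $\un$-analogues. For $\sen_\un(f) \le \bsen_\un(f)$: any collection of distinct $\un$-sensitive bits at a point $x$ gives a collection of disjoint singleton blocks, each of which is a $\un$-sensitive block by definition, so the block sensitivity at $x$ is at least the sensitivity at $x$; maximizing over $x$ gives the bound. For $\bsen_\un(f) \le \cc_\un(f)$: fix $x$ achieving the $\un$-block sensitivity with disjoint sensitive blocks $B_1, \dots, B_k$, and let $p$ be any certificate for $\uf$ at $x$. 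I claim the domain of $p$ must intersect every $B_i$: if it missed some $B_i$, then $x^{B_i}$ would be consistent with $p$ (it agrees with $x$ outside $B_i$, hence on the domain of $p$), yet $\uf(x^{B_i}) \ne \uf(x)$, contradicting that $p$ is a certificate. Since the $B_i$ are disjoint, $|p| \ge k = \bsen_\un(f, x)$, and taking the worst-case $x$ for block sensitivity while using the best certificate at that $x$ yields $\bsen_\un(f) \le \cc_\un(f)$.

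The two genuinely new links involve prime implicants/implicates. For the lower bound $\max\{k_1, k_2\} \le \sen_\un(f)$: let $S$ be a largest prime implicant, of size $k_1$, over variable set $T = \{i : S \text{ mentions } x_i\}$. Consider the string $x \in \zuon$ that is $0$/$1$ on $T$ according to the literals of $S$ and $\un$ everywhere else. Since $S$ is an implicant, every resolution of $x$ satisfies $f = 1$, so $\uf(x) = 1$. Now for each $i \in T$, flipping $x_i$ to $\un$ produces a string $x^{(i)}$; because $S$ is a \emph{prime} implicant, $S \setminus \{\text{literal on } i\}$ is not an implicant, so there is a resolution of $x^{(i)}$ on which $f = 0$, while (as $x^{(i)} \ge $ some resolution of $x$, or directly) there is also a resolution giving $f = 1$, hence $\uf(x^{(i)}) = \un \ne 1 = \uf(x)$. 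Thus every $i \in T$ is $\un$-sensitive at $x$, giving $\sen_\un(f) \ge k_1$. The prime implicate case is symmetric (output value $0$), and taking the max gives the claim.

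For the upper bound $\cc_\un(f) \le k_1 + k_2 - 1$: take an arbitrary $x \in \zuon$; I need to exhibit a certificate for $\uf$ at $x$ of size at most $k_1 + k_2 - 1$. Split into cases on $\uf(x)$. If $\uf(x) = 1$, then every resolution $y \in \res(x)$ has $f(y) = 1$, so $y$ contains some prime implicant of $f$; the key observation is that the \emph{same} prime implicant works simultaneously for a well-chosen structure — more carefully, pick a prime implicant $S$ contained in the resolution that sets all $\un$'s to (say) $0$; then $S$ restricted to the non-$\un$ coordinates of $x$, together with the coordinates where $S$ forces a $\un$-coordinate of $x$, forms part of the certificate, and one argues the remaining uncertainty is controlled by a prime implicate on the complementary side, costing at most $k_2 - 1$ extra because one variable is shared or saved by minimality. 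The symmetric argument handles $\uf(x) = 0$. The case $\uf(x) = \un$ is the interesting one: here there exist $y^0, y^1 \in \res(x)$ with $f(y^0) = 0$ and $f(y^1) = 1$; take a prime implicate $C$ falsified by $y^0$ and a prime implicant $S$ satisfied by $y^1$, restrict both to agree with the certain bits of $x$, and check that revealing the union of their domains forces any consistent string to still have both a $0$-resolution and a $1$-resolution, hence output $\un$; the $+(-1)$ saving comes from the fact that $S$ and $C$ must share at least one variable (a variable appearing in both, necessarily with opposite polarity), which is where I expect the main bookkeeping obstacle to lie.

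The hard part will be the precise accounting in the $\cc_\un(f) \le k_1 + k_2 - 1$ direction: making sure the certificate truly pins down $\uf$ on \emph{all} consistent strings (not just the resolutions of $x$), and extracting exactly one unit of savings from the overlap between the chosen prime implicant and prime implicate rather than losing it or double-counting. I would handle this by arguing that any certificate for $\uf$ at $x$ must "block" every relevant sign-change, translating the condition into a hitting-set statement over prime implicants and prime implicates, and then showing a single shared variable suffices to witness the $-1$. Everything else should be a routine adaptation of classical query-complexity arguments to the $\zuon$ domain.
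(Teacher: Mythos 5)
Your middle inequalities and the lower bound $\max\{k_1,k_2\} \le \sen_\un(f)$ match the paper's argument; your variant of the sensitivity step (flip a stable bit of $y_P$ to $\un$ and exhibit a $0$-resolution directly from primality of $S$) is valid and, if anything, slightly more direct than the paper's contrapositive. The $\un$-output case of the upper bound is also the paper's route, and the ``bookkeeping obstacle'' you flag there is not actually an obstacle: \emph{any} prime implicant $S$ and \emph{any} prime implicate $C$ of the same function must share a variable, since otherwise one could simultaneously satisfy all literals of $S$ and falsify all literals of $C$ at a single point, forcing $f$ to be both $1$ and $0$ there (no opposite-polarity condition is needed). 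So the union of their domains has size at most $k_1+k_2-1$, and the revealed values certify output $\un$ because any consistent $y$ still has a resolution satisfying $S$ and a resolution falsifying $C$.

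The genuine gap is in your treatment of the stable cases $\uf(x)=b$ with $b\in\zo$. Your plan --- take a prime implicant inside the all-zeros resolution of $x$ and then ``control the remaining uncertainty'' with a prime implicate costing $k_2-1$ extra --- does not work: revealing positions of an implicate says nothing about why the output is $1$, and a prime implicant of one particular resolution may use $\un$-coordinates of $x$, whose revealed value $\un$ certifies nothing. The missing observation is that when $\uf(x)=1$, the set of literals given by the \emph{stable} coordinates of $x$ is itself an implicant of $f$ (fixing them forces $f=1$ on all of $\res(x)$), hence contains a prime implicant $P$ of size at most $k_1$ supported entirely on stable coordinates of $x$. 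Revealing exactly those coordinates is already a full certificate: any $y\in\zuon$ consistent with it has every literal of $P$ set to its satisfying value, so every resolution of $y$ satisfies $P$ and $\uf(y)=1$. The case $\uf(x)=0$ is symmetric with a prime implicate of size at most $k_2$. This is what the paper does; no implicate (resp.\ implicant) enters the $b=1$ (resp.\ $b=0$) case at all, and the full budget $k_1+k_2-1$ is needed only for $\un$-inputs.
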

\begin{proof}
    The inequalities $\sen_\un(f) \leq \bsen_\un(f) \leq \cc_\un(f)$ follow from definitions similar to their Boolean counterparts (see, for example,~\cite[Proposition 1]{BW02}).

    To show the first inequality, we crucially use our three-valued domain. Let $P$ be a prime implicant of $f$. Define the input $y_P\in \zuon$ to be $1$ in those positions where $P$ contains the corresponding positive literal, $0$ where $P$ has the corresponding negated literal, and $\un$ everywhere else. We claim that each index in $\cbra{i \in [n] : x_i \in P \textnormal{ or } \neg x_i \in P}$ is sensitive for $\uf$ at $y_P$. To see this, first observe that $\uf(y_P) = 1$ since $P$ being an implicant means $f$ is $1$ on all resolutions of $y_P$. Let $x_i\in P$ be a positive literal. Then, observe that if setting the $i$'th bit of $y_P$ to a $0$ does not change the value of $\uf$, then changing the $i$'th bit to $\un$ would also not change the value of $\uf$. This means $P\setminus \{x_i\}$ would also be an implicant of $f$, contradicting the fact that $P$ was a \emph{prime} implicant. The case when $x_i$ appears as a negated literal in $P$ is similar. Thus we have $\sen_\un(f) \geq k_1$. A similar argument can be made for prime implicates as well, showing that $\sen_\un(f) \geq k_2$. This proves the first inequality in the theorem.

    To prove the last inequality, let $x\in \zuon$ be any input to $\uf$. Observe that if $\uf(x)=0$, then the prover can pick an implicate (of size at most $k_2$) that has all literals set to $0$ in $x$, and reveal those values to the verifier. If $\uf(x)=1$, then the prover reveals an implicant (of size at most $k_1$) that is $1$. If $\uf(x)=\un$, then there must exist inputs $x^0,x^1\in \res(x)$ such that $f(x^0)=0$ and $f(x^1)=1$. Since both $x^0$ and $x^1$ are resolutions of $x$, it must hold that for all $i\in [n]$ where $x_i\in\{0,1\}$, $x_i^0=x_i^1=x_i$. i.e., $x^0$ and $x^1$ differ from $x$ only in positions where $x$ is $\un$. Hence, a prime implicant that is $1$ in $x^1$ must contain a $\un$ in $x$. Similarly, a prime implicate that is $0$ in $x^0$ must contain a $\un$ in $x$. Thus, there exists a prime implicant and a prime implicate of $f$ both of which contain a literal assigned $\un$ in $x$. The prover reveals the bits in such a prime implicant and prime implicate. Since every prime implicant and every prime implicate have at least one common variable, the prover reveals only $k_1+k_2-1$ values. Why should this convince the verifier? Since $x^0$, $x^1$, and $x$ coincide on positions where $x$ has $0$ or $1$, it is possible to set the $\un$ in the revealed prime implicant to values that make the function output $1$. Similarly, it is possible to make the function output $0$ by filling in the values to the $\un$ positions in the prime implicate revealed. Thus, the verifier can check that there are indeed two valid resolutions that give different outputs.
\end{proof}

\begin{remark}
    Notice that Theorem~\ref{thm:sbscc} shows that in our setting, the parameters sensitivity, block sensitivity, and certificate complexity are equivalent (up to a multiplicative constant) to the largest prime implicants/prime implicates. In the Boolean world, for certificate complexity we get a tighter characterization in terms of CNF/DNF width, which is analogous to prime implicant/prime implicant size here. On the other hand, in the Boolean world, there is a quadratic separation between sensitivity and block sensitivity~\cite{Rubinstein}.
\end{remark}

We now proceed to show that similar to their Boolean counterparts, the deterministic $\un$-query complexity is polynomially upper bounded by $\un$-sensitivity and deterministic, randomized, and quantum $\un$-query complexities are all polynomially related to each other. We do this in two parts:
\begin{enumerate}[(i)]
    \item $\depth_\un(f) \leq \cc_\un(f) \cdot \bsen_\un(f)$.
    \item $\bsen_\un(f) = O(\rand_\un(f))$, ${\bsen_\un(f)} = O(\qua_\un(f)^2)$.
\end{enumerate}

We start by showing $\depth_\un(f) = O(\cc_\un(f) \cdot \bsen_\un(f))$. Algorithm \ref{algo:UpperBoundu} is a deterministic query algorithm that achieves this bound, as shown in Theorem \ref{thm: algo upper bound}. Following this, we derive (ii) in Lemma \ref{lem: unstable randomized quantum bs}. The final relationships among the three $\un$-query complexities is presented in Theorem~\ref{thm:hazard-free-relationships}.

\begin{algorithm}[ht]
    \begin{algorithmic}[1]
        \State \textbf{Given:} Known $f : \zon \to \zo$; Query access to an unknown $x \in \zuo^n$.
        \State \textbf{Goal:} Output $\uf(x)$
        \State Initialize partial assignment $x^*\gets *^n$
        \For{$i \gets 1$ to $\max(\bsen_{\un,0}(\uf), \bsen_{\un,1}(\uf))$}\label{line:For}
            \State {$c \gets$ a minimum $\un$-certificate of $\uf$ at an arbitrary $x \in \uf^{-1}(\un)$ consistent with $x^*$}\label{line: firstCertificate1}
            \State $C \gets$ domain of $c$
            \State Query all variables in $C$
            \State Update $x^*$ with the answers from the oracle.
            \If{ $x^*$ is a $\un$-certificate of $\uf$}\label{line:Ifu}
                \State \textbf{Output} $\un$
            \EndIf
            \If{ $x^*$ is a $0$-certificate of $\uf$}\label{line:If0}
                \State \textbf{Output} 0
            \EndIf     
            \If{ $x^*$ is a $1$-certificate of $\uf$}\label{line:If1}
                \State \textbf{Output} 1
            \EndIf
       
        \EndFor

        \State \textbf{Output $\un$}\label{line:lastOutputu}

        \caption{$\un$-query algorithm}
        \label{algo:UpperBoundu}
    \end{algorithmic}
\end{algorithm}

We will need the following observations to prove correctness of the algorithm:
% \begin{observation}
%   \label{obs:uu2u}
%     Let $f : \zon \to \zo$ and let $x \in \uf^{-1}(\un)$. Then,
%     \[
%         \mathsf{C}_{\un, \un}(f) \leq 2 \cc_\un(f).
%     \]
% \end{observation}
% \begin{proof}
%     Let $x \in \uf^{-1}(\un)$. By definition, this means there exist strings $y, z \in \zon$ such that $y \in \res(x) \cap \uf^{-1}(1)$ and $z \in \res(x) \cap \uf^{-1}(0)$. Let $c_y, c_z$ be a 0-certificate for $\uf$ at $y$ and a 1-certificate for $\uf$ at $z$, respectively. Let their domains be $C_y, C_z$, respectively. Let $p \in \cbra{0, 1, \un, *}^n$ be the partial assignment defined by
%     \[
%         p_i = \begin{cases}
%             \un & i \in C_y \cup C_z,\\
%             * & \textnormal{otherwise}.
%         \end{cases}
%     \]
%     Note that there is a setting of the $\un$-variables in $p$ (set the variables in $C_y$ consistently with $c_y$, and leave the other variables as $\un$) that forces $\uf$ to 0, and another setting to the $\un$-variables in $p$ (set the variables in $C_z$ consistently with $c_z$, and leave the other variables as $\un$) that forces $\uf$ to 1. Thus, $p$ certifies that $\uf(x) = \un$, and the number of non-$*$ variables in it is at most $2\cc_\un(f)$.
% \end{proof}

\begin{observation}\label{obs:existsucert}
    Let $x^*$ be a partial assignment that does not contain a $b$-certificate of $\uf$ for any $b\in \zuo$. Then there exists a $z\in \uf^{-1}(\un)$ that is consistent with $x^*$.        
\end{observation}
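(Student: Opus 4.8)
The plan is to read off from the hypothesis that $\uf$ is not constant on the set of strings consistent with $x^*$, and then, in the one case that is not immediate, to exhibit an unknown-valued consistent string by ``merging'' two consistent strings of opposite value. First I would note that a partial assignment $p$ is a $b$-certificate of $\uf$ (for $b\in\zuo$) exactly when every $y\in\zuon$ consistent with $p$ has $\uf(y)=b$ --- there is always at least one such $y$, obtained by filling each $*$ with $\un$. So ``$x^*$ is not a $0$-certificate'' yields a consistent $y^1$ with $\uf(y^1)\neq 0$, and ``$x^*$ is not a $1$-certificate'' yields a consistent $y^0$ with $\uf(y^0)\neq 1$. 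If $\uf(y^0)=\un$ or $\uf(y^1)=\un$ we are done; otherwise $\uf(y^0)=0$ and $\uf(y^1)=1$. (Curiously, the hypothesis that $x^*$ is not a $\un$-certificate is not needed.)

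In the remaining case, define $z\in\zuon$ by $z_i=y^0_i$ at coordinates where $y^0_i=y^1_i$, and $z_i=\un$ elsewhere. Then $z$ is consistent with $x^*$: at every coordinate $i$ with $x^*_i\neq *$, consistency of $y^0,y^1$ forces $y^0_i=y^1_i=x^*_i$, whence $z_i=x^*_i$. Moreover $z\leq y^0$ and $z\leq y^1$ in the coordinatewise instability order, since at each coordinate $z_i$ is either the common value of $y^0_i,y^1_i$ or $z_i=\un$, and $\un$ lies below everything. Since $\uf$ is a natural function (the characterization recalled after Definition~\ref{def:hfe}) it is monotone for this order, so $\uf(z)\leq\uf(y^0)=0$ and $\uf(z)\leq\uf(y^1)=1$; the only element of $\zuo$ below both $0$ and $1$ is $\un$, so $\uf(z)=\un$. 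Thus $z\in\uf^{-1}(\un)$ is consistent with $x^*$, as required.

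If one would rather not invoke the naturality characterization, monotonicity of $\uf$ can be verified in a line: $a\leq b$ implies $\res(b)\subseteq\res(a)$ (replacing a $0$ or $1$ by $\un$ only enlarges the set of resolutions), so if $f$ is constant on $\res(b)$ with value $c$ then $\uf(a)\in\{c,\un\}$, and if $f$ is nonconstant on $\res(b)$ then also on $\res(a)$, giving $\uf(a)=\un$; in every case $\uf(a)\leq\uf(b)$. I do not foresee a genuine obstacle here; the only point requiring care is the correct dictionary between ``$p$ is a $b$-certificate'' and ``every completion of $p$ has $\uf$-value $b$'', together with remembering that completions range over $\zuon$ rather than $\zon$.
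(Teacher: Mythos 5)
Your proof is correct and follows essentially the same route as the paper's: extract from the non-certificate hypothesis two consistent strings witnessing different values of $\uf$, and produce a consistent string in $\uf^{-1}(\un)$ by passing to a less certain common refinement (the paper simply replaces every $*$ by $\un$, while you take the meet of the two witnesses; both work for the same monotonicity reason). Your version is in fact slightly more careful than the paper's one-line argument, handling explicitly the case where one of the witnesses already evaluates to $\un$.
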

\begin{proof}
    Since $x^*$ does not already contain a $b$-certificate of $\uf$ for any $b\in \zuo$, this means that for each $b \in \zuo$, there exists a setting to the $*$-variables in the partial assignment to yield a $b$-input to $\uf$. Specifically, it must be the case that there exists some assignment to the undetermined $*$'s that sets an implicant of $f$ to $1$, and some assignment that sets an implicate of $f$ to $0$. Define the string $z$ to be the same as $x^*$ except with $*$'s replaced with $\un$.
\end{proof}

\begin{observation}\label{obs: unstable certificate no u}
    Let $f : \zon \to \zo$, $b \in \zo$ and $x \in \uf^{-1}(b)$. Let $c$ be a minimal certificate of $\uf$ at $x$, and let its domain be $C$. Then for all $i\in C$, we have $c_i \neq \un$.
\end{observation}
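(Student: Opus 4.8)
The plan is to argue by contradiction, using the fact that $\uf$ is monotone with respect to the instability partial order. Let $c$ be a minimal certificate of $\uf$ at $x$, where $\uf(x) = b \in \zo$, and suppose toward a contradiction that $c_i = \un$ for some $i \in C$. I would form the partial assignment $c'$ obtained from $c$ by replacing its $i$-th entry with $*$, and claim that $c'$ is still a certificate of $\uf$ at $x$. Since $x$ is still consistent with $c'$ and $c'$ has strictly smaller domain than $c$, this contradicts minimality and finishes the proof.

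To verify the claim, I would take an arbitrary $y \in \zuon$ consistent with $c'$ and show $\uf(y) = b$. Such a $y$ agrees with $c$ on every coordinate in $C$ except possibly $i$. In the case $y_i = \un$, the string $y$ is in fact consistent with $c$ itself, so $\uf(y) = b$ by the certificate property of $c$. In the case $y_i \in \zo$, let $y'$ be the string obtained from $y$ by resetting its $i$-th coordinate to $\un$; then $y'$ agrees with $c$ on all of $C$ (including position $i$, where both are $\un$), so $y'$ is consistent with $c$ and $\uf(y') = b$.

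It remains to pass from $\uf(y') = b$ to $\uf(y) = b$. Here I would invoke that $\uf$ is natural: this is immediate from Definition~\ref{def:hfe}, since $y' \leq y$ implies $\res(y) \subseteq \res(y')$, so $f$ being constant $b$ on $\res(y')$ forces it to be constant $b$ on $\res(y)$, giving $\uf(y) \geq \uf(y') = b$. Because $b \in \zo$ is a maximal element of the instability order (its only upper bound is itself), this yields $\uf(y) = b$, completing the case analysis. There is no delicate step here; the only points requiring care are keeping the two cases ($y_i = \un$ versus $y_i \in \zo$) separate and recalling that $0$ and $1$ are maximal in $\zuo$, so that the inequality $b \leq \uf(y)$ pins down $\uf(y)$ exactly.
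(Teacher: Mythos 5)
Your proposal is correct and follows the same route as the paper: delete the $\un$-coordinate $i$ from the domain of $c$ and show the resulting $c'$ is still a certificate, contradicting minimality. The paper compresses the verification into the phrase ``by the definition of a certificate, and $\uf$,'' whereas you supply the missing details explicitly (the case split on $y_i$ and the observation that $\res(y) \subseteq \res(y')$ forces $\uf(y) = \uf(y') = b$), which is exactly the right justification.
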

\begin{proof}
Assume towards a contradiction an index $i \in [n]$ in a minimal certificate $c$ for $\uf$ at $x$ with $c_i = \un$. By the definition of a certificate, and $\uf$, the partial assignment $c'$ obtained by removing $i$ from the domain of $c$ is such that all strings $x \in \zuo^n$ consistent with $c'$ satisfy $\uf(x) = b$. This shows that $c'$ is also a certificate for $\uf$ at $x$, contradicting the minimality of $c$.
\end{proof}

\begin{lemma}
    If Algorithm \ref{algo:UpperBoundu} reaches Line \ref{line:lastOutputu}, then every $1$-input of $\uf$, and every $0$-input of $\uf$ is inconsistent with the partial assignment $x^*$ (at Line \ref{line:lastOutputu}).
    \label{lem:firstInconsistency1}
\end{lemma}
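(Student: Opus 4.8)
The plan is to argue by contradiction. Suppose the algorithm reaches Line~\ref{line:lastOutputu} but, nevertheless, some $b$-input $w$ of $\uf$ with $b\in\zo$ is consistent with the partial assignment $x^*$ present at that line. I will produce strictly more than $\bsen_{\un,b}(\uf)$ pairwise disjoint $\un$-sensitive blocks of $\uf$ at $w$, contradicting the choice of the loop bound $N:=\max(\bsen_{\un,0}(\uf),\bsen_{\un,1}(\uf))$.

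First I would fix notation for the run. For $i\in[N]$, let $\pi_i$ be the value of $x^*$ at the end of iteration $i$ (so $\pi_0=*^n$ and $\pi_N$ is the assignment at Line~\ref{line:lastOutputu}), let $z_i\in\uf^{-1}(\un)$ be the $\un$-input selected on Line~\ref{line: firstCertificate1} of iteration $i$, let $c_i$ be the $\un$-certificate of $\uf$ at $z_i$ chosen there, and let $C_i=\mathrm{dom}(c_i)$ be the set of variables queried in iteration $i$. Two structural facts are immediate and will be used repeatedly: the domains form a nested chain $\mathrm{dom}(\pi_0)\subseteq\cdots\subseteq\mathrm{dom}(\pi_N)$, with each $\pi_i$ agreeing with $\pi_{i-1}$ on $\mathrm{dom}(\pi_{i-1})$ (only oracle answers are ever recorded); and because the algorithm did not halt inside the loop, $\pi_N$ is not a $b'$-certificate of $\uf$ for any $b'\in\zuo$, so by Observation~\ref{obs:existsucert} there is a $\un$-input $z$ consistent with $\pi_N$.

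Now the main construction, applied to the hypothetical $w$. For each $i\in[N]$, let $w^{(i)}$ be obtained from $w$ by overwriting its coordinates on $C_i$ with those of $c_i$; since $z_i$ is consistent with $c_i$, this is the same as overwriting them with those of $z_i$. Then $w^{(i)}$ is consistent with the $\un$-certificate $c_i$, so $\uf(w^{(i)})=\un\neq b=\uf(w)$, and hence $B_i:=\{j:w_j\neq w^{(i)}_j\}\subseteq C_i$ is a $\un$-sensitive block of $\uf$ at $w$ (nonempty, since otherwise $w^{(i)}=w$ would force $\uf(w)=\un$). Likewise $B_{N+1}:=\{j:w_j\neq z_j\}$ is a nonempty $\un$-sensitive block of $\uf$ at $w$, witnessed by $z$. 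To see disjointness: if $j\in B_i$ for some $i\le N$, then $j\notin\mathrm{dom}(\pi_{i-1})$, for otherwise $z_i$ (consistent with $\pi_{i-1}$) and $w$ (consistent with $\pi_N$, hence with $\pi_{i-1}$) would agree at $j$; thus $B_i\subseteq\mathrm{dom}(\pi_i)\setminus\mathrm{dom}(\pi_{i-1})$, and these set-differences along a nested chain are pairwise disjoint. Moreover $B_i\subseteq\mathrm{dom}(\pi_i)\subseteq\mathrm{dom}(\pi_N)$, whereas $B_{N+1}\subseteq[n]\setminus\mathrm{dom}(\pi_N)$ since $w$ and $z$ are both consistent with $\pi_N$. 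Hence $\bsen_\un(\uf,w)\ge N+1$, contradicting $\bsen_\un(\uf,w)\le\bsen_{\un,b}(\uf)\le N$. The identical argument with a $0$-input in place of $w$ handles that case, completing the proof.

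I expect the pairwise disjointness of $B_1,\dots,B_N$ to be the delicate step; it uses exactly the fact that each $z_i$ is chosen consistent with the partial information gathered so far, which forces every already-revealed coordinate of $C_i$ to carry in $z_i$ the same committed value it carries in $w$, and hence to be non-sensitive. One should also notice that, unlike the $0$- and $1$-certificates of Observation~\ref{obs: unstable certificate no u}, a $\un$-certificate $c_i$ may contain $\un$-entries and may re-query previously revealed variables; neither of these breaks the argument, but it is worth checking. Finally, it bears emphasizing that the $N$ blocks from the loop iterations alone would give only $\bsen_\un(\uf,w)\ge N$, which is not yet a contradiction — the extra block $B_{N+1}$, available precisely because $\pi_N$ is not any certificate, is what tips the count past $\bsen_{\un,b}(\uf)$.
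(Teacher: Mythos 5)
Your proposal is correct and follows essentially the same route as the paper's proof: the same blocks $B_i$ extracted from the $\un$-certificates queried in each iteration (yours are defined relative to the hypothetical consistent input $w$, the paper's relative to the final partial assignment, but these coincide since $C_i$ lies in the revealed domain), the same disjointness argument via consistency of each $z_i$ with the information gathered so far, and the same extra $(N+1)$-st block from Observation~\ref{obs:existsucert} to exceed $\bsen_{\un,b}(\uf)$. Your packaging of disjointness as $B_i\subseteq\mathrm{dom}(\pi_i)\setminus\mathrm{dom}(\pi_{i-1})$ is a clean restatement of the paper's argument, not a different one.
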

\begin{proof}
  Let $k$ denote $\max(\bsen_{\un,0}(\uf), \bsen_{\un,1}(\uf))$, the number of iterations of the \textbf{for} loop on Line \ref{line:For}. Assume the algorithm reached Line \ref{line:lastOutputu}, and let $x'$ be the partial assignment $x^*$ constructed by the algorithm when it reaches Line \ref{line:lastOutputu}.

  Suppose, for the sake of contradiction, there exists an input $y\in \uf^{-1}(1)$ that is consistent with $x'$. Since the algorithm reached Line \ref{line:lastOutputu}, it must be the case that the partial assignment $x'$ constructed does not contain a $b$-certificate for any $b\in\zuo$ because otherwise one of the \textbf{if} conditions between Lines \ref{line:Ifu} and \ref{line:If0} would have terminated the algorithm. Then using Observation \ref{obs:existsucert}, there must also exist a $\un$-input consistent with $x'$. Hence every time Line \ref{line: firstCertificate1} was executed, there was indeed an $x \in \uf^{-1}(\un)$ consistent with $x^*$. Suppose the $\un$-certificates used during the run of the \textbf{for} loop on Line \ref{line:For} were $c_1, \ldots, c_k$, and their respective domains were $C_1,\ldots, C_k$. 

 The fact that neither of the \textbf{if} conditions fired means that every time the oracle was queried, the replies differed from the $\un$-certificate being queried in at least one index each time. Let $B_i\subseteq C_i$ be the set of positions in $C_i$ where $x'$ differs from $c_i$. By the observations above, each $B_i$ is non-empty. Observe that since $c_{i+1}$ is chosen to be consistent with $x^*$, it must be the case that $c_{i+1}$ and $x^*$ agree on all positions in $C_i$. Hence $B_{i+1}$ is disjoint from $B_{i}$. With the same reasoning, we can conclude that $B_{i+1}$ is disjoint from every $B_j$ where $j\le i$.

Observe that for each $i\in [k]$, there is a setting to the bits in $B_i$ such that $x'$ becomes a $\un$-input -- simply take the setting of these bits from $c_i$, which is a $\un$-certificate. More formally, for each $i\in [k]$, there exists a string $\alpha_i\in \zo^{|B_i|}$ such that $f(x'|_{B_i\gets \alpha_i})=\un$. 

Since $y$ is consistent with $x'$, it agrees with $x'$ on all positions where $x'\neq *$. This means the previous observation holds for $y$ too. That is, for each $i\in [k]$, there exists strings $\alpha_i\in \zo^{|B_i|}$ such that $f(y|_{B_i\gets \alpha_i})=\un$. Recall that $y\in \uf^{-1}(1)$, and hence the sets $B_1,\ldots, B_k$ form a collection of disjoint sensitive blocks for $\uf$ at $y$. Further, since the algorithm has not found a $1$-certificate (or $0$-certificate) yet, it must be the case that there is some $\un$-input consistent with $x'$ by Observation~\ref{obs:existsucert}. This means there is yet another disjoint block $B_{k+1}$ that is sensitive for $\uf$ at $y$. But this is a contradiction since the maximum, over all $1$-inputs of $\uf$, number of disjoint sensitive blocks is $\bsen_{\un,1}(\uf) = k < k+1$.

A nearly identical proof can be used to show that every $0$-input is inconsistent with $x'$.
\end{proof}

\begin{theorem}\label{thm: algo upper bound}
    Algorithm~\ref{algo:UpperBoundu} correctly computes $\uf$, and makes at most $O(\cc_\un(f)\bsen_\un(f))$ queries. Thus $\depth_\un(f) = O(\cc_\un(f) \cdot \bsen_\un(f))$.
\end{theorem}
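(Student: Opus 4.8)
The plan is to verify the two claims of the theorem separately: the query bound and the correctness of the output. The query bound is the easy half. The \textbf{for} loop on Line~\ref{line:For} executes exactly $k := \max(\bsen_{\un,0}(\uf),\bsen_{\un,1}(\uf))$ times, and $k \le \bsen_\un(f)$ since each $\bsen_{\un,b}(f)$ is a maximum of $\un$-block sensitivity over a subset of inputs. Within each iteration the only queries issued are to the variables in the domain $C$ of a \emph{minimum} $\un$-certificate $c$ of $\uf$ at some $\un$-input, so $|C|\le\cc_\un(f)$. Hence the total number of queries is at most $k\cdot\cc_\un(f)\le\bsen_\un(f)\cdot\cc_\un(f)$; since the algorithm describes a deterministic decision tree, this gives $\depth_\un(f)=O(\cc_\un(f)\bsen_\un(f))$ once correctness is established. (Re-querying bits already recorded in $x^*$ only decreases this count, so it does not matter that consecutive certificate domains may overlap with previously queried variables.)

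For correctness, I would first record two invariants. The algorithm is \emph{well-defined}: whenever Line~\ref{line: firstCertificate1} is reached, the current $x^*$ is not a $b$-certificate of $\uf$ for any $b\in\zuo$ --- otherwise one of the \textbf{if} statements on Lines~\ref{line:Ifu}--\ref{line:If1} in the preceding iteration would have halted the algorithm --- so by Observation~\ref{obs:existsucert} there is indeed a $\un$-input consistent with $x^*$ from which $c$ can be drawn. (The sole degenerate case is $f$ constant, where $k=0$, the loop is empty, and the statement is trivial; this is handled by a preliminary check of whether $*^n$ is a certificate.) Second, every queried variable is recorded in $x^*$ with its \emph{true} value, so the actual input $x$ is always consistent with the current $x^*$.

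Now I would do a case analysis on where the algorithm produces output. If it halts inside the loop and outputs $b\in\zuo$ because $x^*$ is a $b$-certificate of $\uf$, then since $x$ is consistent with $x^*$, the definition of a certificate immediately gives $\uf(x)=b$. If instead the algorithm reaches Line~\ref{line:lastOutputu}, I invoke Lemma~\ref{lem:firstInconsistency1}: every $1$-input and every $0$-input of $\uf$ is inconsistent with the final $x^*$. As $x$ is consistent with $x^*$, it is neither a $0$-input nor a $1$-input, so $\uf(x)=\un$, matching the output. This completes correctness.

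The conceptual heavy lifting has already been front-loaded into Lemma~\ref{lem:firstInconsistency1}, whose proof supplies the key argument that the non-firing of the \textbf{if}-conditions forces $k+1$ pairwise-disjoint $\un$-sensitive blocks at a purported $1$-input (or $0$-input), contradicting $\bsen_{\un,1}(\uf)=k$ (resp.\ $\bsen_{\un,0}(\uf)=k$). Consequently the theorem itself is essentially bookkeeping, and I expect no genuine obstacle; the only points requiring mild care are the well-definedness argument above (it relies precisely on Observation~\ref{obs:existsucert}) and the degenerate constant-function case.
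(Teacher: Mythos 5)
Your proposal is correct and follows essentially the same route as the paper: the query bound is the same iteration-count-times-certificate-size bookkeeping, and correctness is reduced to the same case analysis, with the Line~\ref{line:lastOutputu} case deferred to Lemma~\ref{lem:firstInconsistency1} exactly as the paper does. Your explicit remarks on well-definedness via Observation~\ref{obs:existsucert} and on the constant-function edge case are minor additions the paper folds into the lemma's proof (or leaves implicit), not a different argument.
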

\begin{proof}
    If the algorithm outputs a value in $\zo$, then it must have passed the corresponding \textbf{if} condition (either in Line~\ref{line:If0}, or in Line~\ref{line:If1} and is trivially correct. If the algorithm outputs $\un$, then either the \textbf{if} condition on Line \ref{line:Ifu} must have passed, or Line \ref{line:lastOutputu} must have been reached. In the former case, the correctness of the algorithm is trivial. In the latter case, from Claim \ref{lem:firstInconsistency1}, we conclude that every $0$-input and every $1$-input of $\uf$ must be inconsistent with the partial assignment $x^*$ constructed by the algorithm when it arrives at Line \ref{line:lastOutputu}. This means that every $x \in \zuo^n$ that is consistent with $x^*$ (in particular, the unknown input $x$) must satisfy $\uf(x) = \un$, which concludes the proof of correctness.

    The \textbf{for} loop runs for $\max(\bsen_{\un,0}(f), \bsen_{\un,1}(f))= O(\bsen_\un(f))$ many iterations, and at most $\cc_\un(f)$ many bits are queried in each iteration.
    %Recall from Observation \ref{obs:uu2u} that $\mathsf{C}_{\un, \un}(f)\leq 2\cc_\un(f)$. So the \textbf{for} loop makes a total of $O(\cc_\un(f)\bsen_\un(f)) = O(\cc_\un(f)^2)$ many queries. 
\end{proof}

\subsection{Improved Bounds for Sensitivity and Certificate Complexity}
%for Hazard-free Extensions}
\label{app:characterize-sen}

In this subsection, we explore the $\un$ complexity measures further and initially prove the following characterization for $\un$-sensitivity. We use $\sen_\un^{(b)}$ where $b\in\zuo$ to denote the $\un$-sensitivity for inputs that yield a $b$ output.
\begin{theorem}\label{thm:senstr}
    Let $\uf$ be the hazard-free extension of some Boolean function $f$. $\sen_\un^{(1)}(\uf, x)$ is maximized when $x$ is a prime implicant, $\sen_\un^{(0)}(\uf, x)$ is maximized when $x$ is a prime implicate, and $\sen_\un^{(\un)}(\uf, x)$ is maximized when $x$ contains exactly one $\un$.
\end{theorem}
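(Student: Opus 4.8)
The plan is to prove all three statements by the same \emph{monotonization} argument: for each output value $b\in\zuo$, I would show that any $x\in\uf^{-1}(b)$ can be transformed into some $x^\star\in\uf^{-1}(b)$ of the desired special form --- a prime implicant, a prime implicate, or a string with exactly one $\un$ --- by a sequence of single-coordinate changes along which $\sen_\un(\uf,\cdot)$ never decreases. Applying this to a maximizer of $\sen_\un(\uf,\cdot)$ over $\uf^{-1}(b)$ then gives the theorem. Throughout I would write $x^{i\to v}$ for the string obtained from $x$ by overwriting coordinate $i$ with $v\in\zuo$, and use only naturality of $\uf$ (from the paper's definition of the hazard-free extension) together with the two elementary facts $\res(x^{i\to\un})=\res(x)\sqcup\res(x^{i\to\overline{x_i}})$ when $x_i\in\zo$, and $\res(x^{j\to b})=\{y\in\res(x):y_j=b\}\subseteq\res(x)$ when $x_j=\un$ and $b\in\zo$.

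The first step is a local description of $\un$-sensitive coordinates. If $\uf(x)=1$: a coordinate $i$ with $x_i=\un$ is never $\un$-sensitive, because both neighbours $x^{i\to0},x^{i\to1}$ dominate $x$ and hence output $1$ by naturality; and a coordinate $i$ with $x_i\in\zo$ is $\un$-sensitive iff $\uf(x^{i\to\un})=\un$, because $x^{i\to\un}\le x$ forces $\uf(x^{i\to\un})\in\{1,\un\}$, and if it equals $1$ then $x^{i\to0}$ and $x^{i\to1}$ both dominate $x^{i\to\un}$ and so output $1$, leaving $i$ insensitive. The case $\uf(x)=0$ is exactly dual. If $\uf(x)=\un$: a coordinate $i$ with $x_i\in\zo$ is $\un$-sensitive iff $\uf(x^{i\to\overline{x_i}})\in\zo$ (its other neighbour $x^{i\to\un}$ is dominated by $x$, hence still outputs $\un$), and a coordinate $i$ with $x_i=\un$ is $\un$-sensitive iff $\uf(x^{i\to0})\in\zo$ or $\uf(x^{i\to1})\in\zo$.

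For $b=1$, the reduction step is: while $x$ has a non-$\un$-sensitive $\zo$-coordinate $i$ --- equivalently, $\uf(x^{i\to\un})=1$ --- replace $x$ by $x^{i\to\un}$. The output stays $1$, and every previously $\un$-sensitive coordinate $j$ stays $\un$-sensitive: it is a $\zo$-coordinate with $\uf(x^{j\to\un})=\un$, and since $(x^{i\to\un})^{j\to\un}\le x^{j\to\un}$ the former still outputs $\un$. So $\sen_\un(\uf,\cdot)$ does not drop, while the number of $\zo$-coordinates strictly decreases; the process thus halts at some $x^\star\in\uf^{-1}(1)$ all of whose $\zo$-coordinates are $\un$-sensitive. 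Then $\{i:x^\star_i\in\zo\}$ with the signs prescribed by $x^\star$ is a prime implicant of $f$: it is an implicant since $\uf(x^\star)=1$, and it is minimal since dropping the literal at any such $i$ amounts to passing to $x^{\star,i\to\un}$, on which $\uf=\un$. The case $b=0$ is dual and produces a prime implicate.

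For $b=\un$, the reduction step is: pick a $\un$-coordinate $j$ and a bit $b\in\zo$ with $\uf(x^{j\to b})=\un$ and replace $x$ by $x^{j\to b}$, decreasing the number of $\un$'s by one while keeping the output $\un$; since $\uf(x)=\un$ forces $x$ to have at least one $\un$-coordinate, iterating terminates at a string with exactly one $\un$. Two points need checking. (a) Such $j,b$ exist whenever $x$ has at least two $\un$-coordinates: otherwise $\uf(x^{j\to0}),\uf(x^{j\to1})\in\zo$ for every $\un$-coordinate $j$, so by $\res(x)=\res(x^{j\to0})\sqcup\res(x^{j\to1})$ and $\uf(x)=\un$ the $f$-value of a resolution of $x$ is determined by its $j$-th bit, for every $\un$-coordinate $j$; taking a $0$-resolution $r^0$, a $1$-resolution $r^1$ and distinct $\un$-coordinates $j_1,j_2$, the resolution obtained from $r^0$ by flipping coordinate $j_1$ to agree with $r^1$ must then have $f$-value both $1$ (it agrees with $r^1$ at the $\un$-coordinate $j_1$) and $0$ (it agrees with $r^0$ at the $\un$-coordinate $j_2$), a contradiction. (b) The step does not decrease $\sen_\un$: for a $\un$-sensitive coordinate $i\ne j$ with witnessing neighbour $y$ (so $y$ differs from $x$ only at $i$ and $\uf(y)\in\zo$), consider the corresponding neighbour $y^{j\to b}$ of $x^{j\to b}$; since $y^{j\to b}\ge y$ and $\res(y^{j\to b})=\{z\in\res(y):z_j=b\}\subseteq\res(y)$, we get $\uf(y^{j\to b})=1$ if $\uf(y)=1$ and $\uf(y^{j\to b})=0$ if $\uf(y)=0$, so $i$ stays $\un$-sensitive; and if $j$ itself was $\un$-sensitive then, since $\uf(x^{j\to b})=\un\notin\zo$, we must have $\uf(x^{j\to\overline b})\in\zo$, which keeps $j$ $\un$-sensitive in $x^{j\to b}$.

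I expect the only genuinely non-routine step to be part (a) of the $b=\un$ case --- guaranteeing that some $\un$-coordinate can be resolved to a bit without the output leaving $\un$; everything else is bookkeeping built on naturality of $\uf$ and the two identities for $\res(\cdot)$ above.
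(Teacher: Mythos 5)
Your proposal is correct and follows essentially the same strategy as the paper's proof: monotonize the input toward a prime implicant / prime implicate / single-$\un$ string while checking that every $\un$-sensitive coordinate survives the move, using only naturality of $\uf$. The only cosmetic difference is that you proceed one coordinate at a time (which forces you to prove the existence statement in your part (a)), whereas the paper jumps in one step to a prime implicant $y<x$ (resp.\ to a bichromatic edge $y>x$ of the subcube $\res(x)$) and compares the sensitive sets directly.
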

\begin{proof}
    Consider $x$ which is an implicant but not a prime implicant. Then, there is a prime implicant $y < x$. We have already seen that the stable bits in $y$ are sensitive. So we just have to argue that if bit $i$ is stable in $x$ and unstable in $y$, then $i$ is not sensitive for $x$. Suppose $x_i=0$ and $y_i=\un$. Changing $x_i=\un$ cannot change the output as $y$ has the same output as $x$. Changing $x_i=1$ cannot change the output as then $y$ would not be a prime implicant.

    A similar argument as above holds for the prime implicates.

    For the last part, consider an $x$ that has more than one $\un$ such that $f(x)=\un$. Then, there is a $y > x$ that has exactly one $\un$ such that $\uf(y)=\un$ ($y$ is an edge in the hypercube). Let the strings $w$ and $v$ in $\zon$ be the two proper resolutions of $y$ and $j$ be the unique position where $w$ and $v$ differ. We argue that the sensitive bits for $x$ are also sensitive bits for $y$. Bit $j$ is sensitive for $y$. So consider an $i\neq j$ such that bit $i$ is sensitive for $x$. Note that $y_i$ is a stable value. If $x_i=\un$, then it must be that setting $x_i=y_i$ or $x_i=\neg{y_i}$ changes the output to a stable value. Setting $x_i=y_i$ cannot yield a stable output as the resulting subcube will still contain the edge $y$. So, setting $x_i=\neg{y_i}$ yields a stable output. This implies that changing the bit $i$ of $y$ to $\neg{y_i}$ will yield a stable output, proving that bit $i$ is stable for $y$ as well. Now, we consider the cases where $x_i$ is stable. It must be that $x_i=y_i$ ($x_i$ cannot be $\neg{y_i}$ as $y>x$) and say $i$ is sensitive for $x$. Since $f$ is natural, it must be that setting $x_i$ to $\neg{y_i}$ yields a stable output. But then, changing bit $i$ in $y$ to $\neg{y_i}$ will also yield a stable output proving that bit $i$ is sensitive for $y$ as well.
\end{proof}

The following lemma upper bounds the hazard-free sensitivity for $\un$-inputs using Boolean sensitivity, denoted $\sen(f)$ for $f$.

\begin{lemma}\label{lem:usen-sen}
    For any Boolean function $f$, we have $\sen_\un^{(\un)}(f) \leq 2\sen(f)-1$.
\end{lemma}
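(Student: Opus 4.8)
The plan is to exploit the characterization from Theorem~\ref{thm:senstr}: the quantity $\sen_\un^{(\un)}(f)$ is maximized on some $x \in \zuon$ that contains exactly one $\un$-coordinate, say at position $j$. So fix such an $x$ with $\uf(x) = \un$, and let $w, v \in \zon$ be its two resolutions (the one with $x_j = 0$ and the one with $x_j = 1$); since $\uf(x) = \un$, we have $f(w) \neq f(v)$, and these two points are adjacent on the Boolean hypercube. Without loss of generality say $f(w) = 1$ and $f(v) = 0$. The goal is to bound the number of $\un$-sensitive coordinates of $x$ by $2\sen(f) - 1$.

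The key step is to relate the $\un$-sensitive coordinates of $x$ (other than $j$) to the Boolean-sensitive coordinates of $w$ and of $v$. Take a coordinate $i \neq j$ that is $\un$-sensitive for $\uf$ at $x$. Since $x_i \in \zo$ is stable, $\un$-sensitivity at $i$ means that flipping $x_i$ to one of the other two values in $\zuo$ changes $\uf$ to a stable value; and since setting $x_i = \un$ only enlarges the subcube $\res(x)$ (which already contains the sensitive edge $\{w,v\}$), it cannot produce a stable output. Hence it must be that setting $x_i = \neg x_i$ yields a stable output, i.e.\ $\uf(x^{i \to \neg x_i}) \in \zo$. But $x^{i \to \neg x_i}$ still has its $j$-th coordinate equal to $\un$, so its two resolutions are $w^{i \to \neg w_i}$ and $v^{i \to \neg v_i}$, and these must now have the \emph{same} Boolean value. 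Since originally $f(w) \neq f(v)$, flipping coordinate $i$ made them agree; therefore $i$ is sensitive for $f$ at $w$ or sensitive for $f$ at $v$ (indeed at least one of the two values $f(w^{i \to \neg w_i}), f(v^{i \to \neg v_i})$ differs from the corresponding original). So every $\un$-sensitive coordinate $i \neq j$ of $x$ lies in $S(f,w) \cup S(f,v)$, where $S(f,\cdot)$ denotes the Boolean-sensitive set.

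This gives the crude bound $\sen_\un^{(\un)}(f,x) \le 1 + |S(f,w)| + |S(f,v)| \le 1 + 2\sen(f)$, which is off by $2$. To shave this down to $2\sen(f)-1$, I would argue that coordinate $j$ cannot be "counted twice": $j$ itself is sensitive for both $w$ and for $v$ (since $w$ and $v$ differ precisely at $j$ and have different $f$-values), so $j \in S(f,w) \cap S(f,v)$. Thus $|S(f,w)| + |S(f,v)| \ge |S(f,w) \cup S(f,v)| + 1$, and moreover the $\un$-sensitive coordinates $i \ne j$ of $x$ all lie in $S(f,w) \cup S(f,v)$ while possibly \emph{not} including $j$ (coordinate $j$ is handled separately as the single distinguished $\un$-coordinate). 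Combining: the number of $\un$-sensitive coordinates is at most $1 + |S(f,w) \cup S(f,v) \setminus \{j\}| \le 1 + (|S(f,w)| + |S(f,v)| - 1) - 1 \le 1 + 2\sen(f) - 2 = 2\sen(f) - 1$, where I used $|S(f,w)|, |S(f,v)| \le \sen(f)$ and that $j$ belongs to both sensitive sets.

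The main obstacle is getting the constant exactly right: one has to be careful about whether coordinate $j$ is itself $\un$-sensitive for $x$ (it is, trivially, since $x$ resolves to different outputs) and make sure it is not double-counted against its membership in both $S(f,w)$ and $S(f,v)$. The cleanest bookkeeping is probably: let $A$ be the set of $\un$-sensitive coordinates of $\uf$ at $x$; then $A \subseteq \{j\} \cup (S(f,w) \cup S(f,v))$ and $j \in S(f,w) \cap S(f,v)$, so $|A| \le 1 + |S(f,w) \cup S(f,v)| \le 1 + (|S(f,w)| + |S(f,v)| - |S(f,w)\cap S(f,v)|) \le 1 + 2\sen(f) - 1 = 2\sen(f)$ — hmm, this again only gives $2\sen(f)$, so the extra saving must come from noting $j \in A$ already accounts for one element of $S(f,w)\cup S(f,v)$, i.e.\ $A \setminus \{j\} \subseteq (S(f,w)\cup S(f,v)) \setminus \{j\}$ and $|(S(f,w)\cup S(f,v))\setminus\{j\}| \le |S(f,w)| + |S(f,v)| - |S(f,w)\cap S(f,v)| - 1 \le 2\sen(f) - 2$, giving $|A| \le 1 + (2\sen(f)-2) = 2\sen(f)-1$. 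This last manipulation is where I would focus the rigor.
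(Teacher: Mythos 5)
Your proposal is correct and follows essentially the same route as the paper: reduce to an input with a single $\un$ via Theorem~\ref{thm:senstr}, show every other $\un$-sensitive coordinate is Boolean-sensitive at one of the two resolutions $w,v$, and exploit the fact that the $\un$-position is sensitive for both resolutions to save the $-1$. The paper's bookkeeping is marginally cleaner --- it observes that the entire $\un$-sensitive set of $x$ (including the $\un$-position itself) is contained in $S(f,w)\cup S(f,v)$, whose size is at most $2\sen(f)-1$ since the two sets share the $\un$-position --- but your final accounting is equivalent and sound.
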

\begin{proof}
    From Theorem ~\ref{thm:senstr}, we know that $\sen_\un^{(\un)}$ is maximized at an input $x\in\zuon$ that has exactly one $\un$. Let $v$ and $w$ be the two proper resolutions of $x$ such that $f(v) = 0$ and $f(w) = 1$ and they differ only in bit $i$. We will prove that all sensitive bits of $x$ for $\uf$ are sensitive for either $v$ or $w$ for $f$. Clearly, bit $i$ is sensitive for $v$ and $w$ for $f$. Consider a $j\neq i$ that is sensitive for $x$ for $\uf$. Notice $x_j$ must be stable and $x_j=v_j=w_j$. Say, $x_j=0$. Since the output on $x$ of $\uf$ is already $\un$, the reason bit $j$ must be sensitive is that setting $x_j=1$ causes $\uf$ to output a stable value, say $0$. This implies that if we set $w_j=1$ in $w$, then the output of $f$ changes from $1$ to $0$. So, bit $j$ must be stable for $w$ for $f$. The rest of the cases are similar. The sensitive bits of $x$ for $\uf$ is contained in the union of sensitive bits of $v$ and $w$ for $f$. The union has at least $i$ as the common bit. The lemma follows.
\end{proof}

Similar to the Theorem \ref{thm:senstr} we show the following characterization for $\un-$ certificate complexity. We use $\cc_\un^{(b)}$ where $b\in\zuo$ to denote the $\un$-certificate complexity for inputs that yield a $b$ output.

\begin{theorem}\label{thm:ccstr}
    Let $\uf$ be the hazard-free extension of some Boolean function $f$. $\cc_\un^{(1)}(\uf, x)$ is maximized when $x$ is a prime implicant, $\cc_\un^{(0)}(\uf, x)$ is maximized when $x$ is a prime implicate and $\cc_\un^{(1)}(f, x) = k_1, \cc_\un^{(0)}(\uf, x) = k_2$ where $k_1, k_2$ are the sizes of the largest prime implicant and implicate respectively.
\end{theorem}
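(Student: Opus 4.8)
The plan is to reduce $\cc_\un^{(1)}(\uf,x)$ at a $1$-input $x$ to a purely Boolean quantity — the size of a smallest implicant of $f$ sitting ``inside'' the stable part of $x$ — then optimize over $x$; the statement for $0$-inputs and prime implicates is the exact dual.

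First I would pin down the structure of minimal certificates of $\uf$ at a $b$-input with $b \in \zo$. By Observation~\ref{obs: unstable certificate no u}, such a certificate $c$ has $c_i \neq \un$ for every $i$ in its domain; combined with the consistency requirement (i.e.\ $c_i = x_i$ whenever $c_i \neq *$), this forces the domain of $c$ to lie inside the set of \emph{stable} coordinates of $x$, with $c$ agreeing with $x$ there. Writing $S(x)$ for the set of literals read off the stable coordinates of $x$, one checks directly from Definition~\ref{def:hfe} that $\uf(x)=1$ is equivalent to $S(x)$ being an implicant of $f$ (the resolutions of $x$ are precisely the $\zo$-completions of its stable part), and, more generally, that a partial assignment supported on stable coordinates of $x$ and agreeing with $x$ there is a certificate for $\uf$ at $x$ if and only if the literals it reveals form an implicant of $f$: the ``if'' direction is immediate, and for ``only if'' one replaces the $*$'s by $\un$'s and uses that $\uf$ of the resulting string equals $1$. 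Hence $\cc_\un^{(1)}(\uf,x)$ equals the minimum size of an implicant of $f$ contained in $S(x)$.

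Next, a minimum-cardinality implicant contained in $S(x)$ must in fact be a \emph{prime} implicant of $f$: if it had a proper sub-implicant, that sub-implicant would still lie in $S(x)$ and be strictly smaller, contradicting minimality. Thus $\cc_\un^{(1)}(\uf,x)$ is the size of some prime implicant of $f$, and so is at most $k_1$ for every $1$-input $x$. For tightness, let $P$ be a largest prime implicant and take $x = y_P$ (the input that is stable exactly on the literals of $P$ and $\un$ elsewhere, as in the proof of Theorem~\ref{thm:sbscc}), so that $S(y_P) = P$; since $P$ is prime it is the unique implicant contained in $S(y_P)$, giving $\cc_\un^{(1)}(\uf, y_P) = |P| = k_1$. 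This shows $\cc_\un^{(1)}(\uf,\cdot)$ is maximized at (inputs coming from) prime implicants, with maximum value $k_1$.

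Finally I would run the mirror-image argument for $\cc_\un^{(0)}$: for a $0$-input $x$, minimal certificates again reveal only stable coordinates, $\uf(x)=0$ is equivalent to the stable part of $x$ yielding an implicate of $f$, $\cc_\un^{(0)}(\uf,x)$ is the minimum size of an implicate of $f$ inside $x$, a smallest such implicate is a prime implicate, and the quantity is maximized (at value $k_2$) by $x = y_Q$ for $Q$ a largest prime implicate. The one point needing care is the two-way equivalence ``certificate of a $1$-input $\Leftrightarrow$ sub-implicant of $S(x)$'': the consistency constraint must be handled so that no $\un$-coordinate of $x$ is ever revealed by a minimal certificate, and one must note that cardinality-minimality among sub-implicants already forces primality. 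Everything else is routine bookkeeping.
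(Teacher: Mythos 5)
Your proposal is correct and follows essentially the same route as the paper: both use Observation~\ref{obs: unstable certificate no u} to force minimal certificates at stable-output inputs onto the stable coordinates, identify them with implicants (resp.\ implicates) of $f$, and certify tightness via the canonical input $y_P$ built from a largest prime implicant. The only cosmetic difference is in the tightness step, where you argue that $P$ is the unique implicant contained in $S(y_P)$ while the paper invokes the fact that every stable bit of $y_P$ is sensitive and hence must appear in any certificate; these are interchangeable.
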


\begin{proof}
    Observation~\ref{obs: unstable certificate no u} implies that for any input $x$ giving a stable output $b$, a minimum certificate does not have any unstable bits in it. removing any of the stable bits in the minimum certificate also no longer makes it a certificate. Hence the literals that form the certificate give a prime implicant (or implicate) if $\uf(x) = 1$ (or 0) respectively. This implies that $\cc_\un^{(0)}(\uf) \le k_2$ and $\cc_\un^{(1)}(\uf) \le k_1$.
    
    Now consider the largest prime implicant $P$ of $f$. The input $x_P$ (defined to be $1$ in those positions where $P$ contains the corresponding positive literal, $0$ where $P$ has the corresponding negated literal, and $\un$ everywhere else) has it's minimum certificate to be all of it's stable bits. This happens since we know that for such an input all the stable bits are sensitive and any certificate should contain all the stable bits of an input. Hence $\cc_\un^{(1)}(\uf, x) = k_1 = \sen_\un^{(0)}(\uf, x), \cc_\un^{(0)}(\uf, x) = k_2 = \sen_\un^{(0)}(\uf, x)$.

    An upper bound of $k_1 + k_2 -1$ for $\cc_\un^{(0)}(\uf, x)$ has been proved as part of the Sensitivity theorem.
\end{proof}

We end this subsection by showing that the bounds in Lemma~\ref{lem:usen-sen} and part~2 of Theorem ~\ref{thm:ccstr} are tight for the majority function.

The $\fmaj$ function on $2n+1$ variables has:
\begin{description}
    \item [$\sen_\un^{(u)}(\fmaj_n) = 2n+1$] Consider the input $x$ with $n$ 0's followed by a $u$, followed by $n$ 1's. Note that flipping any of the 0's to 1 makes the function 1 and flipping any of the 1's to a 0 makes the function 1. Also, changing the $\un$ to a $0$ or a $1$ makes the output $0$ or $1$.
    \item[$\sen_\un^{(b)}(\fmaj_n) = n+1$ for $b \in \zo$] An input $x$ such that $\widehat{\fmaj}(x) = b$ has at least $n+1$ b's appearing in $x$. If there are more than $n+1$ b's none of the bits are sensitive. If there are exactly $n+1$ b's then exactly these bits are sensitive.
    \item [$\sen(\fmaj_n) = n+1$] Consider the input with $n+1$ ones followed by $n$ zeroes. All positions with a one are sensitive and all positions with a zero are not sensitive. If an input has more than $n+1$ zeroes or ones, then none of the bits are sensitive.
\end{description}

\section{Deterministic, Quantum and Randomized Models}

% We need to write some text here to connect.

In this section, we first present the relationship between the deterministic, randomized and  quantum $\un$-query complexities. We derive exact bounds for the $\fmux_n$ function in the three variants of the model. Further, we also present a relationship between the query complexity with and without uncertainty for monotone functions. 

\subsection{Relationships between Query Complexities}

In this section, we show that deterministic, randomized, and quantum $\un$-query complexities are all polynomially related. We use Yao's minimax principle~\cite{Yao77} and the adversary method for lower bounds on quantum query complexity due to Ambainis~\cite{Amb02}. We state them below for convenience.
\begin{lemma}[Yao's minimax principle]\label{lem: yao}
    For finite sets $D, E$ and a function $f: D \to E$, we have $\rand(f) \geq k$ if and only if there exists a distribution $\mu : D \to [0,1]$ such that $\depth_\mu(f) \geq k$.
    Here, $\depth_\mu(f)$ is the minimum depth of a deterministic decision tree that computes $f$ to error at most $1/3$ when inputs are drawn from the distribution $\mu$.
\end{lemma}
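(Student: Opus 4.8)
The plan is to prove the two implications of the equivalence separately: the reverse direction is an elementary averaging argument, and the forward direction is an application of von Neumann's minimax theorem (equivalently, LP duality) to a finite zero-sum game. For the reverse direction, suppose $\mu$ is a distribution on $D$ with $\depth_\mu(f) \ge k$, and suppose toward a contradiction that $\rand(f) < k$. Then there is a distribution $\mathcal{D}$ over deterministic decision trees, each of depth at most $k-1$, with $\Pr_{T \sim \mathcal{D}}[T(x) = f(x)] \ge 2/3$ for every $x \in D$. Averaging this inequality over $x \sim \mu$ and swapping the order of the two expectations, some tree $T_0$ in the support of $\mathcal{D}$ --- hence of depth at most $k-1$ --- satisfies $\Pr_{x \sim \mu}[T_0(x) = f(x)] \ge 2/3$, contradicting $\depth_\mu(f) \ge k$. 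This direction uses nothing about the structure of decision trees and transfers verbatim to the $\un$-model.

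For the forward direction I would argue the contrapositive: assuming $\depth_\mu(f) < k$ for every distribution $\mu$ on $D$, deduce $\rand(f) < k$. Since $D$ and $E$ are finite, there are only finitely many deterministic decision trees of depth at most $k-1$; let $\mathcal{T}$ denote this finite set. Consider the two-player zero-sum game in which the algorithm player picks $T \in \mathcal{T}$, the adversary picks $x \in D$, and the algorithm's payoff is $\mathbf{1}[T(x) = f(x)]$. Both pure-strategy sets are finite, so von Neumann's minimax theorem yields
\[
    \max_{\mathcal{D}}\ \min_{x \in D}\ \Pr_{T \sim \mathcal{D}}[T(x) = f(x)] \;=\; \min_{\mu}\ \max_{T \in \mathcal{T}}\ \Pr_{x \sim \mu}[T(x) = f(x)],
\]
where $\mathcal{D}$ ranges over distributions on $\mathcal{T}$ and $\mu$ over distributions on $D$ (using that a best response to a fixed mixed strategy can be taken to be pure). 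The hypothesis says exactly that for each $\mu$ some $T \in \mathcal{T}$ computes $f$ with $\mu$-error at most $1/3$, i.e.\ $\max_{T \in \mathcal{T}} \Pr_{x \sim \mu}[T(x) = f(x)] \ge 2/3$; hence the right-hand side, and therefore the left-hand side, is at least $2/3$. Unpacking the left-hand side, there is a distribution $\mathcal{D}^\star$ over trees of depth at most $k-1$ with $\Pr_{T \sim \mathcal{D}^\star}[T(x) = f(x)] \ge 2/3$ for all $x \in D$, i.e.\ a randomized decision tree of depth at most $k-1$ computing $f$ to error at most $1/3$, so $\rand(f) \le k-1 < k$.

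The only non-bookkeeping step is the invocation of the minimax theorem, and the one thing to verify there is finiteness of the two strategy sets, which holds because $D$ is finite (so there are finitely many inputs and, up to behavior, finitely many trees of depth at most $k-1$). An alternative that avoids quoting the minimax theorem is to phrase ``the optimal bounded-error randomized algorithm of depth at most $k-1$'' as a linear program --- one variable per tree in $\mathcal{T}$ giving its probability, one constraint per input $x$ enforcing success probability $\ge 2/3$ --- and to observe that its dual optimum is attained by the hardest input distribution $\mu$, so that strong LP duality (the LP is feasible and bounded) gives the claimed equivalence. Either route requires no idea beyond standard game-theoretic duality, so I expect no real obstacle here.
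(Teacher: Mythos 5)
Your proof is correct. Note that the paper does not prove this lemma at all --- it is stated as a known result and attributed to Yao~\cite{Yao77} --- so there is no in-paper argument to compare against; what you have written is the standard textbook derivation (easy direction by averaging and swapping expectations, hard direction by von Neumann's minimax theorem applied to the finite zero-sum game between trees of depth at most $k-1$ and inputs), and both directions, including the finiteness check needed to invoke the minimax theorem, are handled correctly.
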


\begin{theorem}[{\cite[Theorem~5.1]{Amb02}}]\label{thm: ambainis}
    Let $D, E$ be finite sets, let $n$ be a positive integer, and let $f : D^n \to E$ be a function. Let $X, Y$ be two sets of inputs such that $f(x) \neq f(y)$ for all $(x, y) \in X \times Y$. Let $R \subseteq X \times Y$ be such that
    \begin{itemize}
        \item For every $x \in X$, there are at least $m$ different $y \in Y$ with $(x, y) \in R$,
        \item for every $y \in Y$, there are at least $m'$ different $x \in X$ with $(x, y) \in R$,
        \item for every $x \in X$ and $i \in [n]$, there are at most $\ell$ different $y \in Y$ such that $(x, y) \in R$ and $x_i \neq y_i$,
        \item for every $y \in Y$ and $i \in [n]$, there are at most $\ell'$ different $x \in X$ such that $(x, y) \in R$ and $x_i \neq y_i$.
    \end{itemize}
    Then $\qua(f) = \Omega(\sqrt{\frac{mm'}{\ell\ell'}})$.
\end{theorem}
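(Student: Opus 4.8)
The plan is to prove this by the standard \emph{quantum adversary} (hybrid) argument of Ambainis: maintain a single potential that measures, over the course of the computation, how distinguishable the algorithm's internal states are on related pairs of inputs, and show this potential cannot change fast enough unless many queries are made. Fix a quantum query algorithm computing $f$ with error at most $1/3$ using $T$ queries, and for an input $z$ let $\ket{\psi_z^t}$ denote its state after $t$ queries. Since the initial state and all non-query unitaries are input-independent, $\ket{\psi_z^0}$ is the same for every $z$. Define the progress measure
\[
    W^t \;=\; \sum_{(x,y)\in R} \bigl|\,\langle \psi_x^t | \psi_y^t \rangle\,\bigr|.
\]

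First I would pin down the two endpoint estimates. At $t=0$ all states coincide and each inner product equals $1$, so $W^0 = |R|$. At $t=T$, for every $(x,y)\in R$ we have $f(x)\neq f(y)$, so measuring the output register of $\ket{\psi_x^T}$ and of $\ket{\psi_y^T}$ produces values in disjoint sets, each with probability at least $2/3$; the standard bound on distinguishing two pure states then forces $\bigl|\langle\psi_x^T|\psi_y^T\rangle\bigr| \le 2\sqrt{(1/3)(2/3)} =: c < 1$, hence $W^T \le c\,|R|$.

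The core step is to bound the change $|W^{t+1}-W^{t}|$ caused by a single query. Only the query operator can change inner products, and on a pair $(x,y)$ the operators $O_x$ and $O_y$ act identically on the components whose query-index register holds an $i$ with $x_i=y_i$; writing $\Pi_i$ for the projector onto query-index $=i$, this gives
\[
    \bigl|\langle\psi_x^{t+1}|\psi_y^{t+1}\rangle - \langle\psi_x^{t}|\psi_y^{t}\rangle\bigr| \;\le\; 2\sum_{i:\,x_i\neq y_i} \|\Pi_i\psi_x^{t}\|\,\|\Pi_i\psi_y^{t}\|.
\]
Summing over $(x,y)\in R$, swapping the order of summation to fix $i$ first, and applying Cauchy--Schwarz twice --- once over the related pairs that differ in coordinate $i$, using that for each $x$ there are at most $\ell$ such $y$ and for each $y$ at most $\ell'$ such $x$, and once over $i$, using $\sum_i\|\Pi_i\psi_z^{t}\|^2 = 1$ --- yields $|W^{t+1}-W^{t}| \le 2\sqrt{\ell\ell'\,|X|\,|Y|}$. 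Finally, $|R|\ge m|X|$ and $|R|\ge m'|Y|$ give $|R|\ge\sqrt{mm'\,|X|\,|Y|}$, so $|W^{t+1}-W^{t}| \le 2|R|\sqrt{\ell\ell'/(mm')}$.

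Combining the three facts, $(1-c)|R| \le W^0 - W^T \le \sum_{t=0}^{T-1}|W^{t+1}-W^{t}| \le 2T|R|\sqrt{\ell\ell'/(mm')}$, which rearranges to $T \ge \tfrac{1-c}{2}\sqrt{mm'/(\ell\ell')} = \Omega\bigl(\sqrt{mm'/(\ell\ell')}\bigr)$, i.e.\ $\qua(f) = \Omega\bigl(\sqrt{mm'/(\ell\ell')}\bigr)$. The two endpoint estimates are routine; the main obstacle is the per-query bound --- the two applications of Cauchy--Schwarz must be arranged in the right order, and one has to recognize that the inequality $|R|^2 \ge mm'|X||Y|$ is exactly what converts the $\sqrt{|X||Y|}$ into $|R|/\sqrt{mm'}$ and thereby delivers the claimed $\sqrt{mm'}$ rather than a weaker $\min(m,m')$. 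A last sanity check is that the workspace register and the modular (rather than XOR) query oracle used in this paper's model do not affect anything: the argument only uses that $O_x$ is unitary and acts as the identity on components of the state indexed by coordinates on which $x$ and $y$ agree.
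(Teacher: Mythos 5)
The paper does not actually prove this statement --- it is imported verbatim as Theorem~5.1 of Ambainis's paper and used as a black box --- so the only meaningful comparison is with Ambainis's original argument, which your proposal reproduces faithfully and correctly: the progress measure $W^t$, the endpoint bounds $W^0=|R|$ and $W^T\le \tfrac{2\sqrt{2}}{3}|R|$, the per-query bound $2\sqrt{\ell\ell'\,|X|\,|Y|}$ obtained from the block structure of $O_x^\dagger O_y$ plus the weighted Cauchy--Schwarz/AM--GM step, and the conversion via $|R|\ge\sqrt{mm'\,|X|\,|Y|}$ are all exactly the standard proof. Your closing sanity check is also the right one for this paper's setting: the argument uses only that the query oracle is unitary and acts identically for $x$ and $y$ on components whose index register holds an $i$ with $x_i=y_i$, so the ternary alphabet, the mod-3 (rather than XOR) oracle, and the workspace register change nothing.
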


We lower bound randomized and quantum $\un$-query complexities polynomially by $\un$-block sensitivity.
\begin{lemma}\label{lem: unstable randomized quantum bs}
    Let $f : \zon \to \zo$. Then,
    \[
    \rand_\un(f) = \Omega(\bsen_\un(f)), \qquad \qua_\un(f) = \Omega(\sqrt{\bsen_\un(f)}).
    \]
\end{lemma}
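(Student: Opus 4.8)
The plan is to mimic the classical arguments relating block sensitivity to randomized and quantum query complexity, but adapted to the ternary alphabet $\zuo$. Fix an input $x \in \zuon$ achieving $\bsen_\un(f)$, say with value $b = \uf(x)$ and disjoint $\un$-sensitive blocks $B_1, \dots, B_k$ where $k = \bsen_\un(f)$. For each $i \in [k]$, let $x^{B_i}$ be the string that differs from $x$ exactly on $B_i$ and satisfies $\uf(x^{B_i}) \neq b$ (as in the notation set up after the definition of $\un$-block sensitivity). These $k+1$ strings $x, x^{B_1}, \dots, x^{B_k}$ will be the hard instances.

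First I would handle the randomized lower bound via Yao's minimax principle (Lemma \ref{lem: yao}). Consider the distribution $\mu$ that places probability $1/2$ on $x$ and probability $1/(2k)$ on each $x^{B_i}$. Any deterministic decision tree of depth $d$ that is correct on at least a $2/3$ fraction (under $\mu$) must, in order to distinguish $x$ from $x^{B_i}$, query at least one variable inside $B_i$ on at least one of the two inputs; since the $B_i$ are disjoint and the tree follows the same root path on $x$ until it first queries a variable where $x$ and some $x^{B_i}$ differ, a standard counting argument shows the tree can ``resolve'' at most $d$ of the blocks along the computation on $x$, hence it errs on at least a constant fraction of the remaining blocks unless $d = \Omega(k)$. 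This gives $\rand_\un(f) = \Omega(\bsen_\un(f))$.

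For the quantum bound I would invoke Ambainis's adversary method (Theorem \ref{thm: ambainis}) with $X = \{x\}$, $Y = \{x^{B_1}, \dots, x^{B_k}\}$, and the relation $R = X \times Y$. Since $\uf(x) \neq \uf(x^{B_i})$ for all $i$, the hypothesis of the theorem is met. We have $m = k$ (every $x \in X$ relates to all $k$ elements of $Y$) and $m' = 1$. For the locality parameters: for a fixed index $j \in [n]$, the strings in $Y$ that differ from $x$ at position $j$ are exactly those $x^{B_i}$ with $j \in B_i$; since the blocks are disjoint, there is at most one such $i$, so $\ell = 1$; and $\ell' = 1$ trivially since $|X| = 1$. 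Theorem \ref{thm: ambainis} then yields $\qua_\un(f) = \Omega(\sqrt{mm'/(\ell\ell')}) = \Omega(\sqrt{k}) = \Omega(\sqrt{\bsen_\un(f)})$.

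The only point requiring care — and the part I would write out most carefully — is checking that Ambainis's theorem, and the decision-tree distinguishing argument, go through verbatim when the alphabet is $\{0, 1, \un\}$ rather than $\{0,1\}$: the query oracle in our model returns an element of a size-$3$ alphabet, which is exactly the general ``$f : D^n \to E$'' setting Theorem \ref{thm: ambainis} is stated for, so no modification is needed there. For the randomized bound, the subtlety is that distinguishing $x$ from $x^{B_i}$ may require a query inside $B_i$ on \emph{either} input, but since a deterministic tree's behavior on $x$ and on $x^{B_i}$ agrees until the first query into $B_i$, the usual argument that a depth-$d$ tree correctly separates at most $O(d)$ of the $x^{B_i}$'s from $x$ still applies; I expect this bookkeeping to be the main (mild) obstacle, and it is entirely analogous to the Boolean proof that $\rand(f) = \Omega(\bsen(f))$.
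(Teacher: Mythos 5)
Your proposal is correct and follows essentially the same route as the paper: the randomized bound via Yao's minimax principle with the distribution placing mass $1/2$ on $x$ and $1/(2k)$ on each $x^{B_i}$, and the quantum bound via Ambainis's adversary relation $R = \{x\} \times \{x^{B_1},\dots,x^{B_k}\}$ with the identical parameters $m=k$, $m'=\ell=\ell'=1$. The paper's version of the counting step is the same one you sketch (a tree of depth less than $k/10$ that is correct at the leaf reached by $x$ must leave at least $9k/10$ blocks unqueried, incurring error at least $0.45$ under $\mu$), and your observation that Theorem~\ref{thm: ambainis} already covers general alphabets is exactly why no modification is needed.
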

\begin{proof}
    We use Lemma~\ref{lem: yao} for the randomized lower bound. Let $x \in \zuo^n$ be such that $\bsen_\un(f) = \bsen_\un(\uf, x) = k$, with corresponding sensitive blocks $B_1, \dots, B_k$. Define a distribution $\mu$ on $\zuo^n$ as follows:
    \begin{align*}
        \mu(x) & = 1/2,\\
        \mu(x^{B_i}) & = 1/2k~\textnormal{for all}~i \in [k].
    \end{align*}
    Towards a contradiction, let $T$ be a deterministic decision tree of cost less than $k/10$ that computes $\uf$ to error at most $1/3$ under the input distribution $\mu$. Let $L_x$ denote the leaf of $T$ reached by the input $x$. There are now two cases:
    \begin{itemize}
        \item If the output at $L_x$ is not equal to $\uf(x)$, then $T$ errs on $x$, which contributes to an error of 1/2 under $\mu$, which is a contradiction.
        \item If the output at $L_x$ equals $\uf(x)$, since the number of queries on this path is less than $k/10$, there must exist at least $9k/10$ blocks $B_i$ from the set $B_1, \dots, B_k$ such that no variable of $x^{B_i}$ is read on input $x^{B_i}$ (observe that in this case, $x^{B_i}$ also reaches the leaf $L_x$). Since $f(x^{B_i}) \neq f(x)$, this means $T$ makes an error on each of these $B_i$'s, contributing to a total error of at least $9k/10 \cdot 1/2k = 0.45$ under $\mu$, which is a contradiction.
    \end{itemize}
    This concludes the randomized lower bound.

    For the quantum lower bound we use Theorem~\ref{thm: ambainis}. Define $X = \cbra{x}, Y = \cbra{x^{B_i} : i \in [k]}$, and $R = X \times Y$. From Theorem~\ref{thm: ambainis} we have $m = k, m' = 1, \ell = 1$ (since each index appears in at most 1 block, as each block is disjoint) and $\ell' = 1$. Theorem~\ref{thm: ambainis} then implies $\qua_\un(f) = \Omega(\sqrt{k}) = \Omega(\sqrt{\bsen_\un(f)})$.
\end{proof}

\begin{theorem}
  \label{thm:hazard-free-relationships}
    For $f : \zon \to \zo$, we have
    \[
    \depth_\un(f) = O(\rand_\un(f)^2), \qquad \depth_\un(f) = O(\qua_\un(f)^4).
    \]
\end{theorem}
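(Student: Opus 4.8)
The plan is to chain together the relationships already established in the excerpt. For the deterministic-randomized bound, I would start from Theorem~\ref{thm: algo upper bound}, which gives $\depth_\un(f) = O(\cc_\un(f)\cdot\bsen_\un(f))$. Then, by Theorem~\ref{thm:sbscc}, we have $\cc_\un(f) = O(\bsen_\un(f))$ (indeed all of $\sen_\un$, $\bsen_\un$, $\cc_\un$ are linearly related), so $\depth_\un(f) = O(\bsen_\un(f)^2)$. Finally, Lemma~\ref{lem: unstable randomized quantum bs} gives $\bsen_\un(f) = O(\rand_\un(f))$, and substituting yields $\depth_\un(f) = O(\rand_\un(f)^2)$.

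For the deterministic-quantum bound, the argument is structurally identical but uses the quadratically weaker bound on $\bsen_\un$ in terms of $\qua_\un$. Starting again from $\depth_\un(f) = O(\bsen_\un(f)^2)$, I would apply the second part of Lemma~\ref{lem: unstable randomized quantum bs}, namely $\bsen_\un(f) = O(\qua_\un(f)^2)$, to conclude $\depth_\un(f) = O(\qua_\un(f)^4)$.

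Since all the component inequalities are already proved in the excerpt (Theorem~\ref{thm: algo upper bound}, Theorem~\ref{thm:sbscc}, and Lemma~\ref{lem: unstable randomized quantum bs}), the proof is essentially a one-line composition in each case, and there is no real obstacle to overcome here — the work has been front-loaded into establishing that $\sen_\un$, $\bsen_\un$, and $\cc_\un$ are all linearly equivalent. The only thing to be slightly careful about is bookkeeping the constants hidden in the $O(\cdot)$ notation and confirming that $\bsen_\un(f) \le \cc_\un(f)$ together with $\cc_\un(f) \le k_1 + k_2 - 1 = O(\bsen_\un(f))$ (using $\max\{k_1,k_2\} \le \bsen_\un(f)$) indeed collapses the product $\cc_\un(f)\cdot\bsen_\un(f)$ to a quadratic in any single one of these measures. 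I would write out the two chains of inequalities explicitly and note that the exponents ($2$ for randomized, $4$ for quantum) are exactly what the quadratic gap between $\rand_\un$ and $\qua_\un$ (via block sensitivity) predicts.
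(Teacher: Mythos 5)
Your proposal is correct and follows exactly the same chain as the paper's proof: $\depth_\un(f) = O(\cc_\un(f)\cdot\bsen_\un(f)) = O(\bsen_\un(f)^2)$ via Theorems~\ref{thm: algo upper bound} and~\ref{thm:sbscc}, then substituting the randomized and quantum lower bounds on $\bsen_\un(f)$ from Lemma~\ref{lem: unstable randomized quantum bs}. No gaps; your extra remark on why $\cc_\un(f) = O(\bsen_\un(f))$ via $k_1+k_2-1 \le 2\max\{k_1,k_2\} \le 2\bsen_\un(f)$ is exactly the right bookkeeping.
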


\begin{proof}
    Theorem~\ref{thm: algo upper bound} and Theorem~\ref{thm:sbscc} imply
    \[
        \depth_\un(f) = O(\cc_\un(f) \cdot \bsen_\un(f)) = O(\bsen_\un(f)^2) = O(\rand_\un(f)^2),
    \]
    where the final bound is from Lemma~\ref{lem: unstable randomized quantum bs}. Substituting the second bound from Lemma~\ref{lem: unstable randomized quantum bs} in the last equality above yields $\depth_\un(f) = O(\qua_\un(f)^4)$.
\end{proof}

\subsection{Exact Bounds for $\fmux_n$}

It is easy to prove using Theorem~\ref{thm:sbscc} that the $\un$-query complexity of $\fmux_n$ is exponentially larger than its query complexity. We claim that $\sen_\un(\fmux_n) \geq 2^n$. Consider the input where all selector bits are $\un$ and all data bits are $1$. Flipping any data bit to zero changes the output from $1$ to $\un$.  In the following theorem, we prove a stronger statement. A function is said to be \emph{evasive} if its query complexity is the maximum possible. We first prove Theorem~\ref{thm:muxdepth}, i.e., $\fmux_n$ is evasive for all $n$. In fact, we prove a more general theorem. We consider decision trees for $\widetilde{\fmux_n}$ that are guaranteed to produce the correct output when the number of unstable values in the input is at most $k$, for an arbitrary $k\in[0,2^n+n]$. 

\begin{theorem}\label{thm:muxevasive}
Let $k \in [0, 2^n+n]$. Any optimal-depth decision tree that correctly computes the hazard-free extension of $\fmux_n$ on inputs with at most $k$ unstable values has depth exactly $\min{\{2^k + n, 2^n + n\}}$.
\end{theorem}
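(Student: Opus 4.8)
The plan is to prove matching upper and lower bounds on the depth of an optimal decision tree for $\widetilde{\fmux_n}$ restricted to inputs with at most $k$ unstable positions. Throughout, recall that the input to $\fmux_n$ consists of $n$ selector bits $s_0,\dots,s_{n-1}$ and $2^n$ data bits $(x_b)_{b\in\zo^n}$, and $\fmux_n(s,x)=x_s$.

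\textbf{Upper bound ($\min\{2^k+n,2^n+n\}$).} I would describe a natural adaptive strategy. First query all $n$ selector bits. If all of them resolve to $\zo$-values, we know the index $s\in\zo^n$ exactly, and one more query (of $x_s$) finishes; but of course we must also keep the total bounded, so in this branch we have used $n+1 \le 2^k+n$ queries (using $k\ge 1$; the $k=0$ case is the classical $n+1$ bound). If instead $t\ge 1$ of the selector bits are $\un$, then the set of ``live'' data indices consistent with the revealed selectors is a subcube of dimension $t$, of size $2^t$. The strategy now queries all $2^t$ of these data bits. One then argues: $\widetilde{\fmux_n}$ on this input equals $b\in\zo$ iff all $2^t$ live data bits are revealed to be $b$, and equals $\un$ otherwise; so after these queries the output is determined. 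The total is $n + 2^t$. Since the input has at most $k$ unstable positions and $t$ of them are already among the selectors, we get $t\le k$, hence the cost is at most $n+2^k$; and trivially at most $n+2^n$ since $t\le n$. Combining the two branches gives the claimed upper bound. One subtlety to handle carefully: when $t$ of the selectors are unstable and some data bits are \emph{also} unstable, the count of ``$2^t$ data queries'' must be reconciled with the budget $k$ — but we only need the \emph{depth} bound, i.e.\ worst case over inputs with $\le k$ unstable values, and the worst case for this strategy is exactly $\min\{2^k,2^n\}+n$, achieved when all $k$ (or all $n$) selectors that can be unstable are unstable and the relevant data bits are all equal.

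\textbf{Lower bound.} This is the heart of the argument and where I expect the main difficulty. I would use an adversary argument. The adversary maintains a partial assignment and answers queries so as to keep $\widetilde{\fmux_n}$ undetermined as long as possible, while never exceeding $k$ declared $\un$'s. The key invariant: after fewer than $\min\{2^k+n,2^n+n\}$ queries, there remain (at least) two resolutions of the current partial input with different $f$-values, \emph{and} the partial input has at most $k$ unstable positions, so it is a legal input for the restricted problem. Concretely: when a selector $s_i$ is queried and the adversary still has ``unstable budget'' left and declaring $s_i=\un$ keeps at least two data indices live whose data bits are still free, answer $\un$; this forces the algorithm to eventually query all data bits in a subcube of dimension equal to the number of $\un$-selectors. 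The adversary makes the first $\min\{k,n\}$ selector queries return $\un$ (and the rest, if any, return values chosen to maximize the remaining live subcube), and answers data-bit queries with a fixed value $b$ as long as at least one live data bit remains unqueried; only when the algorithm has queried \emph{all} $2^{\min\{k,n\}}$ live data bits (plus all $n$ selectors) is the output forced. A careful bookkeeping shows any correct tree must, on the adversary's input, read all $n$ selectors and all $2^{\min\{k,n\}}$ live data bits, giving depth $\ge \min\{2^k,2^n\}+n$. I'd also need to confirm the bound doesn't exceed $2^n+n$ when $k\ge n$: if $k\ge n$ the adversary declares all $n$ selectors $\un$, the live subcube is everything, and the algorithm must read all $2^n$ data bits, matching $\fmux_n$'s evasiveness (Theorem~\ref{thm:muxdepth}).

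\textbf{Main obstacle.} The delicate point is the lower-bound adversary's \emph{budget management}: every $\un$ answer on a selector consumes one of the $k$ allowed unstable positions, so the adversary can afford at most $\min\{k,n\}$ unstable selectors, and must then ensure the \emph{data} answers it gives are all $\zo$-values (consuming no further budget) while still keeping the function value ambiguous until every live data bit is read. Showing that this is simultaneously consistent — ambiguous output, $\le k$ unstable positions, and forcing $2^{\min\{k,n\}}$ data queries — requires arguing that with $t$ unstable selectors and all queried data bits set to a common $b$, the function is genuinely $\un$ as long as one live data bit is unread (there is a resolution making it $b$ and another making it $\bar b$), which uses the $\fmux$ structure crucially. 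Once that consistency is established, the counting is routine. I would also double-check the two boundary regimes $k=0$ (recovering $\depth(\fmux_n)=n+1$, noting $\min\{2^0+n,2^n+n\}=n+1$) and $k\ge n$ (recovering $\depth_\un(\fmux_n)=2^n+n$ from Theorem~\ref{thm:muxdepth}) as sanity checks.
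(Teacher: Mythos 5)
Your upper bound is correct and is essentially the paper's: query all $n$ selectors, then query the $2^t\le 2^{\min\{k,n\}}$ data bits indexed by resolutions of the revealed selector string.

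The lower bound, however, has a genuine gap, and it is located exactly at the point you flag as the "main obstacle." Your adversary answers every data-bit query with a fixed value $b$ (say $1$). Against such an adversary the querier is \emph{not} forced to read the selector bits at all: if it simply queries all $2^n$ data bits and sees that every one of them equals $1$, then $\widetilde{\fmux_n}=1$ on every input consistent with those answers, regardless of the selectors, so it may halt after $2^n$ queries. For $k\ge n$ this is strictly fewer than the claimed $2^n+n$, so your adversary fails to prove even the headline case $\depth_\un(\fmux_n)=2^n+n$ (and for $k<n$ the argument that skipping selectors is never profitable relies on the same false premise that the data answers alone cannot determine the output). The underlying problem is that an all-equal data assignment makes the function constant on the data subcube, decoupling its value from the selectors.

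The paper's adversary repairs precisely this. It maintains the set $U_d$ of still-live data bits and a pointer $r$: all data queries are answered $1$ \emph{except} the last unqueried bit of $U_d$, which is answered $0$, at which point $r$ is set to that bit's index; all subsequent non-$\un$ selector answers are chosen consistent with $r$, so the $0$-valued data bit stays selectable. Now the function value genuinely depends on whether the (partially unknown) selector string can resolve to $r$, which forces all $n$ selector queries (the paper's Claim~\ref{claim:allsel}), while the threat that any unqueried bit of $U_d$ might be $0$ or $\un$ forces all $2^{\min\{k,n\}}$ live data queries (Claim~\ref{claim:nounqueriedud}). Note also that delaying the $0$ until the \emph{last} live data bit is essential: if the adversary revealed a $0$ earlier while another live data bit were still unqueried, the querier could immediately and correctly output $\un$. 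Your budget-management concern (at most $\min\{k,n\}$ answers of $\un$, all placed on selectors) is handled the same way in the paper and is not the real difficulty; the missing idea is the single strategically placed $0$ among the data answers.
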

\begin{proof}
%\begin{proof}[Proof of Theorem~\ref{thm:muxevasive}]
    We first prove the theorem for $k \in [0, n]$.
    
    We prove the lower bound using the adversarial strategy in Algorithm~\ref{algo:advmux}.
     \begin{algorithm}[ht]
        \begin{algorithmic}[1]
            \State{\textbf{Given: } $k \in [0, n]$}
            \State $C_s \gets \emptyset, Q_d \gets \emptyset, Q_s \gets \emptyset$
            \State $U_s \gets \{i \mid i \in [n]\}$
            \State $U_d \gets \{x_{(b_0, \dotsc, b_{n-1})} \mid b_i \in \zo \}$
            \State $r \gets \underbrace{(1, \dotsc, 1)}_{n\text{ 1s}}$
            \Loop
            \If{query is for a selector bit $s_i$}
                \State{$Q_s \gets Q_s \cup \{i\}$}
                \If{$|Q_s| \leq k$} \label{line:limk}
                    \State \Return $\un$
                \Else
                    \State{$C_s \gets C_s \cup \{ i \}$}
                    \State{$U_s \gets U_s \setminus \{i\}$}
                    \State{Remove all data bits $x_{(b_0,\dotsc,b_{n-1})}$ from $U_d$ where $b_i \neq r_i$ \label{line:ud}}
                    \State \Return $r_i$
                \EndIf
            \Else
                \State{Let $x_{(b_0,\dotsc,b_{n-1})}$ be the queried data bit.} \label{line:d1}
                \If{$U_d\setminus Q_d = \{x_{(b_0,\dotsc,b_{n-1})}\}$} \label{line:d1if}
                    \State{$r \gets (b_0,\dotsc,b_{n-1})$} \label{line:rchange}
                    \State{$Q_d \gets Q_d \cup \{x_{(b_0,\dotsc,b_{n-1})}\}$}
                    \State \Return{$0$}
                \Else
                    \State{$Q_d \gets Q_d \cup \{x_{(b_0,\dotsc,b_{n-1})}\}$}
                    \State \Return{$1$}
                \EndIf \label{line:d2}
            \EndIf
            \EndLoop
            \caption{Adversarial $\un$-query responder for $\fmux_n$}
            \label{algo:advmux}
        \end{algorithmic}
    \end{algorithm}
    
    At any point of time during the play of the game, we associate a partial assignment with the $n$ selector bits where a $\ast$ at position $i$ indicates that the selector bit $s_i$ has not been queried yet. This partial assignment has at most $k$ positions of value $\un$ by line~\ref{line:limk}. We observe the following invariants about the variables in the algorithm.
    \begin{enumerate}
        \item $Q_d$ is the set of all queried data bits.
        \item $Q_s$ is the set of all queried selector bits.
        \item $C_s$ is the set of all queried selector bits for which we returned $0$ or $1$.
        \item $U_s$ is the set of all selector bits which were either not queried or queried and for which we returned $\un$.
        \item $U_d$ is the set of all data bits $x_{(b_0,\dotsc,b_{n-1})}$ such that the there is an assignment $x \in \zuon$ to the selector bits that is consistent with the current partial assignment and $(b_0, \dotsc, b_{n-1}) \in \res(x)$.
        \item The variable $r$ and the set $U_d$ are such that $x_{r} \in U_{d}$ is always true.
    \end{enumerate}
    
    The first 4 observations are straightforward. For 5, notice that initially $U_d$ is the set of all data bits and line~\ref{line:ud} ensures that we remove the data bits that have indices inconsistent with the current partial assignment whenever we return a $0$ or $1$ for a selector bit. Intially, $r = (1,\dotsc,1)$ and $x_{(1,\dotsc,1)} \in U_d$. We remove $x_{(1,\dotsc,1)}$ from $U_d$ only if some $r_i = 0$. This is only possible once we have changed the value of $r$. This only occurs in line~\ref{line:rchange}. This line sets the value of $r$ to the index of some data bit in $U_d$. So this property still holds. After this, as we will argue below, we never change the value of $r$, and whenever we return a value of $0$ or $1$ for a selector bit, we make sure it is consistent with $r$. So the data bit indexed by $r$ never gets removed from $U_d$.
    
    We now prove the theorem using a series of claims.
    \begin{claim}\label{claim:datainout}
        Suppose $Q_s \neq [n]$ and $x_{(b_0, \dotsc, b_{n-1})}\in U_d$. Then:
        \begin{enumerate}
            \item There is an assignment $x \in \zuon$ to the selector bits with at most $k$ positions with value $\un$ such that $x$ is consistent with the current partial assignment and $(b_0,\dotsc,b_{n-1}) \not\in \res(x)$.
            \item There is an assignment $y \in \zuon$ to the selector bits with at most $k$ positions with value $\un$ such that $y$ is consistent with the current partial assignment and $(b_0,\dotsc, b_{n-1}) \in \res(y)$.
        \end{enumerate}
    \end{claim}
    \begin{proof}
    \begin{enumerate}
        \item Suppose $s_j$ is an unqueried selector bit. Consider an assignment $x\in\zuon$ that is consistent with the current partial assignment constructed as follows: we set $x_j = \neg b_j$ for all unqueried selector bits. This assignment has at most $k$ positions with value $\un$ as we extend the partial assignment only with values in $\zo$ and $(b_0,\dotsc,b_{n-1}) \not\in \res(x)$ as there is at least one unqueried selector bit.
        \item Construct an assignment $y\in\zuon$ for the selector bits by extending the partial assignment as follows: for any unqueried selector bit $s_j$, we set $y_j = b_j$. Again, $y$ has at most $k$ positions with value $\un$ as we extend only using values in $\zo$. We also have $(b_0,\dotsc,b_{n-1}) \in \res(x)$ by the property of data bits in $U_d$ stated earlier.
    \end{enumerate}
    \end{proof}
    
    \begin{claim}\label{claim:nounqueriedud}
        When the querier outputs an answer, the set $U_d \setminus Q_d$ must be empty.
    \end{claim}
    \begin{proof}
        Suppose for contradiction that there is an unqueried data bit $x_{(b_0,\cdots b_{n-1})}$ in $U_d$. Observe in lines~\ref{line:d1} to lines~\ref{line:d2} that this implies we replied $1$ for all queries to data bits. We construct an assignment $x\in\zuon$ of the selector bits that is consistent with the partial assignment as follows: for any unqueried $s_i$, set $x_i = b_i$. Since we only use $0$ or $1$ to extend the partial assignment, $x$ has at most $k$ positions with value $\un$. Also, $(b_0,\cdots b_{n-1}) \in \res(x)$ by construction.
    
        If the querier says that the answer is $1$, we can set $x_{(b_0,\cdots b_{n-1})}$ to $0$, the selector bits to $x$ and make the correct answer $0$ or $\un$. If the querier says $0$ or $\un$, we can set all data bits to $1$, the selector bits to $x$ and make the correct answer $1$.
    \end{proof}

    \begin{claim}\label{claim:allsel}
        When the querier outputs an answer, the set $Q_s$ must be $[n]$.
    \end{claim}
    \begin{proof}
        Suppose for contradiction that some $s_i$ is unqueried. By Claim~\ref{claim:nounqueriedud}, we know that there are no unqueried data bits in $U_d$. Say $x_{(b_0,\cdots b_{n-1})}$ was the last queried data bit in $U_d$. By claim~\ref{claim:datainout} part~2, we know $(b_0, \dotsc, b_{n-1}) \in \res(y)$ for some $y\in\zuon$ such that $y$ is consistent with the current partial assignment and has at most $k$ positions with value $\un$. We know from lines~\ref{line:d1} to \ref{line:d2} that $x_{(b_0,\cdots b_{n-1})}$ was set to $0$ when it was queried. Observe that we set the value of $r$ to $(b_0,\cdots b_{n-1})$ when this happens and we reply with $r_i$ for the selector bit $s_i$ whenever we reply a $0$ or a $1$ for the selector bits. Also, the value of $r$ can never change after this as there cannot be another unqueried bit in $U_d$ as the algorithm only removes elements from $U_d$. Therefore, the data bit $x_{(b_0,\dotsc,b_{n-1})}$ can never be removed from $U_d$ after it has been set to $0$. Also, every other data bit is either unset or set to $1$.
        
        If the querier answers $1$, we set the selector bits to $y$, the remaining data bits to $1$ and then the correct answer is $0$ or $\un$. If the querier answers $0$ or $\un$, we use part~1 of Claim~\ref{claim:datainout} and use that $x$ as values of selector bits, set all the remaining data bits to $1$. The data bit $x_{(b_0,\dotsc,b_{n-1})} \not\in \res(x)$. So the correct answer is $1$.
    \end{proof}
    
    Since all the selector bits have to be queried by Claim \ref{claim:allsel} and exactly $k$ of them have been set to $\un$, we know that $|U_d| = 2^k$. By Claim \ref{claim:nounqueriedud} we know that all of the data bits in $U_d$ have to be queried and hence the querier must have queried at least $2^k + n$ bits.
        
    For the upper bound, first the querier queries all the $n$ selector bits. The reply can have a maximum of $k$ positions with value $\un$. This means the function only depends on at most $2^k$ data bits. Query all those data bits. If they are all $0$ or $1$, then output $0$ or $1$. Otherwise, output $\un$.
    
    If $k > n$, then the claimed lower bound is $2^n+n$ which holds as it is a lower bound for even inputs with at most $n$ positions with value $\un$. The claimed upper bound is also $2^n+n$ which holds as we can compute any function if we are allowed to query all the input bits.
\end{proof}

\begin{lemma}
\label{lemm:muxRandQuant}
\[ \qquad \rand_\un(\fmux_n) = \Theta(2^n), \qquad \qua_\un(\fmux_n) = \Theta(2^{n/2}). \]
\end{lemma}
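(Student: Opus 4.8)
The plan is to treat the four bounds separately; the two lower bounds fall out of the block-sensitivity machinery already developed, the randomized upper bound is immediate, and only the quantum upper bound calls for an actual algorithm.

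\textbf{Lower bounds.} As noted just before Theorem~\ref{thm:muxevasive}, on the input in which every selector bit is $\un$ and every data bit is $1$, the hazard-free extension outputs $1$, yet flipping any single data bit (to $0$ or to $\un$) changes the output to $\un$. Thus each of the $2^n$ data positions is a singleton $\un$-sensitive block, and these $2^n$ blocks are pairwise disjoint, so $\bsen_\un(\fmux_n) \ge 2^n$. Feeding this into Lemma~\ref{lem: unstable randomized quantum bs} yields $\rand_\un(\fmux_n) = \Omega(\bsen_\un(\fmux_n)) = \Omega(2^n)$ and $\qua_\un(\fmux_n) = \Omega(\sqrt{\bsen_\un(\fmux_n)}) = \Omega(2^{n/2})$.

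\textbf{Randomized upper bound.} A deterministic decision tree is a zero-error randomized algorithm, so $\rand_\un(\fmux_n) \le \depth_\un(\fmux_n) = 2^n + n = O(2^n)$ by Theorem~\ref{thm:muxdepth}.

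\textbf{Quantum upper bound.} First I would record the structural observation that, writing an input as $(s,x)$ with $s \in \zuo^n$ the selectors and $x \in \zuo^{2^n}$ the data bits, $\widetilde{\fmux_n}(s,x)$ equals $b \in \zo$ precisely when $x_\sigma = b$ for every $\sigma \in \res(s)$, and equals $\un$ otherwise; in particular the output depends only on $s$ together with the values $\{x_\sigma : \sigma \in \res(s)\}$. The algorithm then queries all $n$ selector bits classically (cost $n$); letting $k \le n$ be the number of selectors that returned $\un$, the set $\res(s)$ of $2^k$ relevant data indices is now known. Next it runs Grover's search~\cite{Gro96} over those $2^k$ indices twice, once looking for a data bit with value $\ne 0$ and once for one with value $\ne 1$, each to small constant error and cost $O(\sqrt{2^k})$. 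If the first search finds nothing it outputs $0$; else if the second finds nothing it outputs $1$; else it outputs $\un$. Correctness is exactly the structural observation, and a union bound over the two searches (each amplified a constant number of times if needed) keeps the error below $1/3$. The total cost is $O(n + \sqrt{2^k}) = O(n + 2^{n/2}) = O(2^{n/2})$.

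\textbf{Where the care goes.} No step is deep. The point requiring the most attention is the quantum upper bound: one must check that ``the queried data bit has value $\ne 0$'' is a genuine Boolean predicate on the ternary data register (it is, being a function of $x_i \in \zuo$), and that restricting Grover's search to the index set $\res(s)$ is legitimate (it is, since after the classical phase the algorithm knows exactly which $2^k$ indices matter). Everything else is bookkeeping.
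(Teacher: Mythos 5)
Your proposal is correct, and the upper bounds coincide with the paper's: the randomized upper bound is the trivial one, and the quantum upper bound is exactly the paper's algorithm (classically read the $n$ selectors, then run Grover constantly many times over the $2^k \le 2^n$ relevant data indices to test whether they are all $0$, all $1$, or neither). Your two-search formulation and the remark that the predicates ``$x_i \ne 0$'' and ``$x_i \ne 1$'' are Boolean predicates on the ternary alphabet are fine.

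Where you genuinely diverge is the lower bound. The paper gives an explicit reduction from $\OR_{2^n}$: a hypothetical fast algorithm for $\widetilde{\fmux_n}$ is simulated on inputs with all selectors answered $\un$ and data bits drawn from the $\OR$ instance, and the known $\Omega(2^n)$ randomized and $\Omega(2^{n/2})$ quantum lower bounds for $\OR$ are imported. You instead observe that the input with all selectors $\un$ and all data bits $1$ has $2^n$ disjoint singleton sensitive blocks (one per data position), so $\bsen_\un(\fmux_n) \ge 2^n$ --- in fact this is the same witness the paper uses just before Theorem~\ref{thm:muxevasive} to show $\sen_\un(\fmux_n) \ge 2^n$ --- and then you invoke Lemma~\ref{lem: unstable randomized quantum bs}. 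This is valid: that lemma is proved in full generality earlier in the paper, and plugging in $\bsen_\un \ge 2^n$ gives exactly the stated $\Omega(2^n)$ and $\Omega(2^{n/2})$ bounds. Your route is the more modular one, reusing the block-sensitivity machinery rather than re-running a minimax/adversary argument through an $\OR$ embedding; the paper's reduction is more self-contained and makes explicit that the hardness of $\widetilde{\fmux_n}$ comes from an embedded unstructured search problem. Both yield the same asymptotics, so there is no gap.
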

\begin{proof}
Let $f = \fmux_n$. We first show the upper bounds. The randomized upper bound is trivial since the number of input bits is $\Theta(2^n)$. For the quantum upper bound, we first query the $n$ selector bits, let the output be $x \in \zuo^n$. Further, let $z \in \zuo^{2^n}$ denote the input's data bits. Consider the set of indices $I = \cbra{y = (y_1, \dots y_n) : y \in \res(x)}$. By the definition of $\fmux_n$, we have $\uf(x, z) = b \in \zo$ iff $z_i = b$ for all $i \in I$. By a standard use of Grover's search algorithm a constant number of times, we can determine with high probability whether $z$ is the constant 0 on $I$, or whether it is the constant 1 on $I$, or whether neither of these is the case. This immediately gives us the output: 0 in the first case, 1 in the second case, and $\un$ in the last case. The cost of querying the selector bits is $n$, and constantly many applications of Grover's search algorithm costs $O(2^{n/2})$.
    
    We now show the lower bounds using a reduction from the OR function over $2^n$ bits that we denote $\OR_{2^n}$. This function is well known to admit a randomized lower bound of $\Omega(2^n)$ and a quantum lower bound of $\Omega(2^{n/2})$, even in the absence of uncertainty. Given a (randomized/quantum) algorithm $\mathcal{A}$ solving $\uf$, we construct a (randomized/quantum) algorithm $\mathcal{A}'$ with query complexity at most that of $\mathcal{A}$, and that solves $\OR_{2^n}$. Let the input to $\OR_{2^n}$ be denoted by $x$.

    We define $\A'$ to be the algorithm that simulates a run of $\mathcal{A}$ in the following way:
    \begin{enumerate}
        \setlength\itemsep{-.3em}
        \item If $\A$ queries the $j$'th bit of its input where $j \in [n]$, return $\un$.
        \item If $\A$ queries the $j$'th bit of its input, where $j = n + k$ for some $k > 0$, then query $x_k$ and return that value.
    \end{enumerate}
    If $\A$ outputs $0$, $\A'$ outputs $0$. Else $\A'$ outputs $1$. This can be simulated in the randomized and quantum models.
    Note that the query complexity of $\A'$ is at most that of $\A$. We now argue correctness.
    
    \begin{itemize}
        \item Let $x = 0^{2^n}$. The input $z \in \zuo^{n + 2^n}$ to $\uf$ with all addressing bits set to $\un$, and the data bits all equal to $0$ is consistent with the run of $\A$ above. Note that for all $y \in \res(z)$ we have $f(y) = 0$ since the relevant data bit is always 0 regardless of the choice of selector bits. 
        Since (with high probability) $\A$ must be correct on $z$, this means that the output of $\A$, and hence $\A'$, is 0 (with high probability) and correct in this case.
        \item Let $x \neq 0^{2^n}$. The input $z \in \zuo^{n + 2^n}$ to $\uf$ with all addressing bits set to $\un$, and the data bits equal to the string $x$ is consistent with the run of $\A$ above. Now there exists a $y \in \res(z)$ that sets all selector bits to point to a non-0 bit of the data bits (since there exists such a data bit). By the correctness of $\A$ (with high probability), this means the output of $\A$ must either be $\un$ or 1. Thus $\A'$ outputs 1 in this case (with high probability), which is the correct answer.
    \end{itemize}
\end{proof}

%For the randomized and quantum bounds in Theorem~\ref{thm:muxdepth}, see Appendix~\ref{app:muxevasive}, Lemma~\ref{lemm:muxRandQuant}.

%\section{Relationships between Deterministic, Randomized and Quantum Models}
\subsection{Relating Query Complexity with and without Uncertainty for Monotone Functions}

In this section, we prove that for monotone functions, the presence of uncertainty does not make computation significantly harder. Similar relationships are known for monotone combinational circuits w.r.t.~containing metastability \cite{ikenmeyer2019complexity}.
\begin{proof}[Proof of Lemma~\ref{lem:monsame}]
    Any query algorithm for $\uf$ also computes $f$ with at most the same cost, and hence $\depth(f) \le \depth_\un(f)$. Similarly we have $\rand(f) \leq \rand_\un(f)$ and $\qua(f) \leq \qua_\un(f)$.

    We start with a best (deterministic/randomized/quantum) query algorithm $A$ for $f$, and an oracle holding an input $x \in \zuon$ for $\uf$ .
    Now, for $b \in \zo$, we define $A_b$ to be the same algorithm as $A$, but whenever $A$ queries the $j^\text{th}$ bit of its input, it performs the following operation instead:
    
    \begin{enumerate}
        \item Query the $j$'th bit of $x$, denote the outcome by $x_j$.
        \item If $x_j \in \{0,1\}$, return $x_j$.
        \item If $x_j = \un$, return $b$.
    \end{enumerate}

    In the case of quantum query complexity, note that this operation can indeed be implemented quantumly making $2$ queries to $O_x$. The initial query performs the instructions described above, and the second query uncomputes the values from the tuple we don't need for the remaining part of the computation. Note here that $O_x^3 = I$ by definition, and thus $O_x^2 = O_x^{-1}$, which is what we need to implement for the uncomputation operations.
    
    Let $S_{\un} = \{i\mid x_i=\un \}$ be the positions that have $\un$ in $x$. Recall that $y^0 := x|_{S_u\leftarrow \vec{0}}$ (and $y^1 := x|_{S_u\leftarrow \vec{1}}$) is the input that has all the $\un$ in $x$ replaced by $0$s (by $1$s respectively). Run $A_0$ and $A_1$ (possibly repeated constantly many times each) to determine the values of $f(y^0)$ and $f(y^1)$ with high probability. If $f(y^0)=1$, then we output $1$. Else if $f(y^1)=0$, we output $0$. Else we have $f(y^0)=0$ and $f(y^1)=1$, and we output $\un$.
    
    \textbf{Correctness:} First observe that for $b \in \zo$, all answers to queries in the algorithm $A_b$ are consistent with the input $y^b$. Next observe that in the poset (equivalently `subcube') formed by the resolutions of $x$, the inputs $y^0$ and $y^1$ form the bottom and top elements respectively. Since $f$ is monotone, if $f(y^0)=1$, we can conclude that $f$ is $1$ on all resolutions of $x$. Similarly when $f(y^1)=0$, it must be the case that $f$ is $0$ on all inputs in the poset. The remaining case is when $f(y^0)=0$ and $f(y^1)=1$. In this case, the inputs $y^0$ and $y^1$ are themselves resolutions of $x$ with different evaluations of $f$, and hence the algorithm correctly outputs $\un$. 

    By standard boosting of success probability using a Chernoff bound by repeating $A_0$ ($A_1$) constantly many times and taking a majority vote of the answers, we can ensure that the correctness probability of $A_0$ ($A_1$) is large enough, say at least 0.9. Thus, the algorithm described above has correctness probability at least $0.9^2 = 0.81$, and its cost is at most a constant times the cost of $A$.
\end{proof}

\subsection{A Non-Monotone Easy Function}\label{app:nonmoneasy}

In the introduction, we exhibited a monotone function on $n + \binom{n}{n/2}$ bits, with a query complexity of $n+1$. We now show that this is not a consquence of monotonicity by describing a non-monotone, non-degenerate $n$-variate function such that its hazard-free extension has $O(\log n)$ query complexity.
% NEED TO ADD SOME TEXT

\begin{theorem}
    For any $n \geq 1$, there is a non-monotone, non-degenerate Boolean function on $n + 2^{n+1} - 1$ inputs such that the query complexity of its hazard-free extension is $2n+1$.
\end{theorem}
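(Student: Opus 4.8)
The plan is to construct an explicit function by combining two ingredients: the monotone-indexing idea (which shows indexing-type functions have small $\un$-query complexity because, once the selector bits are read, only a few data bits matter), and a gadget that introduces non-monotonicity without destroying this property. A natural candidate is a variant of $\fmux_n$ where the data bits are replaced by a small non-monotone function on a fresh variable, or more simply, where we take $\fmux_n$ but allow the ``pointed-to'' bit to be read either positively or negatively. Concretely, I would consider a function on $n$ selector bits and $2^{n+1}-1$ further bits: think of the data bits as indexed by strings in $\zo^n$, and for each such index, provide the literal and optionally its negation; the count $2^{n+1}-1$ suggests a full binary tree of depth $n$ with $2^{n+1}-1$ nodes, so the $n$ selectors trace a root-to-leaf path of length $n$, reading one node at each level. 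Non-monotonicity would come from some level where the output depends negatively on a node value; non-degeneracy requires every variable to be relevant, which the tree structure handles since every node lies on some path.

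First I would fix the precise definition: on input $(s, t)$ with $s \in \zuo^n$ the selectors and $t$ the $2^{n+1}-1$ tree-node bits, the selectors $s_1, \dots, s_n$ determine (when stable) a unique root-to-leaf path, and the output is computed by XOR-ing or by a fixed non-monotone combination of the $n$ node values along that path. Then I would verify: (1) non-degeneracy — each selector bit is clearly relevant, and each node bit is relevant because there is a selector setting that routes the path through it while fixing all other relevant bits; (2) non-monotonicity — exhibit a single flip of a node bit that decreases the output while all other bits are fixed, which follows from the XOR/non-monotone combination at some level.

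Next, the core claim: the $\un$-query complexity of the hazard-free extension is $2n+1$. For the upper bound, I would give an explicit decision tree: query the $n$ selector bits first. If any selector bit is $\un$, the path is not fully determined, but only $O(1)$ or a bounded number of nodes per level can be reached — actually with $k$ unstable selectors, up to $2^k$ leaves are reachable. To keep the bound at $2n+1$, I suspect the intended function forces the path to remain essentially unique even with unstable selectors, perhaps because the function value along ambiguous paths agrees, or the design is such that only $n$ further node-queries suffice; I would use the structure (e.g., each level's contribution being determined by reading just one more bit regardless of the ambiguity) to read $n$ more bits and then either output the stable value or $\un$. This gives $n + n + 1 = 2n+1$ in the worst case (the $+1$ accounting for a final determination step or an off-by-one in the node count). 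For the lower bound, I would use Theorem~\ref{thm:sbscc}: it suffices to exhibit a prime implicant or prime implicate of size $2n+1$, i.e., an input with $2n+1$ stable bits all of which are $\un$-sensitive. Taking all selectors stable (tracing one path, $n$ bits) plus the $n$ nodes on that path — flipping any selector reroutes to a different node, changing the output; flipping any path-node changes the XOR — and one extra sensitive bit from the $+1$ slack, yields $\sen_\un(f) \ge 2n+1$, hence $\depth_\un(f) \ge 2n+1$.

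The main obstacle I anticipate is pinning down the exact combinatorics so that the upper bound is exactly $2n+1$ and not, say, exponential in the number of unstable selectors: the naive XOR-indexing function would have $\un$-query complexity $\Theta(2^n)$ by the same reasoning as Theorem~\ref{thm:muxdepth}, so the real content is the specific tree-structured design that makes unstable selectors harmless. I would need to argue carefully that when a selector bit at level $j$ is $\un$, the two subtrees it points to are ``merged'' in a way that reading one designated bit at each subsequent level (rather than all $2^k$ reachable nodes) still determines the output or correctly certifies $\un$ — likely the function is defined so that the relevant data at each level is a single bit independent of earlier selector ambiguity, e.g., a ``linear'' rather than ``indexed'' dependence. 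Getting this definition exactly right, and matching the stated bit-count $n + 2^{n+1}-1$ and query bound $2n+1$, is the delicate part; everything else (monotonicity failure, non-degeneracy, the sensitivity lower bound via Theorem~\ref{thm:sbscc}) should be routine.
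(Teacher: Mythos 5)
You have correctly guessed the variable layout (a complete binary tree of $2^{n+1}-1$ node variables addressed by root-to-leaf paths determined by the $n$ selectors) and the right query budget (two queries per level of the tree), but the function itself --- which is the entire content of the theorem --- is missing from your proposal, and you say as much. The paper's construction is $f_n(\mathbf{s},\mathbf{y}) = \fmux_n\bigl(s_0,\dots,s_{n-1},\, g_{(0,\dots,0)}(\mathbf{y}),\dots,g_{(1,\dots,1)}(\mathbf{y})\bigr)$, where the data function for address $\mathbf{v}$ is the right-nested formula $g_{\mathbf{v}} = z_{0}\,\mathsf{op}_0\,(z_{1}\,\mathsf{op}_1\,(\cdots\,\mathsf{op}_{n-1}\,z_{n}))$ in which $z_j$ is the tree node $y_{(v_0,\dots,v_{j-1})}$ (so all addresses sharing a prefix share that node) and $\mathsf{op}_j$ is OR if $v_j=0$ and AND if $v_j=1$. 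The mechanism you were searching for --- why unstable selectors do not force reading exponentially many bits --- is precisely this coupling of the operation to the selector bit. After querying $s_0$ and the root node $y_{()}$, each of the nine outcomes either determines the answer or reduces to an instance of $f_{n-1}$: e.g., if $s_0=\un$ and $y_{()}=\un$, then every selected $g$ beginning with OR cannot evaluate to $0$ and every selected $g$ beginning with AND cannot evaluate to $1$, and both kinds are selected, so the answer is $\un$ after only two queries; if $s_0=0$ and $y_{()}=1$ the answer is $1$; and so on. Your XOR candidate fails for exactly the reason you give (at the input with all selectors $\un$ and all data $0$, every leaf is $\un$-sensitive, so $\sen_\un \geq 2^n$), and no symmetric or linear combination can work --- the construction exploits the asymmetry of AND and OR under a $\un$ argument, which your proposal never identifies.

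A secondary point: your lower-bound sketch ("$n$ selectors plus $n$ path nodes plus one extra sensitive bit give $\sen_\un(f)\ge 2n+1$") is asserted for a function you have not pinned down, so it cannot be verified as stated. Note also that the substantive half of the theorem is the upper bound (the point of the result is to exhibit an \emph{easy} non-monotone, non-degenerate function), which the paper proves by induction on $n$, with base case $f_1(s,x,y,z)=\fmux_1(s,\,x+y,\,xz)$ and an explicit depth-$3$ tree; your proposal does not reach a complete argument for either direction because the construction is absent.
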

\begin{proof}
    The function has $n$ input variables $\mathbf{s} = (s_0, \dotsc, s_{n-1})$ and $2^{n+1}-1$ input variables $y = (y_{(b_1, \dotsc, b_i)})$ where $i\in [0, n+1]$ and each $b_j\in\{0,1\}$ (The index set for $y_{*}$ variables is sequences of bits of length $0$ through $n+1$):
    \begin{equation*}
        f_n(\mathbf{s}, \mathbf{y}) = \fmux_n(s_0, \dotsc, s_{n-1}, g_{(0, \dotsc, 0)}(\mathbf{y}), \dotsc, g_{(1, \dotsc, 1)}(\mathbf{y}))
    \end{equation*}
    where we define:
    \begin{equation*}
        g_{(v_0, \dotsc, v_{n-1})} = z_{\mathbf{v}0}\ \mathsf{op}_{\mathbf{v}0}\  (z_{\mathbf{v}1}\ \mathsf{op}_{\mathbf{v}1}\ (z_{\mathbf{v}2} \dotsc \mathsf{op}_{{\mathbf{v},n-1}}\ z_{\mathbf{v}n})\dotsc)
    \end{equation*}
    where $\mathbf{v} = (v_0, \dotsc, v_{n-1})$, $\mathsf{op}_{\mathbf{v}j}$ is Boolean $+$ if $v_j = 0$ and Boolean $\cdot$ otherwise. Each $z_{\mathbf{v}j}$ is the variable $y_{(v_0,\dotsc,v_{j-1})}$. In particular, $z_{\mathbf{v}0}$ is the variable $y_{()}$. i.e., it is independent of $\mathbf{v}$.

    We prove the theorem by induction on $n$. For $n=1$, the function $f_1$ is (after rewriting variables for convenience): $f_1(s, x, y, z) = \fmux_1(s, x+y, xz)$. A hazard-free decision tree of depth $3$ for $f_1$ is as follows: First, we query $s$. If it is $0$ or $1$, then two more queries suffice. Suppose $s = \un$. Then, we query $x$. If $x=0$, then $xz=0$ and the answer can be determined by only querying $y$. If $x=1$, then $x+y=1$ and the answer can be determined by only querying $z$. If $x=\un$, then we can immediately answer $\un$ since $x+y$ can only be $\un$ or $1$ and $xz$ can only be $\un$ or $0$.

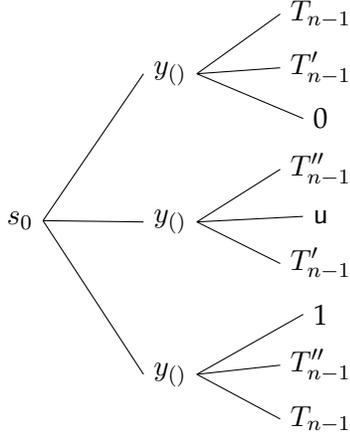
\begin{figure}
\caption{Hazard-free decision tree for $f_n$\label{fig:smalldepth}}
\begin{tikzpicture}[grow=right]
    \tikzset{level distance=2cm}
    \Tree [.{$s_0$}
      [.{$y_{()}$} [.{$T_{n-1}$} ] [.{$T''_{n-1}$} ] [.1 ] ]
      [.{$y_{()}$} [.{$T'_{n-1}$} ] [.{$\un$} ] [.{$T''_{n-1}$} ]]
      [.{$y_{()}$} [.0 ] [.{$T'_{n-1}$} ] [.{$T_{n-1}$} ] ]
    ]
\end{tikzpicture}
\end{figure}

    The proof for the inductive case is similar to the base case. For $f_n$, in two queries, we either determine the answer or reduce the remaining decision to the $n-1$ case. The decision tree is given in Figure~\ref{fig:smalldepth}. We split the proof of correctness of our construction into nine different cases corresponding to the leaves of this tree.
    
    ($s_0=0$ and $y_{()}=0$) In this case, we know that the selector bits can only select functions $g_{\mathbf{v}}$ such that $\mathsf{op}_{\mathbf{v}0} = +$. Since, $y_{()}=0$, the remaining problem is isomorphic to $f_{n-1}$ on the selector bits $(s_1, \dotsc, s_{n-1})$ and the functions $g_{\mathbf{v}}$ where $v_0=0$ and the variable $y_{()}$ has been substituted with $0$.

    ($s_0=0$ and $y_{()}=\un$) We know that the selector bits can only select functions $g_{\mathbf{v}}$ such that $v_0=0$ which implies $\mathsf{op}_{\mathbf{v}0} = +$. Since $y_{()}=\un$, the function $f_n$ can only now evaluate to $\un$ or $1$. More precisely, it evalutes to $1$ if and only if all $g_{\mathbf{v}}$ that are selected evaluate to $1$ and $\un$ otherwise. Let $X_{n-1}$ be the hazard-free decision tree for the $(n-1)$ case where the selector bits are $(s_1,\dotsc,s_{n-1})$ and they select functions $g_{\mathbf{v}}$ where $v_0=0$ and $y_{()}$ and the first $+$ are removed from the functions $g_{\mathbf{v}}$. It is easy to see that this is isomorphic to $f_{n-1}$. The tree $T''_{n-1}$ is obtained by relabeling leaves labeled $0$ in $X_{n-1}$ with $\un$.

    ($s_0=0$ and $y_{()}=1$) The output of $f_n$ must be $1$. So we can output $1$ immediately.

    ($s_0=\un$ and $y_{()}=0$) We know that selector bits must select some $g_{\mathbf{v}}$ where $v_0=0$ and some $g_{\mathbf{w}}$ where $w_0=1$. All such $g_{\mathbf{w}}$ must evaluate to $0$ as $\mathsf{op}_{\mathbf{w}0} = \cdot$. So the value of $f_n$ is $0$ or $\un$ and it is $0$ if and only if all such $g_{\mathbf{v}}$ evaluate to $0$. We can determine this using a decision tree that is isomorphic to that for $f_{n-1}$ where all leaves labeled $1$ are relabeled with $\un$.

    ($s_0 = y_{()} = \un$) In this case, we select both $g_{\mathbf{v}}$ where $\mathsf{op}_{\mathbf{w}0} = +$ and $\mathsf{op}_{\mathbf{w}0} = \cdot$. Since $y_{()} = \un$, the output of $g_{\mathbf{v}}$ where $\mathsf{op}_{\mathbf{w}0} = +$ cannot be $0$ and the output of $g_{\mathbf{v}}$ where $\mathsf{op}_{\mathbf{w}0} = \cdot$ cannot be $1$. Therefore, the output has to be $\un$.

    ($s_0 = \un$ and $y_{()} = 1$) We can deduce that $\mathsf{op}_{\mathbf{w}0}$ for selected $g$ could be $+$ or $\cdot$. Since $y_{()} = 1$. The functions $g$ where this operation is $+$ must evaluate to $1$. Therefore, the remaining decision tree only has to deal with the other half.

     ($s_0 = 1$ and $y_{()} = 0$) Similar to $s_0 = 0$ and $y_{()} = 1$.

     ($s_0 = 1$ and $y_{()} = \un$)  Similar to $s_0 = \un$ and $y_{()} = 0$.

     ($s_0 = 1$ and $y_{()} = 1$)  Similar to $s_0 = 0$ and $y_{()} = 0$.
\end{proof}

\section{Decision Tree Size}

In this section, we prove that despite requiring exponentially more depth in the presence of uncertainty, we can compute the function $\fmux_n$ using a decision tree that is only quadratically larger in size than the Boolean one. However, there are functions that require exponentially larger decision tree size in the presence of uncertainty such as the $\fand$ function. 
%We defer the proofs of these statements to Appendix~\ref{app:size}. 

\subsection{Decision Tree Size for MUX and AND}\label{app:size}

The size lower bound technique used in this subsection for both the functions involves constructing a set of inputs that must all lead to different leaves and hence any decision tree that computes the hazard-free extension correctly requires size at least as large as the size of this input set. The upper bound is shown by an explicit construction.

\begin{proof}[Proof of Theorem~\ref{thm:muxsize}]

    (of lower bound) Consider any hazard-free decision tree $T$ computing $\fmux_n$. We only consider leaves in $T$ such that on the path from the root to the leaf, all selector bits are queried. We partition this set of leaves into $L_\alpha$ for $\alpha\in\zuon$ such that any leaf in $L_\alpha$ is only reached by inputs where the selector bits have value $\alpha$. Since we only consider leaves where all selector bits are queried, this is a partition of such leaves.

    \begin{lemma}\label{lem:lalpha}
        If $\alpha$ contains exactly $k$ unstable bits, then $|L_\alpha| \geq 2^{k+1}$.
    \end{lemma}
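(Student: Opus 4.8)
The plan is to fix $\alpha\in\zuon$ with exactly $k$ unstable coordinates and exhibit $2^{k+1}$ distinct leaves of $T$ that all lie in $L_\alpha$. Write $D:=\res(\alpha)\subseteq\zon$ for the index set of the ``relevant'' data bits, so $|D|=2^k$, and call the other data bits \emph{dead}; once the selector bits are known to equal $\alpha$, the value of $\widetilde{\fmux_n}$ depends only on the relevant data bits, equalling $0$ (resp.\ $1$) iff they are all $0$ (resp.\ all $1$) and $\un$ otherwise (in particular whenever one of them is $\un$). I would build the argument on two observations. The first is a \emph{membership} criterion: if $z\in\zuon$ has selector part $\alpha$ and changing any single one of its $n$ selector bits (a stable one to its negation, an unstable one to a suitable stable value) changes $\widetilde{\fmux_n}(z)$, then the root-to-leaf path taken by $z$ in $T$ must query every selector bit, since otherwise the flipped input follows the same path to the same leaf and forces $T$ to be incorrect; hence $z$ reaches a leaf of $L_\alpha$ labelled $\widetilde{\fmux_n}(z)$. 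The second is a \emph{distinctness} criterion: if $\ell\in L_\alpha$ is labelled $\un$ and $Q_\ell$ is the set of data bits queried on its path, then the bits of $Q_\ell$ lying in $D$ are not all assigned a single value in $\{0,1\}$ along the path --- they include a $\un$, or two different stable values --- because otherwise filling the remaining bits of $D$ with that common value produces an input reaching $\ell$ whose true value is stable, contradicting the label.

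For $k\ge 1$ I would use, for each $y\in D$, the input $z_y$ with selector $\alpha$, data bit $y$ set to $\un$ and all other data bits set to $0$, and the input $w_y$ with selector $\alpha$, all relevant bits set to $1$ except bit $y$ set to $0$, and all dead bits set to $1$. Each of these satisfies the membership criterion: flipping a stable selector bit redirects to the all-$0$ (resp.\ all-$1$) dead block, while flipping the unstable selector $s_{i_j}$ to $\neg y_{i_j}$ deletes $y$ from $D$ and leaves the remaining relevant bits all $0$ (resp.\ all $1$); so each $z_y$ and each $w_y$ reaches a leaf of $L_\alpha$, of label $\widetilde{\fmux_n}(z_y)=\widetilde{\fmux_n}(w_y)=\un$. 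Two distinct $z_y,z_{y'}$ agree on every data bit except $y$ and $y'$, so a common leaf's path would query neither $y$ nor $y'$, making the $D$-bits it queries all equal to $0$ --- contradicting the distinctness criterion; thus the $z_y$ reach $2^k$ distinct leaves, and symmetrically the $w_y$ reach $2^k$ distinct leaves. Finally no $z_y$ and $w_{y'}$ can share a leaf, since that would require agreement on both a $0$-valued and a $1$-valued coordinate of $D$, but $z_y$ has no relevant bit equal to $1$. Hence $|L_\alpha|\ge 2^k+2^k=2^{k+1}$. For $k=0$ there is a single relevant bit, and setting it to $\un$ (other bits $0$), resp.\ to $0$ (other bits $1$), gives two inputs satisfying the membership criterion with values $\un$, resp.\ $0$; they reach differently-labelled leaves of $L_\alpha$, so $|L_\alpha|\ge 2=2^{0+1}$.

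The step I expect to be the main obstacle is the distinctness argument. A single natural family of $2^{k+1}$ ``$\un$-producing'' inputs tends to collapse onto only about $2^k$ leaves, because a decision tree may legitimately answer $\un$ as soon as it has verified ``selector $=\alpha$'' together with one offending relevant bit, and such a leaf absorbs many of the candidate inputs at once. Recovering the missing factor of two is precisely why I split into two families whose relevant-bit patterns are mutually incompatible --- a lone $\un$ surrounded by $0$'s versus a lone $0$ surrounded by $1$'s --- so that no single $\un$-labelled leaf is reached by inputs from both families, while the ``not constant on $Q_\ell\cap D$'' criterion rules out coincidences within each family.
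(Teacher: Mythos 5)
Your proof is correct and takes essentially the same approach as the paper's: two families of $2^k$ inputs each (one with a single relevant data bit set to $0$ among $1$s, one with a single relevant data bit set to $\un$), each forced to query all selector bits, with the distinguished data bits separating the leaves within and across the families. The only cosmetic differences are that the paper's $\un$-family surrounds the $\un$ bit with $1$s rather than $0$s, and the paper argues distinctness by showing the distinguished data bit itself must be queried rather than via your ``not constant on the queried relevant bits'' criterion.
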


    We complete the proof assuming the above lemma. The calculation is shown below:
    \begin{align*}
        |L| &\geq \sum_{\alpha \in S}|L_\alpha| \\
        &\geq \sum_{k=0}^n 2^{k+1} |\{\alpha \mid \alpha \text{ has exactly $k$ unstable values.}\}|\ \\
        &= \sum_{k=0}^n2^{k+1}{n \choose k}2^{n-k}\\
        &= 2 \cdot 2^n \sum_{k=0}^n {n \choose k} = 2 \cdot 4^n
    \end{align*}

    \begin{proof}(of Lemma~\ref{lem:lalpha})
    Define $I_\alpha$ to be the set of inputs consistent with $\alpha$ on the selector bits such that exactly one data bit indexed by the resolutions of $\alpha$ is $0$ and all other data bits is $1$. These are all $\un$-inputs. Note that $|I_\alpha| = 2^k$. We claim that for any input $z\in I_\alpha$, the tree $T$ must query all selector bits and the data bit (say $x_{(b_0,\dotsc,b_{n-1})}$) with the value $0$ of $z$. Suppose a selector bit $s_j$ is not queried and $\alpha_j=\un$. Consider $z'$ that is the same as $z$ except $s_j=\bnot{b_j}$. The output on $z'$ for $T$ will also be $\un$ when it should be $1$. Suppose $\alpha_j$ is stable. Consider $z'$ such that it differs from $z$ only by $s_j=\bnot{b_j}$. The output on $z'$ is $1$ but $T$ reaches the same leaf for $z$ and $z'$. This establishes that all inputs in $I_\alpha$ are in $L_\alpha$. We also claimed that $x_{(b_0,\dotsc,b_{n-1})}$ is also queried. Indeed, if it is not, we can construct a $z'$ that is exactly the same as $z$ where that data bit is $1$ that will follow the same path. But that input is a $1$-input.

    Notice that two distinct inputs in $I_\alpha$ must reach different leaves in $L_\alpha$. This is because they have unique data bits that are $0$ and they are distinct for distinct inputs. The above claim shows that both these bits have to be queried for those inputs. Therefore, they cannot reach the same leaf. So $|L_\alpha| \geq 2^k$.

    If we consider input set $J_\alpha$ which is similar to $I_\alpha$ except that exactly one indexed (by $\alpha$) data bit is $\un$ and the rest are $1$, we can argue the same for $J_\alpha$ and conclude that there must be $2^k$ inputs that reach distinct leaves. Moreover, these must reach distinct leaves from $I_\alpha$ as no data bit in an input from $I_\alpha$ has value $\un$.\end{proof}

    We remark that the above argument also gives a lower bound on the size of $k$-bit hazard-free decision trees for $\fmux$. Note that Lemma~\ref{lem:lalpha} is already parameterized in terms of the number of unstable bits. By a similar argument, the following lower bound follows:
    any $k$-bit hazard-free decision tree for $\fmux$ must have at least $2^n \left(\sum_{i=0}^k{n \choose i}\right)$.
    
    (of upper bound)We show that $\fmux_n$ has a hazard-free decision tree with $4^{n+1}-3^n$ leaves. First, we query all the selector bits in order. This results in a tree with $3^n$ leaves where each label is naturally associated with $\alpha\in\zuon$ that correspond to the values of the selector bits. For each $\alpha$, we will now proceed to query all the data bits that are indexed by resolutions of $\alpha$. If $\alpha$ has exactly $k$ unstable values, then we will query at most $2^k$ data bits. Let us call this tree (a subtree of the main decision tree) $T_\alpha$. We will have three subtrees for $T_\alpha$, which we call $T_{\alpha,0}$, $T_{\alpha,\un}$, $T_{\alpha,1}$ corresponding to the values of the first data bit. Notice that $T_{\alpha,\un}$ is just a leaf labeled $\un$. From $T_{\alpha,1}$, we query the next data bit. If it is $0$ or $\un$, we immediately output $\un$. Otherwise, we build the tree $T_{\alpha,11}$. Notice that the leaves of $T_\alpha$ correspond exactly to strings:
    \begin{enumerate}
        \item $z\in\zuo^{\leq 2^k}$ such that $z=x\un$ where $x$ is either all $1$ or all $0$ (possibly empty). When $x$ is empty, these are the same string. So there are $2^{k+1}-1$ such strings.
        \item $z\in\zuo^{\leq 2^k}$ such that $z=x0$ where $x$ is non-empty and all ones or $z=x1$ and $x$ is non-empty and all zeroes. There are $2^{k+1}-2$ such strings.
        \item Two more strings $z=1^{2^k}$ or $z=0^{2^k}$.
    \end{enumerate}
    The total is $2^{k+2} - 1$. Summing over all $\alpha$ gives us the size upper bound of

    \begin{align*}
        \sum_{k = 0}^n \binom{n}{k}2^{n-k}(2^{k+2} - 1) = \sum_{k = 0}^n \binom{n}{k}2^{n+2} - \sum_{k = 0}^n \binom{n}{k}2^{n-k} = 4^{n+1} - 3^n.
    \end{align*}

    The first expression here is obtained by summing over $k$ being the number of unstable bits in $\alpha$: there are $\binom{n}{k}$ ways of choosing these bits, $2^{n-k}$ ways of setting the remaining selector bits to values in $\zo$, and the term $2^{k+2}-1$ comes from the argument above.
\end{proof}

\begin{proof}[Proof of Theorem~\ref{thm:andsize}]
    The upper bound is straightforward. To prove the lower bound, let $T$ be a hazard-free decision tree with optimal size computing $\fand_n$. Let $L$ be the set of leaves in $T$. For every leaf $\ell$ of $T$ we will associate a set $V_\ell$ which is the set of pairs $(v,\alpha)$ where $v$ is an input variable and $\alpha \in \zuo$ such that the computation path that leads to the leaf $\ell$ sets variable $v$ to $\alpha$ for all $(v,\alpha) \in V_\ell$. 
    
    Delete all edges labeled $0$ in $T$, remove all sub-trees disconnected from the root, and call the resulting tree $T_{1u}$. We prove that $\size(T_{1u}) \ge 2^n$. We know that there are at least $2^n$ leaves and $2^{n}-1$ internal nodes present in $T_{1u}$. Note that in $T$, every internal node will have a subtree (having at least 1 leaf) where the queried variable is set to 0. Hence there are $2^n -1$ more leaves in $T$ than in $T_{1u}$. Hence we have $\size(T) \ge 2^n + 2^n -1$ and we can use the trivial construction to get a hazard free decision tree of this size.
    
    Similar to the size lower bound proof for $\fmux$, we shall choose a set of inputs (of size $2^n$) which should all definitely reach different leaves. Let $L$ be the set of leaves in $T_{1u}$. For every leaf $\ell$ of $T_{1u}$, consider the set of inputs $I = \{x \mid x \in \{1, u\}^n \}$ which does not have $0$. We claim that if $x \in I$ and $T$ reaches the leaf $\ell$ on input $x$, then $V_{\ell}$ has all variables of $x$. Assume this claim, the argument is complete since we know that every $x \in I$ goes to a different leaf and $|L| \ge 2^n$ thereby proving the proposition.
    
    To prove the claim, suppose $x_i$ is not contained in $V_\ell$. We split the proof into two cases. If $T$ outputs $1$ or $u$, consider an input $x'$, such that $x'_i = 0$ and $x'_j = x_j \forall j \neq i$, note that this reaches $\ell$ as well and $\fand(x') = 0 \neq T_{1u}(x') = T(x')$, a contradiction. If $T$ outputs $0$, then observe that $\fand$ evaluates to $1$ on $x$, a contradiction.
\end{proof}

\subsection{Almost Optimal Decision Tree Size Lower Bound}\label{app:cm}

\begin{theorem}\label{thm:pisize}
    Let $f$ be any function. Then, $\size_u(f) \geq m+M$ where $m$ is the number of prime implicants of $f$ and $M$ is the number of prime implicates of $f$.
\end{theorem}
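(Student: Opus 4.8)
The plan is to fix an optimal-size hazard-free decision tree $T$ for $\uf$ and exhibit $m+M$ inputs in $\zuon$ that are forced to reach pairwise-distinct leaves of $T$. For each prime implicant $P$ of $f$, let $x_P\in\zuon$ be the string that is $1$ on the variables occurring positively in $P$, $0$ on the variables occurring negatively in $P$, and $\un$ everywhere else; symmetrically, for each prime implicate $Q$, let $x_Q\in\zuon$ be $0$ on the variables occurring positively in $Q$, $1$ on those occurring negatively, and $\un$ elsewhere. Since $\res(x_P)$ is exactly the set of assignments that satisfy all literals of $P$, we have $\uf(x_P)=1$; likewise $\uf(x_Q)=0$. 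In particular any $x_P$ and any $x_Q$ already reach different leaves, since leaves carry the output value. So it remains to separate the $x_P$'s from one another, and the $x_Q$'s from one another.

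The key lemma I would prove is: the root-to-leaf path taken by $x_P$ in $T$ queries every stable coordinate of $x_P$, i.e.\ every variable of $P$. Suppose not, say bit $i\in\mathrm{var}(P)$ is not queried on that path, and let $x'$ be obtained from $x_P$ by setting coordinate $i$ to $\un$. Then $x'$ induces the same answers on all queried variables as $x_P$, so $T(x')=T(x_P)=1$; but by primality of $P$, the partial assignment fixing $P\setminus\{\text{literal on }i\}$ has a resolution on which $f=1$ and (since $P\setminus\{\text{literal on }i\}$ is not an implicant) one on which $f=0$, whence $\uf(x')=\un$, a contradiction. (Equivalently, every stable bit of $x_P$ is $\un$-sensitive at $x_P$ and hence lies in every certificate for $\uf$ at $x_P$, which is already recorded in Theorem~\ref{thm:senstr} and Theorem~\ref{thm:ccstr}; the path of $x_P$, together with its answers, is such a certificate.) The analogous statement for $x_Q$ follows from primality of the implicate $Q$.

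Now suppose two distinct prime implicants $P,P'$ reach the same leaf $\ell$, whose root-to-leaf path queries a variable set $S$ with answer vector $a\colon S\to\zuo$. Both $x_P$ and $x_{P'}$ follow this path, so each agrees with $a$ on $S$, and by the key lemma $\mathrm{var}(P)\cup\mathrm{var}(P')\subseteq S$. If there were some $i\in\mathrm{var}(P)\setminus\mathrm{var}(P')$, then $a_i=(x_P)_i\in\zo$ while simultaneously $a_i=(x_{P'})_i=\un$, impossible; hence $\mathrm{var}(P)=\mathrm{var}(P')$, and on this common variable set $P$ and $P'$ prescribe literals according to the same vector $a$, so $P=P'$ — contradicting $P\ne P'$. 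The same argument separates the $x_Q$'s. Therefore the $m+M$ inputs $\{x_P\}\cup\{x_Q\}$ all reach distinct leaves, so $\size_\un(f)\ge m+M$.

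The only step with real content is the key lemma (that the path of $x_P$ must probe all of $\mathrm{var}(P)$), and even that is essentially a restatement of the $\un$-sensitivity/certificate facts already established earlier; everything else is bookkeeping about decision-tree paths. The one point to handle carefully is to work with $T$ in the standard form where no variable is queried twice along a path, so that ``the set of variables queried on the path to $\ell$'' and the comparison of $x_P$ with $x_{P'}$ along that path are well defined.
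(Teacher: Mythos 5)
Your proof is correct and follows essentially the same route as the paper's: associate to each prime implicant/implicate its canonical partially-$\un$ input, show the decision-tree path must query all stable coordinates (you do this by unstabilizing a single unqueried bit and invoking primality, where the paper unstabilizes all unqueried bits and derives a smaller implicant — the same idea), and conclude that distinct prime implicants/implicates reach distinct leaves. If anything, your write-up is more explicit than the paper's about why the $m+M$ inputs land on pairwise distinct leaves.
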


\begin{proof}
    We define an input $x_P$ that corresponds to a prime implicant as follows: All variables that occur positively in $P$ are set to 1, those that occur in a negated form in $P$ are set to 0, and the remaining are set to $\un$. Let $X_P$ be the set of all $x_P$ where $P$ is a prime implicant. Since no two of these inputs are the same we know that $|X_P| = m$.

    The theorem now follows from the proof of this claim.
    \begin{claim}
        Suppose $T$ is a decision tree computing $\uf$. Every $x_P \in \mathcal{X_P}$ reaches a leaf $L$ such that every index that has a value of $0$ or $1$ in $x_P$ is queried on the path from the root to $L$.
    \end{claim}

    \begin{proof}
    Suppose for the sake of contradiction, $x_P$ reach the leaf $L$, with $V(L)$ (the set of variables it has queried and the value that each has been set to) and a variable in $P$ is not queried. We know that the values of the indices in $V(L)$ have to match $x_P$ i.e ${x_Q}_i = b_i \forall (i, b_i) \in V(L)$. Consider the input $x_L$ such that ${x_L}_i = b_i \forall (i, b_i) \in V(L)$ and the remaining bits are set to $u$. Note that there exists an index $j$ where ${x_P}_j = b_j$ and ${x_L}_j = u$, and hence we know that on flipping certain $u$'s in $x_L$(at least) to Boolean bits we get $x_P$. Note that $x_L$ corresponds to an implicant $L'$ (exactly the boolean bits in $x_L$ are considered) which covers the implicant $P$ which is a contradiction as $P$ is a prime implicant.
    \end{proof}

    The theorem follows.
\end{proof}

\begin{definition}
    Consider the function $\mathsf{cm}_n(x_1, \dotsc, x_n)$ define as follows: The output is $1$ if and only if the number of $0$s and the number of $1$s are in the range $[n/3, 2n/3]$.
\end{definition}

It is easy to see that the function $\mathsf{cm}_n$ has $\binom{n}{n/3}\binom{2n/3}{n/3}$ prime implicants. Using Theorem~\ref{thm:pisize}, we obtain an almost optimal decision tree size lower bound for hazard-free extensions of Boolean functions.

\subsection{Constructing Decision Trees for Hazard-free Extensions}\label{app:dttoudt}

In this section, we present a general construction of decision trees of hazard-free extensions of functions from a decision tree of the function. The main idea is to construct the $\un$ subtree of the root node in the new decision tree from the $0$ and $1$ subtrees that have been constructed recursively. The $\un$ subtree construction can be viewed as a product construction where we replace each leaf in a copy of the $0$ subtree with a modified copy of the $1$ subtree. This product construction works because for every input that reaches the $\un$ subtree, the output value can determined by the output values when that bit is $0$ and when it is $1$, which is exactly what the $0$ and $1$ subtrees compute. 

\begin{proof}[Proof of Theorem~\ref{thm:dttoudt}]

    (first inequality) Given a decision tree $T'$ for $\uf$, consider $T$ obtained by removing all subtrees from $T'$ that can only be reached by following an edge labelled $\un$. That is, traverse over all root-to-leaf paths in $T'$, and truncate all subtrees that are encountered the moment an edge labelled $\un$ is read. It is easy to see that $T$ is a decision tree computing $f$, and hence $\size(T) \ge \size(f)$. Since $T'$ has at least one more leaf for each internal node of $T$ (corresponding the query response of $\un$ at that node) and since the number of internal nodes of a tree is one less than its size, we have that $\size(T') \ge \size(T) + \size(T) - 1 = 2\size(T)-1 \ge 2\size(f)-1$. This inequality is tight as witnessed by the parity function on $n$ inputs.
    
    (second inequality) We describe how to construct a decision tree $T'$ computing $\uf$ from a Boolean decision tree $T$ computing $f$.
    
    (Construction of $T'$) Let $v$ be a node querying a variable $x_v$ in the original tree $T$. Let $T_{v, 0}$ (resp. $T_{v, 1}$) denote subtrees of $T$ when $x_v = 0$ (resp. $x_v =1$). Let $v_0$ and $v_1$ be the children of node $v$ in the decision tree $T$ through a bottom-up inductive approach where we add $\un$ query subtrees to every node of $T$.
    
    Let us assume inductively that when a $\un$ query subtree is added to any node of $T$, all of the children of the node already have their $\un$ query subtrees attached. The base case would be at the leaves where you do not have to add anything since there is no variable to be queried.
    
    The inductive step at a node $v$ would be as follows: by induction we assume to have converted the decision tree $T_{v, b}$ rooted at $v_b$ to $T_{v, b}'$ for $b \in \{0, 1\}$. Let $\rho(v)$ be the set of tuples of variables and the values that they are set to in the path from the root to the node $v$. For $b \in \{0,1\}$, and any leaf $\ell \in T_{v, b}'$, let $\rho_{v,b}(\ell)$, the set of tuples of variables and the values that they are set to in the path from $v_b$ to the leaf $\ell$.
    
    We will now construct a subtree $T'_{v, \un}$ of $v$ for the case $x_v = \un$. The subtree $T'_{v, \un}$ is a copy of $T_{v, 0}'$ with the leaves labelled $0$ and $1$ of $T_{v, 0}'$ replaced with copies of $T_{v, 1}'$ appropriately modified as follows: For $b \in \{0,1\}$, let $\ell$ be the leaf in the new copy of $T_{v, 0}'$ labelled with $b$. Replace the leaf $\ell$ with a copy of $T'_{v, 1}$ with the following modifications:
    \begin{enumerate}
        \item Remove any branches that are inconsistent with assignments in $\rho_{v, 0}(\ell)$.
        \item Replace the $\overline{b}$-leaves with $\un$.
    \end{enumerate}
    
    (Correctness) We show that for any input $x \in \zuon$ that satisfies $\rho(v)$, the subtree constructed above ensures that the leaf that the subtree reaches has the value $f(x)$. The proof is by induction on the height of $v$.
    
    \begin{itemize}
        \item (Base case) The base case is when $v$ is a leaf. In this case, the construction is trivially correct.
    
        \item {(Inductive case)} Consider a non-leaf $v$, For a child $w$ of $v$, we can now assume that any input $x \in \zuon$ that satisfies $\rho(w)$ reaches the value $f(x)$ in the subtree that has been constructed at $w$ (including $T_{w, 0}', T_{w, 1}', T_{w, \un}'$). Any $x$ consistent with $\rho(v)$, and $x_v$ is $0$ or $1$, the tree is lead to the subtrees rooted at $v_0$ or $v_1$, which gives the correct value by induction hypothesis.
    
        Now we consider the case of $x$ consistent with $\rho(v)$ and $x_v = u$. Let $x_0$, and $x_1$ be the inputs obtained by replacing the $x_v$ with $0$ and $1$ respectively. We know that:
        \[
            f(x) = \begin{cases}
                b & \textrm{ if $f(x_0) = f(x_1) = b$} \\
                u & \textrm{ otherwise} 
            \end{cases}
        \]
        
        We now have two cases based on value of $f(x_0)$.
        \begin{itemize}
            \item $f(x_0) = b$ where $b \in \{0,1\}$: We know that $x_0$ leads the tree to a leaf $\ell_0$, labelled $b$, via the assignment $\rho(v) \cup \rho_0(\ell) \cup \{(x_v,0)\}$ in the tree $T'_{v, 0}$. The input $x_1$ leads the tree to a leaf $\ell_1$ in $T_{v, 1}'$ and the corresponding path is represented by the assignment $\rho(v) \cup \rho_1(\ell_1) \cup \{(x_v,1)\}$. 
            
            Consider $x$ in $T'$. It reaches the node $v$ in $T$ (since it is consistent with $\rho$) and it further reaches $T_{v, \un}'$ since $x_v=\un$. From there, the input $x$ will follow the path to the node corresponding to the assignment $\rho(v) \cup \{(x_v,u)\} \cup \rho_0(\ell_0)$ since $T_{v, \un}'$'s first part is same as $T_{v,0}'$ by the construction. Further, since the leaf ($\ell_0$) in the copy of $T_{v,0}'$ has been replaced with a modification of $T_{v,1}'$ restricted to $\rho_0$, $x$ follows the path to the leaf indexed by $\rho(v) \cup \{(x_v,u)\} \cup \rho_0(\ell_0) \cup \rho_1(\ell_1)$ and reaches the new leaf in the modified copy of $T_{v, 1}'$. In this case, the output would be 
            \[
            T'(x) = \begin{cases}
                b & \textrm{ if $f(x_1) = b$} \\
                u & \textrm{ otherwise} 
            \end{cases}
            \] as desired.
            \item $f(x_0) = u$: We know from induction hypothesis that $x_0$ follows the path to a leaf ($\ell$) labelled $u$ via the assignment $\rho(v) \cup \{(x_v,0)\} \cup \rho_0(\ell)$. Therefore, the input $x$ firstly reaches $T_{v,\un}'$ and since this subtree is a copy of $T_{v, 0}'$, the input $x$ will reach a leaf labelled $\un$ by following a similar path as $x_0$.
        \end{itemize}
        
    \end{itemize}

    (Size of $T'$) An easy upper bound on $\size(T'_{v, \un})$ is $\size(T'_{v, 0})\size(T'_{v, 1})$. we have:
    \begin{eqnarray*}
    \size(T_v') & \le & \size(T_{v, 0}')+\size(T_{v, 1}')+\size(T_{v, \un}') \\
    & \le & \size(T_{v, 0}')+\size(T_{v, 1}')+\size(T'_{v, 0})~\size(T'_{v, 1}) \\
    & \le & (\size(T'_{v, 0})+1)(\size(T'_{v, 1})+1)-1
    \end{eqnarray*}
    Hence for any node, we have the following recursive relation for the size of the tree rooted at that node. $\size(T')+1 \le (\size(T_{r, 0}')+1)(\size(T_{r, 1}')+1)$. Notice that the two terms on the right-hand side has the same form as the expression on the left-hand side. Since the construction is recursive, we can keep recursively expanding all the terms on the right-hand side until we reach the leaves. At any leaf, this expression is just $(1+1)$ and there will be $\size(T)$ many of them. So $\size(T') + 1 \leq 2^{\size(T)}$ as desired.
\end{proof}

\section{Constructing Decision Trees for Limited Uncertainty}\label{app:kbit}

We also prove that the complexity increases only gradually along with the increase in the amount of uncertainty in the inputs. More specifically, we prove that if the inputs are guaranteed to have at most $k$ bits with value $\un$, without any guarantee on where they occur, the exponential blow-up in query complexity is contained to the parameter $k$. %For a proof, see Appendix~\ref{app:kbit}. 
This proof also makes use of the product construction mentioned above.

\begin{proof}[Proof of Theorem~\ref{thm:kbitsize}]
    We construct a sequence of trees $T^0, T^1, T^2, ..., T^{k}$ such that $T^i$ is a that can correctly handle inputs with at most $i$ positions of value $\un$. We denote by $T^i_{v,b}$ for $b\in\zuo$ the tree obtained from the $b$-subtree of node $v$ in $T$ by iterating this process $i$ times. We define $T^0 := T$ for all trees $T$.

    {(Constructing $T^{k+1}$)} We construct a subtree $T^{k+1}_{v,\un}$ to link to the node $v$ when the query for variable $x_v$ returns $u$. We will build the subtree $T^{k+1}_{v, \un}$ from a copy of $T^k_{v, 0}$ by replacing the leaves labelled $0$ and $1$, with copies of $T^k_{v, 1}$ appropriately modified as follows: for $b \in \{0,1\}$, let $\ell$ be the leaf in the new copy of $T^k_{v, 0}$, labelled with $b$. Replace the leaf $\ell$ with a copy of $T^k_{v, 1}$ with the following modifications:
    \begin{enumerate}
        \item Remove any branches that are inconsistent with assignments in $\rho_0(\ell)$.
        \item Replace the $\overline{b}$-leaves with $u$.
    \end{enumerate}

    (Correctness) The tree $T^0$ can correctly handle inputs with no $\un$s. We now prove the inductive step is correct.
    
    Consider any input $x \in \zuon$ such that the number of $u's$ is $x$ is at most $k+1$. When the $T^{k+1}$ is instantiated with $x$ let $v$ be the first node that is set to $x_v = u$. If such a node is not encountered then the $x$ reaches a node of the original tree and it is easy to prove that all of it's resolutions reach that in $T$ and hence $T^{k+1}$ outputs $f(x)$. Otherwise let $x_0$ and $x_1$ be inputs obtained by replacing $x_v$ in $x$ with 0 and 1 respectively. We know that 
    \[
        f(x) = \begin{cases}
            b & \textrm{ if $f(x_0) = f(x_1) = b$} \\
            u & \textrm{ otherwise} 
        \end{cases}
    \]

    Assuming the correctness of $T^k$ and since the number of $u$ values in both $x_0$ and $x_1$ is at most $k$, the proof by cases in the correctness of the  decision tree construction of Theorem~\ref{thm:dttoudt} can be used by replacing $T'_{v, 0}$ and $T'_{v, 1}$ by $T^k_{v, 0}$ and $T^k_{v, 1}$ respectively.

    (Size) Note that we $L(T^{k+1}_{v, \un})$ can be upper bounded by $L(T^k_{v, 0})L(T^k_{v, 1})$ and we have 
    
    $$L(T^{k+1}) \le L(T) + \sum_{n \in \mathcal{I}(T)}L(T^{k+1}_{n, \un})$$
    where $\mathcal{I}(T)$ is the set of internal nodes of $T$ and $T^{k+1}_{n, \un}$ is the $x_N = \un$ subtree added to $T$ in order to construct $T^{k+1}$.

    \begin{eqnarray*}
    L(T^{k+1}) & \le & L(T) + \sum_{n \in \mathcal{I}(T)}L(T^{k}_{n, 0})L(T^{k}_{n, 1}) \\
    & \le & L(T) + \sum_{n \in \mathcal{I}(T)}L(T^{k}_{n, 0}) L(T^{k}_{n, 1}) \\
    & \le & L(T) + \sum_{n \in \mathcal{I}(T)}{(L(T^k) - 1)}^2 \\
    & \le & L(T) + \sum_{n \in \mathcal{I}(T)}({L(T^k)}^2 - 1) \\
    & \le & L(T){L(T^k)}^2
    \end{eqnarray*}

    On recursively expanding $L(T^k)^2$, we have $L(T^{k+1}) \le L(T)^{2^{k+2} - 1}$.

    (Depth) Let the leaf with the maximum depth in $T^{k+1}$ be $\ell$ with $\rho(\ell)$ its set of tuples of variables and values that they are set to in the path from the root to the leaf. Suppose $\rho(\ell)$ does not have any $\un$ that its variables are set to we are done, since then $d(T^{k+1}) = d(T)$. Else suppose $v$ is the first node whose variable is set to $\un$ in the path from root to $\ell$. We know that 
    \begin{eqnarray*}
    d(T^{k+1}) & \le & d(v) + d(T^{k+1}_{v, \un}) \\ & \le & d(v) + d(T^{k}_{v, 0}) + d(T^{k}_{v, 1}) \\
    & \le & 2 d(T^{k})
    \end{eqnarray*}

    Recursively expanding gives $d(T^{k+1}) \le 2^{k+1}d(T^0)$. This completes the proof.
\end{proof}

\section{Learning Hazard-free Extensions with Low Sensitivity}\label{app:learning-sensitivity}

The Hamming distance between two inputs $x, y$ in the 3 valued logic is defined as: $D(x, y) = |\{i \mid x_i \ne y_i \}|$. A Hamming sphere of radius $r$ centered at $x$ is defined as $S(x, r) = \{y \mid D(x, y) = r\}$. A Hamming ball of radius $r$ centered at $x$ is defined as $B(x, r) = \{y \mid D(x, y) \leq r\}$.

\begin{proof}
    For $f : \zuon \rightarrow \zuo$ which is a hazard free extension of a Boolean function we define {\em neighborhood} of $x$ as: $N(x) = \{y \mid \exists i~ x_{i} \neq y_{i}, \textrm{ and } \forall j \in [n] \setminus \{i\}, x_{j} = y_{j} \}$. Similarly we define {\em i-neighborhood} $N_{i}(x) = \{y \mid  x_{i} \neq y_{i}, x_{j} = y_{j} \forall j \in \{1, 2, ..., n\} \setminus \{i\} \}$. Note that for any $x \in \zuon$ and $i \in [n]$, we have that $|N(x)| = 2n$ and $|N_{i}(x)| = 2$.

    We first prove the following Lemma.
    \begin{lemma}\label{lem:neighbors}
    For any function $f$ such that $\sen_u(f) = s$, if $S \subseteq N(x)$ where $|S| \ge 4s+1$ then $f(x) =\fplu_{y \in S}(f(y))$ where $\fplu$ outputs the most frequently occurring element in $\zuo$ in the input to it.
    \end{lemma}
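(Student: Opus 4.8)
The plan is a short counting argument over the neighbours of $x$, exactly parallel to the classical fact that a Boolean function of sensitivity $s$ agrees with the majority of its values on any $2s+1$ neighbours, but adjusted for the ternary alphabet. First I would classify each coordinate $i \in [n]$ according to whether it is $\un$-sensitive for $f$ at $x$, i.e.\ whether some $y \in N_i(x)$ has $f(y) \neq f(x)$. By the definition of $\un$-sensitivity and the hypothesis $\sen_\un(f) = s$, there are at most $s$ such coordinates. The key observation is that if $i$ is \emph{not} sensitive, then \emph{both} points of $N_i(x)$ carry the value $f(x)$, since $|N_i(x)| = 2$. Hence the ``bad'' set $\{\, y \in N(x) : f(y) \neq f(x) \,\}$ is contained in $\bigcup_{i \text{ sensitive}} N_i(x)$, and therefore has size at most $2s$.

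Granting this, the lemma follows from a plurality estimate. For any $S \subseteq N(x)$ with $|S| \geq 4s+1$, at most $2s$ elements of $S$ carry a value different from $f(x)$, so at least $|S| - 2s \geq 2s+1$ of them carry the value $f(x)$. Since the two symbols in $\zuo \setminus \{f(x)\}$ are together attained at most $2s$ times over $S$, neither of them is attained more than $2s < 2s+1$ times; in particular no tie can occur, and $f(x)$ is the unique most frequent value among $\{\, f(y) : y \in S \,\}$. That is exactly $\fplu_{y \in S}(f(y)) = f(x)$.

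There is no genuine obstacle here; the only point worth care is the factor of $2$ in the ``$2s$ bad neighbours'' bound. Because the alphabet $\zuo$ has three symbols, a single sensitive coordinate can contribute \emph{two} neighbours disagreeing with $f(x)$ — and this really does happen, e.g.\ at a $\un$-input with exactly one $\un$ in view of Theorem~\ref{thm:senstr}, where flipping the $\un$ to $0$ and to $1$ gives the two Boolean resolutions with opposite values. This is precisely why the threshold is $4s+1$ rather than the Boolean $2s+1$, and why a naive transfer of the Boolean argument would give the wrong constant.
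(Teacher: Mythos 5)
Your proof is correct and follows essentially the same approach as the paper's: bound the number of neighbours $y \in N(x)$ with $f(y) \neq f(x)$ by $2s$ (two per sensitive coordinate, at most $s$ sensitive coordinates), then observe that $|S| \ge 4s+1$ forces $f(x)$ to appear at least $2s+1$ times in $S$, strictly more than either other symbol. Your version is in fact a bit more streamlined, as it skips the paper's intermediate claim about subsets containing one neighbour from each of $2s+1$ distinct $N_i(x)$'s, which is not needed for the direct count.
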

    
    \begin{proof}
        We shall prove this lemma by showing that a set of neighbors with certain properties specifies $f$. Consider $S \subseteq [n]$ such that $|S| = k \ge 2s+1$. Define $N_{S}(x) = \{y_{i_1}, y_{i_2}, ..., y_{i_k}\}$ such that $\forall i_j \in S$, $y_{i_j} \in N_{i_j}(x)$, we claim that $f(x) = \fplu_{y \in N_{S}}(f(y))$.
    
        To see this claim, note that, in $N_{S}(x)$, $x$ has neighbors in at least $2s+1$ different indices. Let the frequency of occurrences of $0$, $1$ and $u$ over these neighbors be $f_{v_1}, f_{v_2}$, and $f_{v_3}$. Let $f(x) = v_{i}$, we know that it will have the remaining $f_{v_{j}}+f_{v_{k}}$ to be the number of sensitive indices, which can at most be $s$. 
        Since $x$ has at least $2s+1$ neighbors in $N_S(x)$, $v_{i}$ must occur at least $s+1$ times among the neighbors in $N_S(x)$. Hence the claim follows.

        We will now use the claim to prove Lemma~\ref{lem:neighbors}. Suppose we have $S \subseteq N(x)$ where $|S| \ge 4s+1$, then by the pigeonhole principle, $S$ contains elements from at least $2s+1$ different $N_{i}$'s. If not, as we know $|N_{i}(x)| = 2$, $N_{i}(x) \cap N_{j}(x) = \phi$ and $N(x) = N_{1}(x) \cup N_{2}(x) .... \cup N_{n}(x)$. If $S \subseteq N(x)$ and $S$ does contains elements from at most $2s$ different $N_{i}$'s then $|S| \le 4s$.

        Hence from the $2s+1$ different $N_{i}$'s we get the index set $S' \subseteq [n]$ and neighbors in each of these index sets $N_{S'}(x) = \{y_{i_1}, y_{i_2}, \ldots, y_{i_k}\}$ such that $|S'| \ge 2s+1$ and $y_{s_i} \in N_{s_{i}}(x) \forall s_{i} \in S'$. We know that $f(x) = \fplu_{y \in N_{S'}}(f(y))$. Suppose $f(x) = v_{i}$, we will now prove that $v_{i} = \fplu_{y \in S}(f(y))$. Let the other values be $v_{j}, v_{k}$. We claim that $f_{v_j} + f_{v_k} \leq 2s$. Suppose not. Then the number of indices that are sensitive are at least $s+1$ by pigeon hole principle, which is a contradiction. Hence $f_{v_j} + f_{v_k} \leq 2s$, $v_{i}$ occurs at least $2s+1$ times in $S$ and hence $v_{i} = \fplu_{y \in S}(f(y))$. This completes the proof of the lemma~\ref{lem:neighbors}.
    \end{proof}

    Now we complete the proof of the theorem. Suppose the values of a hazard-free extension $f$ with $\sen_u(f) \le s$ is known over a Hamming ball $B(x,4s)$ centered at input $x$ and of radius $4s$ in $\zuon$. We will prove through induction that if the values in any ball $B(x,r)$ where $r \ge 4s$ is known, the values at $S(x,r+1)$ is fixed.

    For the induction step, suppose $B(x, r)$ where $r \ge 4s$ is fixed and $y \in S(x, r+1)$. We claim that $N(y) \cap S(x, r) = r+1$. Let $\mathsf{Diff}(x, y) = \{i \mid x_i \neq y_i\}$. Note that for every $i \in \mathsf{Diff}(x, y)$ such that $x_{i} \neq y_{i}$, we have a corresponding $z^i$ such that $z^{i}_{j} = y_{j} \forall j \neq i$ and $z^{i}_{i} = x_{i}$. $z^i \in S(x, r)$ and $z^i \in N(y)$. It is easy to see that none of the other neighbors of $y$ lie in $S(x,r)$.

    Hence from Lemma~\ref{lem:neighbors} we can fix $f(y)$ since the value of $f$ at at least $4s+1$ neighbors are known. Similarly, we can fix all $f(y)$ for every $y \in S(x, r+1)$. This completes the proof.
\end{proof}

\section{Discussion and Open Problems}

In this paper we initiated a study of query complexity of Boolean functions in the presence of uncertainty, by considering a natural generalization of Kleene's strong logic of indeterminacy on three variables.

We showed that an analogue of the celebrated sensitivity theorem~\cite{Hua19} holds in the presence of uncertainty too. While Huang showed a fourth-power relationship between sensitivity and block sensitivity in the Boolean world, we were able to show these measures are linearly related in the presence of uncertainty. The proof of sensitivity theorem in our setting is considerably different and easier from the proof of sensitivity theorem in the Boolean world. We can parameterize $\un$-sensitivity and $\un$-query complexity by restricting our attention to inputs that have at most $k$ unstable bits. The setting $k = 0$ gives us the Boolean sensitivity theorem and the setting $k = n$, our sensitivity theorem. Can we unify these two proofs using this parameterization? That is, is there a single proof for the sensitivity theorem that works for all $k$?

We showed using $\un$-analogues of block sensitivity, sensitivity, and certificate complexity that for all Boolean functions $f$, its deterministic, randomized, and quantum $\un$-query complexities are polynomially related to each other. An interesting research direction would be to determine the tightest possible separations between all of these measures. It is interesting to note that our quadratic relationship between deterministic and randomized $\un$-query complexity improves upon the best-known cubic relationship in the usual query models. Moreover, our quartic deterministic-quantum relationship matches the best-known relationship in the Boolean world~\cite{ABKRT21}. More generally, it would be interesting to see best known relationships between combinatorial measures of Boolean functions in this model, and see how they compare to the usual query model (see, for instance,~\cite[Table~1]{ABKRT21}). A linear relationship between deterministic and randomized query complexities in the presence of uncertainty remains open, but a quadratic deterministic-quantum separation follows from Theorem~\ref{thm:muxdepth} or from the OR function (via Grover's search algorithm~\cite{Gro96}).

While we studied an important extension of Boolean functions to a specific three-valued logic that has been extensively studied in various contexts, an interesting future direction is to consider query complexities of Boolean functions on other interesting logics. Our definition of $\uf$ dictates that $\uf(x) = b \in \zo$ iff $f(y) = b$ for all $y \in \res(x)$, and $\uf(x) = \un$ otherwise. A natural variant is to define a 0-1 valued function that outputs $b \in \zo$ iff majority of $f(y)$ equals $b$ over all $y \in \res(x)$. It is not hard to show that the complexity of this variant of $\fmux_n$ is bounded from below by the usual query complexity of Majority on $2^n$ variables, which is $\Omega(2^n)$ in the deterministic, randomized, and quantum query models.

\bibliographystyle{alpha}
\bibliography{refs.bib}

\appendix

\section{Bounding CREW-PRAM Time in the Presence of Uncertainty}\label{app:cooknisanun}

The folklore "pointer-doubling" technique used by Nisan \cite{Nisan91} to prove that CREW-PRAM time complexity is at most $\log_2(\depth(f))$ can be adapted easily to show that CREW-PRAM time complexity of $\uf$ is at most $\log_2(\depth_\un(f))$.

For completeness, we show Cook, Dwork, and Reischuk's \cite{CDR86} lower bound on the time required by a parallel random access machines without concurrent writes in terms of sensitivity of the function can also be extended. They proved that any PRAM (as defined below) computing a function $f$ requires at least $\log_bs(f)$ steps where $b = \frac{1}{2}(5 + \sqrt{21})$. In this section we shall extend this result to say that any CREW-PRAM require at least $\log_b{\sen_\un(f)}$ time to compute $\uf$.

\begin{definition}[PRAM]
    A PRAM consists of a set $\Pi  = \{P(1), P(2), \cdots\}$ of processors, a set $\Gamma = \{M(1), M(2), \}$ of cells, an alphabet $\Sigma$, a number $n$ of inputs, and an execution time $T$. Each processor $P(i)$ consists of a state set $Q_i$ (any of the sets $\Pi, \Gamma ,\Sigma, Q_i$ may be infinite), and functions $\rho_i: Q_i\rightarrow \mathbb{N}^{+}, \tau_i: Q_i \rightarrow \mathbb{N}, \sigma_i: Q_i \rightarrow \Sigma $ and $\delta_i: Q_i \times \Sigma \rightarrow Q_i$. Here for any $q \in Q_i$, $\rho_i(q)$ decides the memory index to be read, $\delta_i$ represents the state transition function that changes the state of $P_i$ based on the previous state and the  alphabet read, $\tau_i(q)$ represents the memory index to be written into next and $\sigma_i(q)$ is the alphabet that will be written in that memory index next. ($\tau_i(q)$ indicates that no cell is written into)

    At each time $t= 0, 1,. ., T$ each processor $P(i)$ is in a state $q \in Q_i$ and each cell $M(i)$ contains a symbol $s_{i}^t \in \Sigma$. At time $t = 0$, cells $M(1),..., M(n)$ contain the inputs $X_1,\cdots,X_n$. That is, $s_{i}^t  = X_i \forall i \in [n]$, and $s_{i}^0 = b_0 \forall i>n$, where $b_0$ is some distinguished (blank) member of $\Sigma$. All processors are initially in the distinguished state $q_0$  i.e. $q_{i}^0 = q_0 \forall i$. In general,
    $$q_{i}^{t+1} = \delta_i(q_{i}^t, s_{j}^t)$$
    for all processors $P(I)$ where the index read is $j = \rho_i(q_i^t)$
    
    \[
        s_{k}^{t+1} = \begin{cases}
            \sigma_i(q_{i}^{t+1}) & \textrm{ if $k = \tau_i(q_{i}^{t+1})$} \\
            s_{k}^t & \textrm{ otherwise} 
        \end{cases}
    \]

    It is a condition of correctness of the PRAM (expressing the constraint of not allowing concurrent writes) that for $t = 0, 1,. , T- 1$, $\tau_{j}(q_{j}^{t+1})$ and $\tau_{k}(q_{k}^{t+1})$  are either both zero or distinct for all $j \neq k$. The value $f(X1,\cdots, Xn)$ of the function $f$ computed by the PRAM is the contents $s_{1}^T$ of cell $M(1)$ at time $T$.
\end{definition}

For an input string $I \in \zuon$, we denote by $I(i)$ that 2 strings which differs from I exactly at position $i$. We use $I(i, b_i)$ to denote the string that differs from $I$ exactly at position $i$ by the bit $b_i$.

\begin{definition}
    An index $i$ affects a processor $P$ (and respectively, a cell $M$) at time $t$ with $I$ iff the state of $P$ (and respectively the content of $M$) at input configuration $I$ differs from the state of $P$ (and respectively the content of $M$) for any of the inputs configurations in $I(i)$.

    Index-bit pair $(i, b_i)$ affects  a processor $P$ (and respectively, a cell $M$) at time $t$ with $I$ iff the state of $P$ (and respectively the content of $M$) at input configuration $I$ differs from the state of $P$ (and respectively the content of $M$) for $I(i, b_i)$.    
\end{definition}

\begin{theorem}
    If $f: \{0, 1, \un\}^n \rightarrow \{0, 1, \un\}$ has $\sen_\un(f)$ as its sensitivity then any PRAM computing it will require at least $\log_b \sen_un(f)$ time to compute it.
\end{theorem}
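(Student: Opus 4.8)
The plan is to adapt Cook, Dwork, and Reischuk's lower bound~\cite{CDR86} to the three-valued setting essentially line by line. Their argument is really about how fast the set of input indices that can \emph{affect} a processor's state or a cell's content can grow over time, and this growth is insensitive to whether inputs range over $\zo$ or over $\zuo$; the only genuinely new thing to check is that the bookkeeping survives $I(\ell)$ having two elements instead of one.

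First I would reduce the claim to a statement about affect sets at the final step. Fixing a correct (hence exclusive-write) PRAM with execution time $T$ computing $\uf$ and an input $I^*\in\zuon$ with $\sen_\un(f,I^*)=\sen_\un(f)$, each $\un$-sensitive index $i$ of $I^*$ comes with a value $b_i$ for which $\uf(I^*(i,b_i))\neq\uf(I^*)$; since the output is the content of cell $M(1)$ at time $T$, that content differs between $I^*$ and $I^*(i,b_i)$, so index $i$ affects $M(1)$ at time $T$ with $I^*$. Hence at least $\sen_\un(f)$ indices affect $M(1)$ at time $T$, and it suffices to show that no cell can be affected by more than $b^T$ indices.

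Next I would set up the growth recurrences. Let $R(t)$ and $W(t)$ be the maxima, over all processors / all cells, all inputs in $\zuon$, and all times $\le t$, of the number of indices affecting that object. The base case is $R(0)=0$ (every processor starts in $q_0$) and $W(0)=1$ (cell $M(j)$ starts holding $X_j$ for $j\le n$, and a blank otherwise). I would then reprove CDR's two recurrences $R(t+1)\le R(t)+W(t)$ and $W(t+1)\le W(t)+3R(t+1)$. The first is easy: a processor's state at time $t+1$ is a function of its old state and of the single cell it reads, and which cell it reads is fixed by the old state, so an index not affecting the old state can only enter via that cell's content at time $t$. The second is the heart of the matter: the content of a cell $M$ at time $t+1$ is either its old content or the symbol written by the \emph{unique} processor that writes to $M$ then (exclusive write), and one analyses, for a given affecting index $\ell$, a single neighbour $I'\in I(\ell)$ witnessing the difference and shows that $\ell$ either affects the old content or affects the time-$(t+1)$ state of one of the at most two processors implicated in writing to $M$ on $I$ and on $I'$. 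The point for our setting is that ``$I(\ell)$ has two elements'' is harmless, since ``affects'' only demands \emph{some} witnessing neighbour; one fixes a witness per index and the rest of CDR's combinatorics is untouched. Finally, with $b=\tfrac12(5+\sqrt{21})$ and $b'=\tfrac12(5-\sqrt{21})$ the roots of $x^2-5x+1=0$, so $0<b'<1<b$, the worst-case growth (all inequalities tight) gives $W(t+1)=5W(t)-W(t-1)$ with $W(0)=1$, $W(1)=4$, whence $W(t)=Ab^t+B(b')^t$ with $A=\tfrac{3+\sqrt{21}}{2\sqrt{21}}>0$ and $B=1-A>0$; since $(b')^t\le b^t$ for $t\ge0$ this yields $W(t)\le b^t$. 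Combined with the reduction, $\sen_\un(f)\le W(T)\le b^T$, i.e.\ $T\ge\log_b\sen_\un(f)$.

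The step I expect to be the main obstacle is the cell recurrence $W(t+1)\le W(t)+3R(t+1)$, exactly as in CDR. One must argue carefully that exclusive write really does bound the number of indices that can control ``which processor writes to $M$ at time $t+1$'' over the neighbourhood of the current input, and most of the writing effort goes into reproducing that argument and verifying, as sketched above, that replacing $\zo$ by $\zuo$ does not break it — ``affects'' is defined through the existence of a differing neighbour, and exclusive write is a property of the machine, not of the inputs.
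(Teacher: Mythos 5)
Your proposal is correct and follows essentially the same route as the paper's proof: the same reduction from $\un$-sensitivity at a worst-case input to the affect set of $M(1)$ at time $T$ (choosing one witnessing value $b_i$ per sensitive index), the same recurrences $K_{t+1}=K_t+L_t$ and $L_{t+1}=L_t+3K_{t+1}$ with the same exclusive-write pair-counting argument behind the factor $3$, and the same closed-form bound $L_t\le b^t$ with $b=\tfrac12(5+\sqrt{21})$.
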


\begin{proof}
    Let $K(P, t, i)$ (respectively, $L(M, t, i)$)  be the set of input indices which affect processor $P$ (respectively,  memory cell $M$) at with $I$. Let $K_t, L_t$, satisfy the recurrence equations:
    \begin{enumerate}
        \item $K_0 = 0$
        \item $L_0 = 1$
        \item $K_{t+1} = K_{t} + L_{t}$
        \item $L_{t+1} = L_{t} + 3K_{t+1}$
    \end{enumerate}

    As proven in \cite{CDR86}, the solution to these equations is $$K_{t} = \frac{b^t}{\sqrt{21}} - \frac{\Bar{b}^t}{\sqrt{21}}, L_{t} = \frac{3 + \sqrt{21}}{2\sqrt{21}}b^t + \frac{-3 + \sqrt{21}}{2\sqrt{21}}\Bar{b}^t$$
    where $b = \frac{1}{2}(5+\sqrt{21}), \Bar{b} = \frac{1}{2}(5 - \sqrt{21})$. Consider the input $I$ with the maximum sensitivity, we know that at $T$ all the $\sen_\un(f)$ bits affect the content of $M(1)$ and hence $|L(M(1), I, T)| = \sen_\un(f)$. By the below lemma, we can say that $$\sen_\un(f) \le L_{T} \le b^T \implies T \ge \log_b \sen_un(f)$$
    
    \begin{lemma}
        $|K(P, I, t)| \le K_{t}$ and $|L(M, I , t)| \le L_{t}$ for all $P, M, t$ and $I$
    \end{lemma}
    We now prove the above lemma by induction as follows:
    \begin{description}
        \item[Hypothesis:] $|K(P, I, t)| \le K_{t}$ and $|L(M, I , t)| \le L_{t}$ for all $P, M, I$ at time $t$
        \item [Base case:] At $t = 0$, we know that $K(P, I, 0) = 0$ and $L(M, I, 0) = 1$

        \item [Induction step:]
        \begin{enumerate}
            \item Consider $K(P, I, t+1)$. Any index $i$ that affects $P$ at time $t+1$ would have either affected $P$ at time $t$ or it would change the cell $M(\rho_i(q_{i}^t))$ that it would've read from (it is the same cell $P$ would've read from for input configuration $I$ at time $t$). Hence we know that $K(P, I, t+1) \subseteq K(P, I, t) \cup L(M, I, t)$. This satisfies $|K(P, I, t+1)| \le |K(P, I, t)|+ |L(M, I, t)| \le K_t + L_t = K_{t+1}$

            \item Now let us consider $L(M, I, t+1)$. We have the following 2 cases:

            \begin{description}
                \item[Case 1:] Suppose that for the input configuration $I$, some processor $P$ writes into $M$. Any index $i$ that affects $M$ at $t+1$ would definitely affect $P$ at time $t+1$ (if not then $P$ would've written the same alphabet into $M$). Hence we have $L(M, I, t+1) \subseteq K(P, I, t+1)$ and hence $|L(M, I, t+1)| \le K_{t+1} \le L_{t+1}$ since $K_t, L_t$ are non negative for all $t$.
                \item[Case 2:] If no processor had written into $M$, then index $i$ (say by the change to a bit $b_i$) would affect $M$ at $t+1$ it it had either affected it at $t$ or if it causes some processor to write into $M$ at time $t$. Say $Y(M, I, t+1)$ denotes the number of indices that when changed to any of the other bits makes some processor $Q$ write into $M$. Then we know that $L(M, I, t+1) \subseteq L(M, I, t) \cup Y(M, I, t+1)$. We shall upper bound $|Y|$ using the following lemma by $K_{t+1}$. Hence we have $|L(M, I, t+1)| \le |L(M, I, t)| + Y(M, I, t+1)| \le L_{t} + 3K_{t+1} = L_{t+1}$
            \end{description}
        \end{enumerate} 
    \end{description}

    \begin{lemma}
        Suppose $M$ is not written into by any processor at $t+1$ for the input configuration $I$ then $Y(M, I, t+1) \le 3K_{t+1}$
    \end{lemma}
    Consider $Y(M, I, t+1) = \{(u_1, b_1), ..., (u_r, b_r)\}$ where any $I(u_i, b_i)$ is the input where the index $u_i$ (all the indices are unique) has its bit replaced by $b_i$ and this makes $Q_i$ write into $M$. If there are multiple possible $b_i$'s for a $u_i$ we pick any one. 

    \begin{observation}
        Note that if $Q_i \neq Q_j$ then either $(u_i, b_i)$ affects $Q_j$ at $t+1$ for input $I(u_j, b_j)$ or $(u_j, b_j)$ affects $Q_i$ at $t+1$ for input $I(u_i, b_i)$
    \end{observation}
    This happens simply because if not, for the input $I(u_i, b_i)(u_j, b_j)$ both $Q_i$ and $Q_j$ are going to write to $M$.

    In order to bound $r$ we shall create the bipartite graph with vertices $(u_1, b_1), \cdots, (u_r, b_r)$ and $(v_1, b_1), \cdots, (v_r, b_r)$. There is an edge between $(u_i, b_i)$ and $(v_j, b_j)$ if and only if $(u_i, b_i)$ affects $Q_j$ for $I(v_j, b_j)$ at $t+1$. Consider the in-degree of $(v_j, b_j)$, note that it can only be at most $K(Q_j, I(v_j, b_j), t+1)$ since all the $(u_i, b_i)$ pairs affecting it have unique indices. Hence we have $$e \le r K(Q_j, I(v_j, b_j), t+1) \le rK_{t+1}$$
    We shall use the observation above to lower bound $e$ by the number of pairs $((u_i, b_i), (u_j, b_j))$ that lead to different processors writing to $M$. For a fixed $(u_i, b_i)$ there can be at most $K(Q_i, I, t+1)-1$ other pairs that affect $Q_{i}$. Hence there are at least $r-|K_(Q_i, I, t+1)|$ pairs that will cause some other processor to write into $M$ (note that $Q_i$ has to be affected in order to write into $M$). We have

    $$\frac{1}{2}r(r - K_{t+1}) \le \frac{1}{2}r(r - |K_(Q_i, I, t+1)|) \le e$$

    Hence we have $\frac{1}{2}r(r - K_{t+1}) \le rK_{t+1} \implies r \le 3k_{t+1}$
\end{proof}

\section{Alternative Definitions for Sensitivity of Hazard-free Extensions}\label{app:alternative-sen}

We consider two other natural definitions of sensitivity for hazard free extensions and prove that they are linearly equivalent to the definition of sensitivity for hazard-free extensions used in this paper.

\newcommand{\stabs}{\mathsf{stbs}}
In the following definition, we consider a bit to be sensitive if for some input, we can change the value of that bit from $0$ or $1$ to $\un$ and make the output of the function $\un$.
\begin{definition}[Stable Sensitivity]
Let $f$ be the hazard-free extension of some $n$-input Boolean function. For an $x\in \{0, \un, 1\}^n$ such that $f(x)$ is stable, we define the stable sensitivity $\stabs(f,x)$ of $f$ at $x$ as the number of bits that can be made unstable (individually) so that $f(x)$ is unstable. 
$$\stabs(f) = \max_{x : \textrm{$f(x)$ is stable}} \stabs(f,x)$$
\end{definition}

\newcommand{\slys}{\mathsf{stys}}
In the following definition, we consider a bit to be sensitive for some input if switching the value from a certain to uncertain value changes the output from certain to uncertain or vice-versa.
 For $B \subseteq [n], x \in \{0,1,\un\}^n$, we define $x \oplus B = \{ y \in \{0,1,\un\}^n \mid \forall i \in B, x_i = \un \iff y_i \neq \un$.
 %, x_i \neq \un \iff y_i = \un\}$.
\begin{definition}[Stability Sensitivity]
    Let $f$ be the hazard-free extension of some $n$-input Boolean function. For an $x\in \{0, \un, 1\}^n$, the Stability Sensitivity at the input $x$, denoted by $\slys(f,x)$, is defined as:
    $$\slys(f,x) = \left| \left\{ i : \begin{array}{l} \exists y \in x \oplus \{i\},     f(x) = \un \iff f(y) \neq \un
    \end{array}
    \right\} \right|$$
    The Stability Sensitivity of $f$, $\slys(f)$ is then defined as :
    $$\slys(f) = \max_{x \in \{0,1,\un\}^n} \slys(f,x)$$
\end{definition}

We shall prove that these measures of sensitivity $\stabs(f), \slys(f)$ are at polynomially related to $\sen(f)$ for any hazard free extension $f$. In fact, similar to the relation between $\sen_\un(f), \bsen_\un(f)$ and $\cc_\un(f)$ they are also away from $\sen(f)$ by a factor of at most 2. We use the notation $\stabs_\un^{(b)}, \slys_\un^{(b)}$ where $b\in \{0, 1\}$ to denote sensitivities for inputs that yield a $b$ output and $\slys_\un^{(u)}$ to denote Stability Sensitivity for inputs that yield a $\un$ output.

\begin{claim}
    Let $f$ be the hazard -free extension of some $n$-bit Boolean function. Then, $ \stabs^{(b)}(f) = \slys^{(b)}(f) = \sen_\un^{(b)}(f)$ for $b \in \{0, 1\}$ and $\slys^{(u)}(f) \le \sen_\un^{(u)}(f)$
\end{claim}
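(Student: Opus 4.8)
The plan is to reduce all three relations to a single set. For an input $x \in \zuon$ with $\uf(x) = b \in \zo$, let
$D_b(x) := \{i \in [n] : x_i \in \zo \text{ and } \uf(x|_{x_i \to \un}) = \un\}$
be the set of stable coordinates whose de-stabilization turns the output unstable. I will show $\stabs(f,x) = \slys(f,x) = |D_b(x)|$ for every such $x$, identify the maximum of $|D_b(x)|$ with $\sen_\un^{(b)}(f)$, and then handle the $\un$-output case separately.

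\emph{Step 1: $\stabs^{(b)}(f) = \slys^{(b)}(f)$ for $b \in \zo$.} For a $b$-input $x$, the definition of stable sensitivity gives $\stabs(f,x) = |D_b(x)|$ outright. For stability sensitivity, a coordinate $i$ is counted at $x$ precisely when some $y \in x \oplus \{i\}$ satisfies $\uf(x) = \un \iff \uf(y) \neq \un$; since $\uf(x) = b \neq \un$, this biconditional holds iff $\uf(y) = \un$. I split on $x_i$. If $x_i \in \zo$, the only $y \in x \oplus \{i\}$ is $x|_{x_i \to \un}$, so the condition is exactly $i \in D_b(x)$. If $x_i = \un$, then every $y \in x \oplus \{i\}$ satisfies $y > x$ in the instability order, so naturality of $\uf$ forces $\uf(y) \geq \uf(x) = b$, whence $\uf(y) = b \neq \un$; thus no unstable coordinate of $x$ is ever counted. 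Hence $\slys(f,x) = |D_b(x)| = \stabs(f,x)$, and maximizing over $b$-inputs gives $\slys^{(b)}(f) = \stabs^{(b)}(f)$.

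\emph{Step 2: these equal $\sen_\un^{(b)}(f)$.} The easy direction is pointwise: if $i \in D_b(x)$ then $y := x|_{x_i \to \un}$ differs from $x$ only in coordinate $i$ and has $\uf(y) = \un \neq b$, so $i$ is $\un$-sensitive at $x$; hence $\stabs(f,x) = |D_b(x)| \le \sen_\un(f,x)$ and $\stabs^{(b)}(f) \le \sen_\un^{(b)}(f)$. For the reverse I use Theorem~\ref{thm:senstr}: $\sen_\un^{(1)}(\uf,\cdot)$ attains its maximum at the canonical input $x_P$ of a largest prime implicant $P$ (and $\sen_\un^{(0)}$ at the canonical input of a largest prime implicate). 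At $x_P$ the unstable coordinates are not $\un$-sensitive (setting such a coordinate to a stable value only increases $x_P$, so the output stays $1$), so the $\un$-sensitive coordinates of $x_P$ are exactly its stable ones; moreover each such stable coordinate lies in $D_1(x_P)$: de-stabilizing the coordinate of a literal $\ell \in P$ keeps all $1$-resolutions of $x_P$ but, by primality of $P$, introduces a resolution on which $f = 0$, so the output becomes $\un$. Therefore $\sen_\un^{(1)}(f) = \sen_\un(f, x_P) = |D_1(x_P)| \le \stabs^{(1)}(f)$, and symmetrically for $b = 0$. Combining with Step 1 and the easy direction yields $\stabs^{(b)}(f) = \slys^{(b)}(f) = \sen_\un^{(b)}(f)$.

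\emph{Step 3: $\slys^{(u)}(f) \le \sen_\un^{(u)}(f)$.} Let $x$ be a $\un$-input and suppose coordinate $i$ is counted in $\slys(f,x)$. Then some $y \in x \oplus \{i\}$ — a string agreeing with $x$ off coordinate $i$ with the stability flipped at $i$ — satisfies $\uf(x) = \un \iff \uf(y) \neq \un$; since $\uf(x) = \un$, this forces $\uf(y) \in \zo$. But $y$ differs from $x$ only in coordinate $i$ and $\uf(y) \neq \uf(x)$, so $i$ is $\un$-sensitive at $x$. Hence $\slys(f,x) \le \sen_\un^{(u)}(f,x)$ for every $\un$-input $x$, and taking maxima gives $\slys^{(u)}(f) \le \sen_\un^{(u)}(f)$. (No reverse inequality is claimed, and indeed a $\un$-input may have $\un$-sensitive stable coordinates — ones where switching to the opposite stable value changes the output — which by naturality can never be stability-sensitive.) The only step needing care is the reverse inequality of Step 2: one must correctly extract from Theorem~\ref{thm:senstr} and the lower-bound half of the sensitivity theorem that $\sen_\un^{(1)}$ is realized at $x_P$, that its unstable coordinates contribute nothing, and that each of its stable coordinates lands in $D_1(x_P)$ by primality — and dually for prime implicates. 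It is also worth fixing the reading of $x \oplus B$ as ``flip stability exactly on $B$, leave all other coordinates unchanged'', which is what makes $\slys$ a sensitivity-type measure and is used implicitly throughout.
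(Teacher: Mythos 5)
Your proposal is correct and follows essentially the same route as the paper: establish the pointwise inequalities $\stabs(f,x)=\slys(f,x)\le \sen_\un(f,x)$ for stable-output inputs (and $\slys\le\sen_\un$ for $\un$-inputs), then invoke Theorem~\ref{thm:senstr} to locate the maximizer of $\sen_\un^{(b)}$ at the canonical input of a largest prime implicant/implicate, where all sensitive bits are stable and de-stabilizing any of them forces the output to $\un$ by primality. You fill in details the paper leaves as ``easy to see'' (notably the naturality argument ruling out unstable coordinates and the disambiguation of $x\oplus B$), but the structure of the argument is identical.
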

\begin{proof}
    It is easy to see that for any $x \in \zuon, b \in \{0, 1\}$ such that $f(x) = b, \stabs(f, x) = \slys(f, x) \le \sen_\un(f, x)$ and for any input $x$ such that $f(x)  = \un, \slys(f, x) \le \sen_\un(f, x)$.
    From ~\ref{thm:senstr} we know that $\sen_\un^{(1)}(f)$ occurs at the input $x$ that corresponds to the largest sized prime implicant (or implicate for $\sen_\un^{(0)}$) or the minimum sized maximal monochromatic subcube where all the stable bits are sensitive. Hence at this input $\stabs^{(1)}(f, x) = \sen_\un^{(1)}(f, x) = \sen_\un^{(1)}(f)$ and by a similar argument for $\stabs^{(0)}$ we have $ \stabs^{(b)}(f) = \slys^{(b)}(f) = \sen_\un^{(b)}(f)$ for $b \in \{0, 1\}$
\end{proof}
Since we already know that $\sen_\un^{(u)}(f) \le \sen_\un^{(0)}(f) + \sen_\un^{(1)}(f) - 1$, we know that $\frac{\sen_\un(f)}{2} \le \stabs(f) \le \slys(f) \le \sen_\un(f)$ thus proving that these notions of sensitivity are at most a factor of 2 away from our choice of sensitivity and thus they would all satisfy the Sensitivity Conjecture.

\end{document}